\documentclass[manuscript,nonacm]{acmart}

\usepackage{amsmath}
\usepackage{amstext}
\usepackage{fp-frame}
\usepackage[latin1]{inputenc}
\usepackage{mathpartir}
\usepackage{fancyvrb}
\usepackage{moreverb}
\usepackage{stmaryrd}
\usepackage{enumerate}
\usepackage{thmtools,thm-restate}
\usepackage{comment}
\usepackage{booktabs,array}
\usepackage{rotating}
\usepackage{xcolor}
\usepackage{wrapfig}
\usepackage{circuitikz}
\usetikzlibrary{positioning,fit,arrows.meta}

\newcommand{\vb}[1]{\verb+#1+}

\newcommand{\defeq}{\triangleq}

\def\squareforqed{\hbox{\rlap{$\sqcap$}$\sqcup$}}
\def\qed{\ifmmode\squareforqed\else{\unskip\nobreak\hfil
\penalty50\hskip1em\null\nobreak\hfil\squareforqed
\parfillskip=0pt\finalhyphendemerits=0\endgraf}\fi}

\newcommand{\ttt}[1]{\texttt{#1}}
\newcommand{\gdesc}[1]{\text{\textit{#1}}}
\newcommand{\true}{\mathrm{true}}

\newcommand{\cod}[1]{\llbracket #1 \rrbracket}
\newcommand{\lcod}[2]{\llbracket #1 \rrbracket_{#2}}

\newcommand{\store}{m}

\newcommand{\extend}[3]{#1\{#2 \mapsto #3\}}

\newcommand{\assign}[2]{#1 := #2}

\newcounter{topiccounter}
\newcommand{\redx}{\rightarrow}
\newcommand{\redxs}{\redx^*}

\newcommand{\elet}[3]{\mathrm{let}\ #1 = #2\ \mathrm{in}\ #3}

\newcommand{\dom}{\mathrm{dom}}
\newcommand{\rvs}{\mathrm{rvs}}

\newcommand{\latel}{\ell}
\newcommand{\hilab}{\mathrm{High}}
\newcommand{\lolab}{\mathrm{Low}}

\newcommand{\minifed}{\mathit{Overture}}
\newcommand{\minicat}{\minifed}
\newcommand{\fedprot}{\minifed}
\newcommand{\metaprot}{\mathit{Prelude}}

\newcommand{\prog}{\pi}
\newcommand{\main}{\mathit{main}}

\newcommand{\bop}{\ \mathit{binop}\ }

\newcommand{\mems}{\mathit{mems}}

\newcommand{\margd}[2]{{#1}_{#2}}
\newcommand{\condd}[3]{#1_{({#2}|{#3})}}
\newcommand{\progd}{\mathrm{BD}}
\newcommand{\progtt}{\mathrm{BD}}
\newcommand{\vars}{\mathit{vars}}
\newcommand{\iov}{\mathit{iovars}}
\newcommand{\flips}{\mathit{flips}}
\newcommand{\fedcat}{\minifed}

\newcommand{\idealf}{\mathcal{F}}

\def\TirName#1{\text{\sc #1}}

\newcommand{\cidty}{\ttt{cid}}
\newcommand{\stringty}{\ttt{string}}

\newcommand{\flip}[2]{\ttt{flip[}#1\ttt{,}#2\ttt{]}}
\newcommand{\secret}[2]{\ttt{s[}#1\ttt{,}#2\ttt{]}}

\renewcommand{\elet}[3]{\ttt{let}\ #1\ \ttt{=}\ #2\ \ttt{in}\ #3}

\renewcommand{\redx}{\xrightarrow{}}
\renewcommand{\redxs}{\xrightarrow{}^{*}}

\newcommand{\secrets}{\mathit{secrets}}

\newcommand{\views}{\mathit{views}}

\newcommand{\cid}{\iota}

\newcommand{\OT}[3]{\ttt{OT(} #1 \ttt{,}\ #2 \ttt{,}\ #3 \ttt{)}}

\newcommand{\mux}[3]{\ttt{mux(} #1 \ttt{,}\ #2 \ttt{,}\ #3 \ttt{)}}
\newcommand{\codebase}{\mathcal{C}}

\newcommand{\flab}{\ell}
\newcommand{\be}{\varepsilon}
\newcommand{\instr}{\mathbf{c}}

\newcommand{\runs}{\mathit{runs}}

\newcommand{\concat}{\ttt{++}}

\newcommand{\sep}[3]{#1 \vdash #2 * #3}

\newcommand{\condp}[3]{#1|#2 \vdash #3}

\newcommand{\condsep}[4]{\condp{#1}{#2}{#3 * #4}}

\newcommand{\pmf}{\mathcal{P}}

\renewcommand{\flip}[1]{\ttt{r[}#1\ttt{]}}
\newcommand{\locflip}{\ttt{r[}\mathtt{local}\ttt{]}}
\renewcommand{\secret}[1]{\ttt{s[}#1\ttt{]}}

\newcommand{\mesg}[1]{\ttt{m[}#1\ttt{]}}
\newcommand{\outkw}{\ttt{out}}
\newcommand{\out}[1]{\elab{\outkw}{#1}}
\newcommand{\rvl}[1]{\ttt{p[}#1\ttt{]}}

\newcommand{\elab}[2]{#1\ttt{@}#2}
\newcommand{\eassign}[4]{\elab{#1}{#2} := \elab{#3}{#4}}
\newcommand{\xassign}[3]{#1 := \elab{#2}{#3}}
\newcommand{\pubout}[3]{\out{#1} := \elab{#2}{#3}}
\newcommand{\reveal}[3]{\rvl{#1} := \elab{#2}{#3}}

\newcommand{\adversary}{\mathcal{A}}
\newcommand{\aredx}{\redx_{\adversary}}
\newcommand{\aredxs}{\redxs_{\adversary}}
\newcommand{\arewrite}{\mathit{rewrite}_{\adversary}}
\newcommand{\cinputs}{V_{C \rhd H}}
\newcommand{\houtputs}{V_{H \rhd C}}
\newcommand{\imesgs}[3]{#1_{#2 \rhd #3}}
\newcommand{\aruns}{\mathit{runs}_\adversary}
\newcommand{\botruns}{\mathit{runs}_{\adversary,\bot}}

\renewcommand{\store}{\sigma}

\renewcommand{\runs}{\mathit{runs}}

\newcommand{\fcod}[1]{\lcod{#1}{}}
\renewcommand{\flips}{\mathit{rands}}

\newcommand{\ftimes}{*}
\newcommand{\fplus}{+}
\newcommand{\fminus}{-}
\newcommand{\macgv}[1]{\langle #1 \rangle}

\newcommand{\preproc}{\mathit{preproc}}
\newcommand{\assert}[1]{\ttt{assert(}#1\ttt{)}}
\newcommand{\mv}{\nu}

\newcommand{\sx}[2]{\elab{\secret{#1}}{#2}}
\newcommand{\mx}[2]{\elab{\mesg{#1}}{#2}} 
\newcommand{\px}[1]{\rvl{#1}} 
\newcommand{\rx}[2]{\elab{\flip{#1}}{#2}}

\newcommand{\itj}[3]{\vdash_{#1} #2 : #3}
\newcommand{\ipj}[3]{#1 \vdash #2 : #3}

\newcommand{\setit}[1]{\{ #1 \}}

\newcommand{\ity}[2]{#1 \cdot #2}

\newcommand{\eqs}{\mathit{E}}
\newcommand{\toeq}[1]{\lfloor #1 \rfloor}
\newcommand{\eop}{\equiv}

\newcommand{\seclev}{\mathcal{L}}

\newcommand{\cheatj}[3]{#1 \underset{#2}{\leadsto} #3}

\renewcommand{\redx}{\Rightarrow}
\renewcommand{\redxs}{\redx}
\newcommand{\abort}{\bot}

\newcommand{\cmd}{\instr}
\renewcommand{\OT}[4]{\ttt{OT(}\elab{#1}{#2},#3,#4 \ttt{)}}
\newcommand{\eqspre}{\eqs_{\mathit{pre}}}
\newcommand{\progpre}{\prog_{\mathit{init}}}
\newcommand{\macbdoz}[1]{\psi_{\mathit{BDOZ}}(#1)}

\newcommand{\notg}[1]{\breve{#1}}
\newcommand{\fresh}{\mathit{fresh}}
\newcommand{\fieldp}[1]{\mathbb{F}_{#1}}
\newcommand{\precond}{\mathit{precond}}
\newcommand{\postcond}{\mathit{postcond}}

\newcommand{\htrip}[3]{\{ #1 \}\ #2\ \{ #3 \}}
\newcommand{\peq}{\notg{\eqs}}

\newcommand{\eqtrue}{\mathrm{true}}
\newcommand{\avars}{\mathit{avars}}
\newcommand{\impl}{\rightarrow}
\newcommand{\pskip}{\ttt{skip}}
\newcommand{\sharety}{\mathrm{shares}}
\newcommand{\pubty}{\mathrm{public}}
\newcommand{\outty}{\mathrm{output}}
\newcommand{\genshares}{\mathrm{genshares}}
\newcommand{\recon}{\mathrm{recon}}

\newcommand{\xdefs}{\mathcal{D}}
\newcommand{\kwas}{\ \ttt{as}\ }

\newcommand{\shtrip}[3]{#1 \Vdash #2 \rhd #3}

\newcommand{\realviews}[2]{V_{#1 \rhd #2}}

\newcommand{\oid}{\cid_{\mathit{oracle}}}
\newcommand{\ov}{\mathit{ovscrape}}

\newcommand{\pubx}[1]{\langle #1 \rangle}
\newcommand{\sta}{\mathbf{a}}
\newcommand{\stredx}{\leadsto}
\newcommand{\sty}{\varsigma}
\newcommand{\stt}{\theta}
\newcommand{\clean}{\mathit{clean}}
\newcommand{\bv}[1]{\ttt{|}#1\ttt{|}}

\newcommand{\compwrapfig}
{
  \begin{figure}[t]
    \begin{tabular}{lccccccc}
      & 
      \begin{sideways} probabilistic features \end{sideways} &
      \begin{sideways} probabilistic conditioning \end{sideways} & 
      \begin{sideways} low-level protocols \end{sideways} & 
      \begin{sideways} passive security \end{sideways} & 
      \begin{sideways} malicious security \end{sideways}& 
      \begin{sideways} hyperproperties \end{sideways}& 
      \begin{sideways} automation \end{sideways}\\\hline\hline
      Haskell EDSL \cite{6266151} & \checkmark &  & \checkmark  & \checkmark & & \checkmark & \checkmark \\\hline
      MPC in SecreC \cite{almeida2018enforcing} & \checkmark & \checkmark &   & \checkmark & & \checkmark & \checkmark \\\hline
      $\lambda_{\text{obliv}}$ \cite{darais2019language} & \checkmark & & \checkmark & & & \checkmark & \checkmark \\\hline
      PSL \cite{barthe2019probabilistic} & \checkmark & & \checkmark & & & & \\\hline
      Lilac \cite{li2023lilac} & \checkmark & \checkmark & & & & & \\\hline
      Wys$^*$ \cite{wysstar} & & & & \checkmark & & \checkmark & \checkmark\\\hline
      Viaduct \cite{10.1145/3453483.3454074,viaduct-UC} & & & & \checkmark & \checkmark & \checkmark & \checkmark\\\hline
      MPC in EasyCrypt \cite{8429300} &  \checkmark &  \checkmark &  \checkmark & \checkmark & \checkmark & \checkmark & \\\hline
      Swanky \cite{8429300} &  \checkmark & &  \checkmark & & & & \checkmark \\\hline
      $\metaprot$/$\minifed$ 1 \cite{skalka-near-ppdp24} & \checkmark & \checkmark & * & * & * & \checkmark & * \\\hline
      $\metaprot$/$\minifed$ 2 \cite{skalka-near-esop25} & \checkmark & \checkmark & \checkmark & * & * & \checkmark & * \\\hline
      This work & \checkmark & \checkmark & \checkmark & \checkmark & \checkmark & \checkmark & \checkmark\\
      \hline
    \end{tabular}
    \caption{Comparison of systems for verification of MPC security in PLs. * indicates relatively limited support.}
    \label{fig-comp-wrap}
  \end{figure}
}

\newcommand{\minicatsyntaxfig}{
\begin{fpfig}[t]{Syntax of $\minicat$}{fig-minicat-syntax}
$$
    \begin{array}{rcl@{\hspace{4mm}}r}
      \multicolumn{4}{l}{v \in \fieldp{p},\ w \in \mathrm{String},\ \cid \in \mathrm{Clients} }\\[2mm] 
      \be &::=& v \mid \flip{w} \mid \secret{w} \mid \mesg{w} \mid \rvl{w} \mid \be \fminus \be \mid \be \fplus \be \mid \be \ftimes \be \mid \OT{\be}{\cid}{\be}{\be} & \textit{expressions}\\[1mm]
      x &::=& \elab{\flip{w}}{\cid} \mid \elab{\secret{w}}{\cid} \mid \elab{\mesg{w}}{\cid} \mid  \rvl{w} \mid \out{\cid} & \textit{located variables} \\[1mm]
      \prog &::=& \eassign{\mesg{w}}{\cid}{\be}{\cid} \mid \reveal{w}{\be}{\cid} \mid \pubout{\cid}{\be}{\cid} \mid \prog;\prog \mid \pskip & \textit{protocols}
    \end{array}
$$
\end{fpfig}    
}

\newcommand{\minicatredxfig}{
\begin{fpfig}[t]{Semantics of $\minicat$ expressions (top) and programs (bottom).}{fig-minicat-redx}
 $$
  \begin{array}{c@{\hspace{5mm}}c}
  \begin{array}{rcl}
    \lcod{\store, v}{\cid} &=& v\\
    \lcod{\store, \be_1 \fplus \be_2}{\cid} &=& \fcod{\lcod{\store, \be_1}{\cid} \fplus \lcod{\store, \be_2}{\cid}}\\ 
    \lcod{\store, \be_1 \fminus \be_2}{\cid} &=& \fcod{\lcod{\store, \be_1}{\cid} \fminus \lcod{\store, \be_2}{\cid}}\\ 
    \lcod{\store, \be_1 \ftimes \be_2}{\cid} &=& \fcod{\lcod{\store, \be_1}{\cid} \ftimes \lcod{\store, \be_2}{\cid}}\\
  \end{array} & 
  \begin{array}{rcl}
    \lcod{\store, \flip{w}}{\cid} &=& \store(\elab{\flip{w}}{\cid})\\
    \lcod{\store, \secret{w}}{\cid} &=& \store(\elab{\secret{w}}{\cid})\\
    \lcod{\store, \mesg{w}}{\cid} &=& \store(\elab{\mesg{w}}{\cid})\\
    \lcod{\store, \rvl{w}}{\cid} &=& \store(\rvl{w})\\
  \end{array}
  \end{array}
  $$

\begin{mathpar}
  (\store, \pskip) \redx \store
  
  (\store, \xassign{x}{\be}{\cid}) \redx \extend{\store}{x}{\lcod{\store,\be}{\cid}}
  
  \inferrule
      {(\store_1,\prog_1) \redx \store_2 \\ (\store_2,\prog_2) \redx \store_3 }
      {(\store_1,\prog_1;\prog_2) \redx \store_3}
\end{mathpar}
\end{fpfig}
}

\newcommand{\minicataredxfig}{
\begin{fpfig}[t]{Adversarial semantics of $\minicat$.}{fig-minicat-aredx}
\begin{mathpar}
  \inferrule
      { \cid \in H }
      { (\store, \xassign{x}{\be}{\cid}) \aredx \extend{\store}{x}{\lcod{\store,\be}{\cid}} }
      
  \inferrule
      {\cid \in C }
      { (\store, \xassign{x}{\be}{\cid}) \aredx \extend{\store}{x}{\lcod{\arewrite(\store_C,\be)}{\cid}}}
      
  \inferrule
      {\lcod{\store,\be_1}{\cid} = \lcod{\store,\be_2}{\cid}  \text{\ or\ } \cid \in C}
      { (\store,\elab{\assert{\be_1 = \be_2}}{\cid}) \aredx \store }
      
  \inferrule
      {\lcod{\store,\be_1}{\cid} \ne \lcod{\store,\be_2}{\cid}}
      {(\store,\elab{\assert{\be_1 = \be_2}}{\cid}) \aredx \abort}
  
  \inferrule
      {(\store_1,\prog_1) \aredx \store_2 \\ (\store_2,\prog_2) \aredx \store_3 }
      {(\store_1,\prog_1;\prog_2) \aredx \store_3}

  \inferrule
      {(\store_1,\prog_1) \aredx \abort}
      {(\store_1,\prog_1;\prog_2) \aredx \abort}
      
  \inferrule
      {(\store_1,\prog_1) \aredx \store_2 \\ (\store_2,\prog_2) \aredx \abort }
      {(\store_1,\prog_1;\prog_2) \aredx \store_2}
\end{mathpar}
\end{fpfig}
}

\newcommand{\toeqfig}{
\begin{fpfig}[t]{Interpretation of $\minicat$ expressions (top) and programs (bottom) as
  constraints}{fig-toeq}
\begin{mathpar}
  \toeq{v} = v
  
  \toeq{x} = x

  \toeq{\elab{\be_1 \fplus \be_2}{\cid}} = (\toeq{\elab{\be_1}{\cid}} \fplus \toeq{\elab{\be_2}{\cid}})

  \toeq{\elab{\be_1 \fminus \be_2}{\cid}} = (\toeq{\elab{\be_1}{\cid}} \fminus \toeq{\elab{\be_2}{\cid}})

  \toeq{\elab{\be_1 \ftimes \be_2}{\cid}} = (\toeq{\elab{\be_1}{\cid}} \ftimes \toeq{\elab{\be_2}{\cid}})
\end{mathpar}
\begin{mathpar}
  \toeq{\xassign{x}{\be}{\cid}} = (x \eop \toeq{\elab{\be}{\cid}})
  
  \toeq{\elab{\assert{\be_1 = \be_2}}{\cid}} =  (\toeq{\elab{\be_1}{\cid}} \eop \toeq{\elab{\be_2}{\cid}})

  \toeq{\prog_1;\prog_2} = (\toeq{\prog_1} \wedge \toeq{\prog_2})
\end{mathpar}
\end{fpfig}
}

\newcommand{\minicathtripfig}{
\begin{fpfig}[t]{$\minicat$ Hoare logic derivation rules}{fig-minicathtrip}
\begin{mathpar}
  \inferrule[Skip]
            {}
            {\htrip{\eqs}{\pskip}{\eqs}}
            
  \inferrule[Assign]
            {}
            {\htrip{\eqs[\toeq{\elab{\be}{\cid_2}}/x]}{\xassign{x}{\be}{\cid_2}}{\eqs}}
  
  \inferrule[Assert]
            {E \models \toeq{\elab{\be_1}{\cid}} \eop \toeq{\elab{\be_2}{\cid}}}
            {\htrip{E}{\elab{\assert{\be_1 = \be_2}}{\cid}}{E}}

  \inferrule[Seq]
      {\htrip{\eqs_1}{\prog_1}{\eqs_2} \\ \htrip{\eqs_2}{\prog_2}{\eqs_3} }
      {\htrip{\eqs_1}{\prog_1;\prog_2}{\eqs_3}}

  \inferrule[Consequence]
      {\htrip{\eqs_1'}{\prog}{\eqs_2'} \\ \eqs_1 \models \eqs_1' \\  \eqs_2' \models \eqs_2 }
      {\htrip{\eqs_1}{\prog}{\eqs_2}}

  \inferrule[Frame]
      {\htrip{\eqs_1}{\prog}{\eqs_2} \\ \vars(\eqs) \cap \avars(\prog) = \varnothing}
      {\htrip{\eqs_1 \wedge \eqs}{\prog}{\eqs_2 \wedge \eqs}}
      
\inferrule[Hyp]
          {\htrip{\eqs_1}{\prog}{\eqs_2}}
          {\htrip{\eqs_1 \wedge (\eqs_2 \impl \eqs)}{\prog}{\eqs_2 \wedge \eqs}}
\end{mathpar}
\end{fpfig}
}

\newcommand{\stylocalfig}{
\begin{fpfig}[t]{Local (non-interactive) share type derivation rules.}{fig-stylocal}
\begin{mathpar}
  \inferrule[Shared]
      {(\stt : \sty) \in \Gamma}
      {\Gamma,\eqs \vdash \stt : \sty}
      
  \inferrule[Val]
      {}
      {\Gamma,\eqs \vdash \pubx{v} : \pubty}

  \inferrule[HEAdd]
      {\Gamma,\eqs \vdash [\phi^1_1,\phi^1_2] : \sharety\\
        \Gamma,\eqs \vdash [\phi^2_1,\phi^2_2] : \sharety}
      {\Gamma,\eqs \vdash [\phi^1_1\fplus\phi^2_1,\phi^1_2\fplus\phi^2_2] : \sharety}

  \inferrule[HEAddPub]
      {\Gamma,\eqs \vdash \pubx{\phi} : \pubty\\
        \Gamma,\eqs \vdash [\phi_1,\phi_2] : \sharety}
      {\Gamma,\eqs \vdash [\phi_1\fplus\phi,\phi_2] : \sharety}
            
  \inferrule[HEMultPub]
      {\Gamma,\eqs \vdash [\phi] : \pubty\\
        \Gamma,\eqs \vdash [\phi_1,\phi_2] : \sharety}
      {\Gamma,\eqs \vdash [\phi_1\ftimes\phi,\phi_2\ftimes\phi] : \sharety}

  \inferrule[PubOp]
      {\Gamma,\eqs \vdash \pubx{\phi_1} : \pubty\\
        \Gamma,\eqs \vdash \pubx{\phi_2} : \pubty}
      {\Gamma,\eqs \vdash \pubx{\phi_1\bop\phi_2}: \pubty}
      
  \inferrule[ShareEntails]
      {\Gamma,\eqs \vdash [\phi_1',\phi_2'] : \sharety \\
        \eqs \models \phi_1 \eop \phi_1' \wedge \phi_2 \eop \phi_2'}
      {\Gamma,\eqs \vdash [\phi_1,\phi_2] : \sharety} 
      
  \inferrule[PubEntails]
      {\Gamma,\eqs \vdash \pubx{\phi'} : \sty \\ \eqs \models \phi \eop \phi'}
      {\Gamma,\eqs \vdash \pubx{\phi} : \sty} 
\end{mathpar}
\end{fpfig}
}

\newcommand{\styinteractivefig}{
\begin{fpfig}[t]{Interactive share type derivation rules.}{fig-styinteractive}
\begin{mathpar}
  \inferrule[GenShares]
      {
        \genshares(\phi,\rx{w}{\cid}) = \phi_1,\phi_2 \\ 
        \eqs \models x_1 \eop \phi_1 \wedge x_2 \eop \phi_2  
      }
      {\eqs \vdash \Gamma,R \stredx \Gamma \cup \setit{[x_1,x_2] : \sharety}, R \uplus \setit{\rx{w}{\cid}}}

  \inferrule[SecOpen]
      {
        \Gamma, \eqs \vdash [\px{w_1},\px{w_2}] : \sharety\\
        \eqs \models  \px{w_1} + \px{w_2} \eop \phi' + \rx{w}{\oid}
      }
      {\eqs \vdash \Gamma,R \stredx \Gamma \cup \setit{\pubx{\recon(\px{w_1},\px{w_2})}  : \pubty}, R \uplus \setit{\rx{w}{\cid_{\oid}}}}

  \inferrule[Reveal]
      {
        \Gamma, \eqs \vdash [\px{w_1},\px{w_2}] : \sharety
      }
      {\eqs \vdash \Gamma,R \stredx \Gamma \cup \setit{\pubx{\recon(\px{w_1},\px{w_2})}  : \outty}, R}

  \inferrule[Output]
      {
        \Gamma, \eqs \vdash \pubx{\out{\cid}} : \outty
      }
      {\eqs \vdash \Gamma,R \stredx \Gamma \cup \setit{\pubx{\out{\cid}} : \outty}, R}
\end{mathpar}
\end{fpfig}
}

\newcommand{\ipjfig}{
\begin{fpfig}[t]{Syntax and derivation rules of $\minicat$ integrity types}{fig-ipj}
$$
\begin{array}{rcl@{\hspace{4mm}}l}
  \latel &::=& \hilab \mid \lolab & \gdesc{integrity labels} \\
  \Delta &::=& \varnothing \mid \Delta; x : \ity{\cid}{V} & \gdesc{integrity type environments}
\end{array} 
$$

\begin{mathpar}
  \inferrule[Value]
  {}
  {\itj{\cid}{v}{\varnothing}}
  
  \inferrule[Secret]
  {}
  {\itj{\cid}{\secret{w}}{\varnothing}}
  
  \inferrule[Rando]
  {}
  {\itj{\cid}{\flip{w}}{\varnothing}}
  
  \inferrule[Mesg]
  {}
  {\itj{\cid}{\mesg{w}}{\setit{\mx{w}{\cid}}}}
    
  \inferrule[PubM]
  {}
  {\itj{\cid}{\rvl{w}}{\setit{\rvl{w}}}}

  \inferrule[Binop]
  {\itj{\cid}{\be_1}{V_1} \\
   \itj{\cid}{\be_2}{V_2} \\ \oplus \in \{ \fplus,\fminus,\ftimes \}}
  {\itj{\cid}{\be_1 \oplus \be_2}{V_1 \cup V_2}}
\end{mathpar}

\begin{mathpar}
  \inferrule[Send]
            {\itj{\cid}{\be}{V}}
            {\ipj{\eqs}{\xassign{x}{\be}{\cid}}{(x : \ity{\cid}{V})}}
             
  \inferrule[Seq]
            {\ipj{\eqs}{\prog_1}{\Delta_1}\\
             \ipj{\eqs}{\prog_2}{\Delta_2}}
            {\ipj{\eqs}{\prog_1;\prog_2}{\Delta_1;\Delta_2}}

  \inferrule[MAC]
            {\eqs \models \toeq{\elab{\assert{\macbdoz{w}}}{\cid}}}
            {\ipj{\eqs}{\elab{\assert{\macbdoz{w}}}{\cid}}{(\mx{w\ttt{s}}{\cid}: \ity{\cid}{\varnothing})}}
\end{mathpar}
\end{fpfig}
}

\newcommand{\cheatjfig}{
\begin{fpfig}[t]{Assigning integrity labels to variables}{fig-cheatj}
\begin{mathpar}
  \inferrule
      {}
      {\cheatj{\varnothing}{H,C}{\seclev_{H,C}}}
      
  \inferrule
      {\cheatj{\Delta}{H,C}{\seclev} \\ \cid \in H}
      {\cheatj{\Delta; x : \ity{\cid}{V}}{H,C}{\extend{\seclev}{x}{\hilab \wedge (\bigwedge_{x \in V} \seclev(x))}}}
      
  \inferrule
      {\cheatj{\Delta}{H,C}{\seclev} \\ \cid \in C}
      {\cheatj{\Delta; x : \ity{\cid}{V}}{H,C}{\extend{\seclev}{x}{\lolab}}}
\end{mathpar}
\end{fpfig}
}

\newcommand{\metaprotsyntaxfig}{
  \begin{fpfig}[t]{$\metaprot$ syntax}{fig-metaprotsyntax}
$$
\begin{array}{rcl@{\hspace{7mm}}r}
  \multicolumn{3}{l}{\flab \in \mathrm{Field},\   y \in \mathrm{EVar}, \  f \in \mathrm{FName}}\\[1mm]
  e &::=& \mv \mid \flip{e} \mid \secret{e} \mid \mesg{e} \mid \rvl{e} \mid \outkw \mid e \bop e 
   \mid y \mid \bv{e,\ldots,e} \mid & \gdesc{expressions}\\
  & & e.\flab \mid \elab{e}{e} \mid \elet{y}{e}{e} \mid  f(e,\ldots,e) \mid \{ \flab = e; \ldots; \flab = e \} \\[1mm]
  \cmd &::=& 
  \assign{e}{e} \mid f(e,\ldots,e) \mid \elet{y}{e}{\instr} \mid  \cmd;\cmd 
    \mid \elab{\assert{e = e}}{e} & \gdesc{instructions}\\[1mm] 
  \bop &::=& \fplus \mid \fminus \mid \ftimes \mid \oplus \mid \ominus \mid \concat  \\[1mm]
  \mv &::=& w \mid \cid \mid \be \mid x \mid \{ \flab = \mv;\ldots;\flab = \mv \} 
  & \gdesc{values}\\[1mm] 
  \tau &::=& \stringty \mid \cidty \mid \{ \flab_1 : \tau_1; \ldots; \flab_n : \tau_n \} & \gdesc{basic types}\\[1mm]
\mathit{fn} &::=& f(y,\ldots,y) \{ e \} \mid  f(y : \tau,\ldots,y : \tau) \{ \cmd \} & \textit{functions} 
\end{array}
$$
\end{fpfig}
}

\newcommand{\metaprotexprsemanticsfig}{
  \begin{fpfig}[t]{Semantics of $\metaprot$ expressions.}{fig-metaprotexprsemantics}
  \begin{mathpar}
  \inferrule
      {e_1 \redx \mv \\ e_2[\mv/y] \redx \mv'}
      {\elet{y}{e_1}{e_2} \redx \mv'}
      
  \inferrule
      {e_1 \redx \be \\ e_2 \redx \cid}
      {\elab{e_1}{e_2} \redx \elab{\be}{\cid}}

  \inferrule
      {\codebase(f) = y_1,\ldots,y_n,\ e \\ e_1 \redx \mv_1 \cdots e_n \redx \mv_n \\
        e[\mv_1/y_1]\cdots[\mv_n/y_n] \redx \mv}
      {f(e_1,\ldots,e_n) \redx \mv}

  \inferrule
      {e_1 \redx \mv_1 \\ \cdots \\ e_n \redx \mv_n }
      {\{ \flab_1 = e_1; \ldots; \flab_n = e_n \} \redx \{ \flab_1 = \mv_1; \ldots; \flab_n = \mv_n \} }

  \inferrule
      {e \redx \{\ldots; \flab = \mv; \ldots\}}
      {e.\flab \redx \mv}

  \inferrule
      {e_1 \redx w_1 \\ e_2 \redx w_2}
      {e_1 \concat e_2 \redx w_1w_2}

  \inferrule
      {e \redx w}
      {\mesg{e} \redx \mesg{w}}
      
  \inferrule
      {e_1 \redx \be_1 \\ e_2 \redx \be_2}     
      {e_1 \fplus e_2 \redx \be_1 \fplus \be_2}
      
  \inferrule
      {}     
      {\mv \redx \mv}
\end{mathpar}
\end{fpfig}
}

\newcommand{\metaprotinstrsemanticsfig}{
  \begin{fpfig}[t]{Semantics of $\metaprot$ instructions.}{fig-metaprotinstrsemantics}
\begin{mathpar}
  \inferrule
      {e_1 \redx x \\ e_2 \redx \elab{\be}{\cid}}
      {\assign{e_1}{e_2} \redx \xassign{x}{\be}{\cid}}

  \inferrule
      {e_1 \redx \be_1 \\ e_2 \redx \be_2 \\ e_3 \redx \cid}
      {\elab{\assert{e_1 = e_2}}{e_3} \redx \elab{\assert{\be_1 = \be_2}}{\cid}}

  \inferrule
      {\codebase(f) = y_1 : \tau_1, \ldots, y_n : \tau_n, \instr \\
        e_1 \redx \mv_1 \ \cdots \ e_n \redx \mv_n \\
        \instr[\mv_1/y_1]\cdots[\mv_n/y_n] \redx \prog
      }
      {f(e_1,\ldots,e_n) \redx \prog}
      
  \inferrule
      {e \redx \mv \\ \instr[\mv/y] \redx \prog}
      {\elet{y}{e}{\instr} \redx \prog}

  \inferrule
      {e_1 \redx \prog_1 \\ e_2 \redx \prog_2}
      {e_1;e_2 \redx \prog_1;\prog_2}
\end{mathpar}
\end{fpfig}
}

\newcommand{\peqfig}{
\begin{fpfig}[t]{Abstract constraints and their evaluation semantics.}{fig-peq}
$$
  \begin{array}{rclr}
    \peq &::=& e \eop e \mid \notg{\eqs} \wedge \notg{\eqs} \mid \notg{\eqs} \vee \notg{\eqs}
    \mid \notg{\eqs} \impl \notg{\eqs} \mid \neg\notg{\eqs} & \qquad \textit{abstract constraints}
  \end{array}
$$
\begin{mathpar}
  \inferrule
      {e_1 \redx \phi_1 \cdots e_n \redx \phi_n}
      {\bv{e_1,\ldots,e_n} \redx \bv{\phi_1,\ldots,\phi_n}}
  
  \inferrule
      {e_1 \redx \elab{\be_1}{\cid_1} \\ e_2 \redx \elab{\be_2}{\cid_2} }
      {e_1 \bop e_2 \redx \toeq{\elab{\be_1}{\cid_1}} \bop \toeq{\elab{\be_2}{\cid_2}}}
      
  \inferrule
      {e_1 \redx \elab{\be_1}{\cid_1} \\ e_2 \redx \elab{\be_2}{\cid_2} }
      {e_1 \eop e_2 \redx \toeq{\elab{\be_1}{\cid_1}} \eop \toeq{\elab{\be_2}{\cid_2}}}
      
  \inferrule
      {\notg{\eqs_1} \redx \eqs_1 \\ \notg{\eqs_2} \redx \eqs_2 }
      {\notg{\eqs_1} \wedge \notg{\eqs_2} \redx \eqs_1 \wedge \eqs_2}

  \inferrule
      {\notg{\eqs_1} \redx \eqs_1 \\ \notg{\eqs_2} \redx \eqs_2 }
      {\notg{\eqs_1} \vee \notg{\eqs_2} \redx \eqs_1 \vee \eqs_2}

  \inferrule
      {\notg{\eqs_1} \redx \eqs_1 \\ \notg{\eqs_2} \redx \eqs_2 }
      {\notg{\eqs_1} \impl \notg{\eqs_2} \redx \eqs_1 \impl \eqs_2}

  \inferrule
      {\peq \redx \eqs}
      {\neg\peq\redx \neg\eqs}
\end{mathpar}
\end{fpfig}
}

\newcommand{\metahtripfig}{
\begin{fpfig}[t]{Algorithmic $\metaprot$ Abstract Entailment and Hoare Triple Deduction Rules}{fig-metahtrip}
\begin{mathpar}          
  \inferrule[Mesg]
            {}
            {\htrip{\eqtrue}{\xassign{e_1}{e_2}{e_3}}{e_1 \eop \elab{e_2}{e_3}}}

  \inferrule[Assert]
            {}
            {\htrip{\elab{e_1}{e_3} \eop \elab{e_2}{e_3}}{\elab{\assert{e_1 = e_2}}{e_3}}{\eqtrue}}

  \inferrule[Seq]          
            {\htrip{\peq_1^1}{\cmd_1}{\peq_2^1} \\ \htrip{\peq_1^2}{\cmd_2}{\peq_2^2}}
            {\htrip{\peq_1^1 \wedge (\peq_2^1 \impl \peq_1^2)}{\cmd_1;\cmd_2}{\peq_2^1 \wedge \peq_2^2}}

  \inferrule[Let]
            {\htrip{\peq_1}{\cmd[e/y]}{\peq_2}}
            {\htrip{\peq_1}{\elet{y}{e}{\cmd}}{\peq_2}}

  \inferrule[App]
            {\htrip{\peq_1}{f(y_1:\tau_1,\ldots,y_n:\tau_n)}{\peq_2}}
            {\htrip{\peq_1[e_1/y_1 \cdots e_n/y_n]}{f(e_1,\ldots,e_n)}{\peq_2[e_1/y_1 \cdots e_n/y_n]}}

  \inferrule[Fn]
            {\codebase(f) = y_1 : \tau_1, \ldots, y_n : \tau_n, \cmd \\ \htrip{\peq_1}{\cmd}{\peq_2}}
            {\htrip{\peq_1}{f(y_1:\tau_1,\ldots,y_n:\tau_n)}{\peq_2}}

  \inferrule[Hardpack]
            {\precond(f) = \peq_1 \\ \postcond(f) = \peq_2 \\
              \htrip{\peq_1'}{f(y_1:\tau_1,\ldots,y_n:\tau_n)}{\peq_2'} \\
              \forall y_1:\tau_1,\ldots,y_n:\tau_n . 
            \peq_1 \models \peq_1'\ \text{and}\ \peq_1 \wedge \peq_2' \models \peq_2}
            {\htrip{\peq_1}{f(y_1:\tau_1,\ldots,y_n:\tau_n)}{\peq_2}}
\end{mathpar}\\[4mm]
$$
  \inferrule[GenEntails]
          {\mv_1,\ldots,\mv_n = \fresh(\tau_1,\ldots,\tau_n) \\ \peq_1[\mv_1/y_n \cdots \mv_n/y_n] \redx \eqs_1 \\
           \peq_2[\mv_1/y_n \cdots \mv_n/y_n] \redx \eqs_2 \\ \eqs_1 \models \eqs_2}
          {\forall y_1:\tau_1,\ldots,y_n:\tau_n . \peq_1 \models \peq_2 }
$$
\end{fpfig}
}

\newcommand{\stafig}{
\begin{fpfig}[t]{$\metaprot$ share type annotation syntax (top), semantics (middle), and
    hoare triple derivation rules (bottom).}{fig-sta}
$$
\begin{array}{rclr}
  \notg{\stt} &::=& [e,e] \mid \pubx{e}   & \qquad \textit{abstract share terms} \\   
  \sta &::=& e_1 \kwas e_2 \mid \notg{\stt} : \sty & \qquad \textit{share type annotations} \\     
  \cmd &::=& \cdots \mid \sta
\end{array}
$$

\bigskip

\begin{mathpar}
  \inferrule
      {e_1 \redx \phi_1 \\ e_2 \redx \phi_2}
      {[e_1,e_2] \redx [\phi_1,\phi_2]}
      
  \inferrule
      {e \redx \phi}
      {\pubx{e} \redx \pubx{\phi}}
      
  \inferrule
      {e_1 \redx \phi_1 \\ e_2 \redx \phi_2}
      {e_1 \kwas e_2 \redx \pskip}

  \inferrule
      {\notg{\stt} \redx \stt}
      {\notg{\stt} : \sty \redx \pskip}
\end{mathpar}

\bigskip

\begin{mathpar}
  \htrip{e_1 \eop e_2}{e_1 \kwas e_2}{\true}

  \htrip{\true}{\notg{\stt} : \sty}{\true}
\end{mathpar}
\end{fpfig}
}

\newcommand{\shtripstafig}{
\begin{fpfig}[t]{Typing rules for $\metaprot$ share type annotations.}{fig-shtripsta}
\begin{mathpar}
  \inferrule[Coerce]
      {e_1 \redx \phi_1 \\ e_2 \redx \phi_2}
      {\shtrip{\Gamma, R, \xdefs}{e_1 \kwas e_2}{\Gamma, R, \xdefs[\phi_1 \mapsto \phi_2]}}

  \inferrule[LocalShares]
      {e_1 \redx x_1 \\ e_2 \redx x_2  \\
        \Gamma \Vdash [\xdefs(x_1), \xdefs(x_2)] : \sharety}
      {\shtrip{\Gamma, R, \xdefs}{[e_1,e_2] : \sharety}
        {\Gamma \cup \setit{[x_1,x_2] : \sharety}, R, \xdefs}}

  \inferrule[GenShares]
      {e_1 \redx x_1 \\ e_2 \redx x_2  \\ \genshares(\phi, \rx{w}{\cid}) = \xdefs(x_1),\xdefs(x_2)}
      {\shtrip{\Gamma, R, \xdefs}{[e_1,e_2] : \sharety}
        {\Gamma \cup \setit{[x_1,x_2] : \sharety}, R \uplus \setit{\rx{w}{\cid}}, \xdefs}}

  \inferrule[LocalPublic]
      {e \redx \phi \\ \Gamma \Vdash \pubx{\xdefs(\phi)} : \pubty}
      {\shtrip{\Gamma, R, \xdefs}{\pubx{e} : \pubty}
        {\Gamma \cup \setit{\pubx{\phi} : \pubty}, R, \xdefs}}

  \inferrule[MaskedPublic]
      {e_1 \redx x_1 \\ e_2 \redx x_2  \\
        \Gamma \Vdash [x_1, x_2] : \sharety \\
        \xdefs(\recon(x_1,x_2))  = \phi + \rx{w}{\oid}}
      {\shtrip{\Gamma, R, \xdefs}{\pubx{\recon(e_1,e_2)} : \pubty}
        {\Gamma \cup \setit{\pubx{\recon(x_1,x_2)} : \pubty}, R \uplus \setit{\rx{w}{\oid}}, \xdefs}}

  \inferrule[Reveal]
      {e_1 \redx x_1 \\ e_2 \redx x_2  \\
        \Gamma \Vdash [x_1, x_2] : \sharety}
      {\shtrip{\Gamma, R, \xdefs}{\pubx{\recon(e_1,e_2)} : \outty}
        {\Gamma \cup \setit{\pubx{\recon(x_1,x_2)} : \outty}, R, \xdefs}}

  \inferrule[Output]
      {e \redx \out{\cid} \\
       \Gamma \Vdash \pubx{\xdefs(\out{\cid})} : \outty }
      {\shtrip{\Gamma, R, \xdefs}{\out{\cid} : \outty}
        {\Gamma \cup \setit{\pubx{\out{\cid}} : \outty}, R, \xdefs}}
\end{mathpar}
\end{fpfig}
}

\newcommand{\shtripcmdfig}{
\begin{fpfig}[t]{Typing rules for $\metaprot$ instructions and functions.}{fig-shtripcmd}
\begin{mathpar}
  \inferrule[Defn]
      {e_1 \redx x \\ e_2 \redx \elab{\be}{\cid}}
      {\shtrip{\Gamma, R, \xdefs}{e_1 := e_2}{\Gamma, R, \xdefs[x \mapsto \toeq{\elab{\be}{\cid}}]}}

  \inferrule[Seq]    
      {\shtrip{\Gamma_1,R_1,\xdefs_1}{\cmd_1}{\Gamma_2,R_2,\xdefs_2} \\
       \shtrip{\Gamma_2,R_2,\xdefs_2}{\cmd_2}{\Gamma_3,R_3,\xdefs_3}}
      {\shtrip{\Gamma_1,R_1,\xdefs_1}{\cmd_1;\cmd_2}{\Gamma_3,R_3,\xdefs_3}}

  \inferrule[Let]
      {\shtrip{\Gamma_1,R_1,\xdefs_1}{\cmd[e/y]}{\Gamma_2,R_2,\xdefs_2}}
      {\shtrip{\Gamma_1,R_1,\xdefs_1}{\elet{y}{e}{\cmd}}{\Gamma_2,R_2,\xdefs_2}}

  \inferrule[App]
      {f : \Pi y_1,\ldots,y_n. \notg{\Gamma_1} \rightarrow (\notg{\Gamma_2}, \notg{R'}) \\
        \notg{\Gamma_1}[e_1/y_1 \cdots e_n/y_n] \redx \Gamma_1 \\
        \notg{\Gamma_2}[e_1/y_1 \cdots e_n/y_n] \redx \Gamma_2 \\
        \notg{R'}[e_1/y_1 \cdots e_n/y_n] \redx R'        
      }
      {\shtrip{\Gamma \cup \Gamma_1, R, \xdefs}{f(e_1,\ldots,e_n)}{\Gamma \cup \Gamma_1 \cup \Gamma_2, R \uplus R', \xdefs}}
      
  \inferrule[Fn]
            {\codebase(f) = y_1 : \tau_1, \ldots, y_n : \tau_n, \cmd \\
              \mv_1,\ldots,\mv_n = \fresh(\tau_1,\ldots,\tau_n) \\
              \notg{\Gamma_1}[\mv_1/y_1 \cdots \mv_n/y_n] \redx \Gamma_1 \\
              \notg{\Gamma_2}[\mv_1/y_1 \cdots \mv_n/y_n] \redx \Gamma_2 \\
              \notg{R}[\mv_1/y_1 \cdots \mv_n/y_n] \redx R \\  
     \shtrip{\Gamma_1, \varnothing, \varnothing}{\cmd[\mv_1/y_1 \cdots \mv_n/y_n]}{\Gamma_2, R,\xdefs}}
    {f : \Pi y_1,\ldots,y_n. \notg{\Gamma_1} \rightarrow (\notg{\Gamma_2}, \notg{R})}
\end{mathpar}
\end{fpfig}
}

\newcommand{\preludetestfig}{
\begin{fpfig}[t]{Average verification times for $\metaprot$ binary and arithmetic circuit libraries and small circuits over 20 trials, in seconds. Cells marked with * were limited to a single trial due to excessive verification times. Experiments that timed out did not finish within 8 hours.}{fig-preludetest}
\begin{center}
\begin{tabular}{c}
\begin{tabular}{l|c|c|c|c|c}
         & Bit-Vector & SMFF $\fieldp{2}$ & SMFF $\fieldp{2^{17} - 1}$ &  SMFF $\fieldp{2^{31} - 1}$ &
      SMFF $\fieldp{2^{61} - 1}$ \\
  \hline
  \text{Passive Beaver Library} & 1.012 & .911 & 1.135 & 1.114 & 1.059 \\
  \text{Malicious Beaver (BDOZ) Library} & 1.175 & 1.158 & 1.182 & 1.153 & 1.178 \\
  \text{GMW Library} & .932 & .936 \\ 
\end{tabular}\\[10mm]
\hline\\
\begin{tabular}{l|c|c|c}
  &  \begin{tabular}{l}Bit-Vector,\\Compositional\\(this work)\end{tabular} &
       \begin{tabular}{l} Bit-Vector,\\Whole-Program\\(prior work~\cite{skalka-near-ppdp24})\end{tabular} & \begin{tabular}{l} SMFF $\fieldp{2}$\\(this work)\end{tabular} \\
  \hline
  \text{GMW 2-Bit Zero Test (3 Gates)} & .037 & .029 & 261.098 \\
  \text{GMW 2-Bit Addition (4 Gates)} & .038 & .032 & 261.098 \\
  \text{GMW 8-Bit Zero Test (15 Gates)} & .051 & .057 & Timeout \\
  \text{GMW 64-Bit Zero Test (127 Gates)} & 0.308 & 1.071 & Timeout \\
  \text{GMW 64-Bit Addition (376 Gates)} & 5.735 & 4623.666* & Timeout \\
  \text{GMW 64-Bit Subtraction (439 Gates)} & 9.985 & 10661.325* & Timeout \\
\end{tabular}
\end{tabular}
\end{center}
\end{fpfig}
}

\newcommand{\solvetimescatterfig}{
\begin{fpfig}[t]{Scatter plot of verification solve times.}{fig-solve-time-scatter}
\begin{center}
\includegraphics[width=\linewidth]{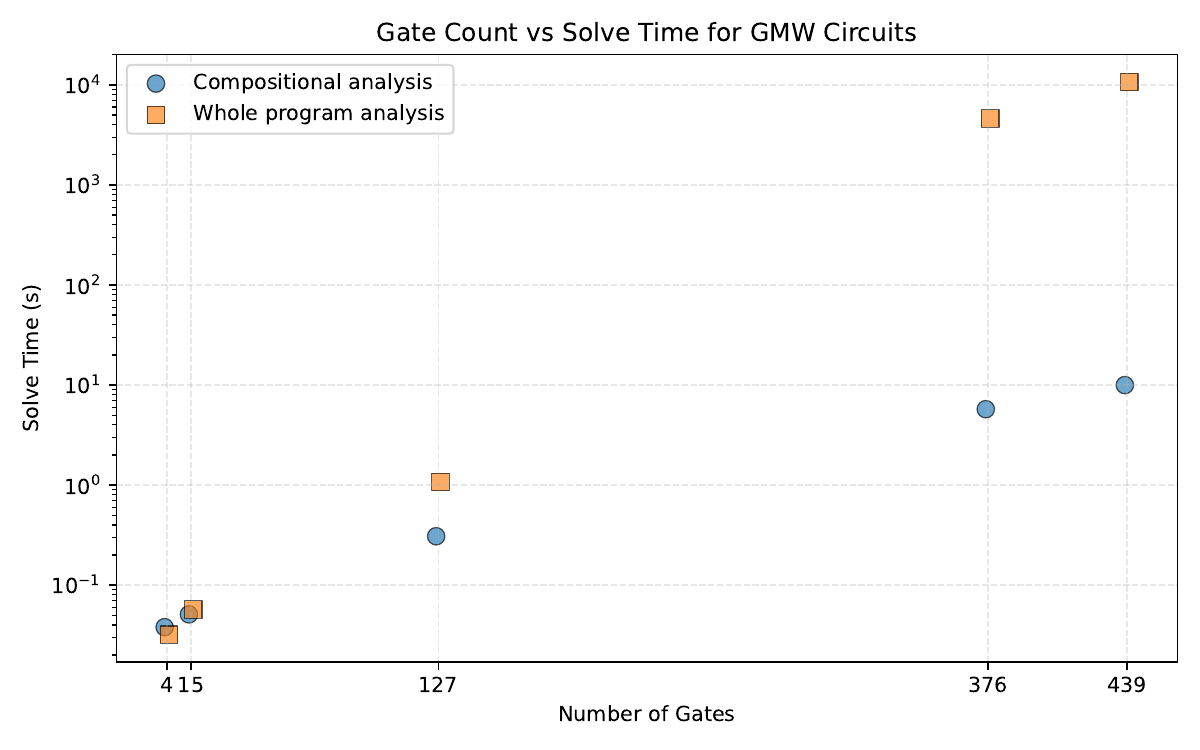}
\end{center}
\end{fpfig}
}

\begin{document}

\title{SMT-Boosted Security Types and Correctness for Low-Level MPC}

\author{Christian Skalka}
\affiliation{%
  \institution{University of Vermont}
  \city{Burlington, Vermont}
  \country{USA}}
\email{ceskalka@uvm.edu}

\author{Steven Baldasty}
\affiliation{%
  \institution{University of Vermont}
  \city{Burlington, Vermont}
  \country{USA}}
\email{sbaldast@uvm.edu}

\author{Yeonhee Yeh}
\affiliation{%
  \institution{University of Vermont}
  \city{Burlington, Vermont}
  \country{USA}}
\email{yyeh@uvm.edu}

\author{Felix Walberg}
\affiliation{%
  \institution{University of Vermont}
  \city{Burlington, Vermont}
  \country{USA}}
\email{fwalberg@uvm.edu}

\author{Joseph P.~Near}
\affiliation{%
  \institution{University of Vermont}
  \city{Burlington, Vermont}
  \country{USA}}
\email{jnear@uvm.edu}

\begin{abstract}
  Secure Multi-Party Computation (MPC) is an important enabling
  technology for data privacy in modern distributed applications. We
  develop a new Hoare logic and type theory to automatically enforce
  correctness, confidentiality, and integrity properties of protocols
  written in the \emph{Prelude/Overture} language framework. Both
  definition and automation of our analyses leverages SMT
  verification, supporting precision and efficiency. Our approach is
  compositional, scalable, and generalizes to arbitrary prime fields
  and secret sharing schemes.
\end{abstract}

\maketitle

\section{Introduction}

Data privacy is a critical concern in distributed applications
including privacy-preserving machine learning
\cite{li2021privacy,knott2021crypten,koch2020privacy,liu2020privacy}
and blockchains
\cite{ishai2009zero,lu2019honeybadgermpc,gao2022symmeproof,tomaz2020preserving}.
Secure multiparty computation (MPC) is a key software-based enabling
technology for these applications. MPC protocols provide both confidentiality
and integrity properties, formulated as real/ideal (aka simulator)
security and universal composability (UC) \cite{evans2018pragmatic},
with well-studied manual proof methods \cite{Lindell2017}.  MPC
security semantics has also been reformulated as
\emph{hyperproperties}
\cite{8429300,10.1145/3453483.3454074,skalka-near-ppdp24} to support
automated language-based proof methods. Recent research in the SMT
community has developed theories of finite fields \cite{SMFF} with
clear relevance to verification of cryptographic schemes that rely on
field arithmetic. The goal of this paper is to combine hyperproperty
formulations of security with SMT verification as a foundation for
enforcing correctness and security properties of MPC protocols.

Our focus is on low-level MPC protocols. Previous high-level
MPC-enabled languages such as Wysteria \cite{rastogi2014wysteria} and
Viaduct \cite{10.1145/3453483.3454074} are designed to provide
effective programming of full applications, and incorporate
sophisticated compilation techniques such as orchestration
\cite{viaduct-UC} to guarantee high-level security properties. But
these approaches rely on \emph{libraries} of low-level MPC protocols,
such as binary and arithmetic circuits. These low-level protocols are
probabilistic, encapsulate abstractions such as secret sharing and
semi-homomorphic encryption, and are complementary to high-level
language design.

\subsection{The $\metaprot/\minicat$ Language Framework}

Our language framework incorporates two languages. The $\minicat$
language supports direct definition of message passing protocols.  In
this work we develop a \emph{share type} system enforcing the
confidentiality property of MPC which is called passive simulator
security. We also develop and integrity type systems for $\minicat$
that enforces malicious MPC security. Both share types and integrity
types are supported with the use of an SMT-style constraint system.
In addition to security properties, the constraint system also
supports protocol correctness properties.

While $\minicat$ provides a foundation for low-level MPC protocol
specification, it does not provide higher-level features such as
function definitions, and the share and integrity type systems are not
automated. So we define the $\metaprot$ language as a higher-level
metaprogramming language that evaluates to $\minicat$ protocols.
$\metaprot$ supports compositional definition of protocol components
and also supports automation of constraint verification and type
checking especially by leveraging SMT automation.  Thus $\metaprot$
program analysis automates security and correctness properties of the
low-level $\minicat$ protocols that $\metaprot$ programs evaluate to.

The specification of an MPC protocol is called its \emph{ideal
functionality} and a standard approach \cite{269581} is to implement
ideal functionalities as either arithmetic or boolean circuits. An
advantage of this is that subprotocols for primitive operations, such
as addition and multiplication, or logic gates, can be implemented and
optimized as subprotocols. Consider the example of a two-bit adder
binary circuit in Figure \ref{fig-adder}. The ideal functionality of
the circuit is two-bit vector addition (ignoring carry). On the left
is a $\metaprot$ program that implements the protocol by ``wiring''
secure logic gates implemented using the Goldreich-Micali-Wigderson
(GMW) protocol~\cite{goldreich1987play}. These are encapsulated in the $\ttt{xorgmw}$ and
$\ttt{andgmw}$ definitions. In the first lines of the program, clients
1 and 2 each generate secret shares of their secret bits-- $x_1$ and
$x_2$ for client 1, and $y_1$ and $y_2$ for client 2. Shares are
identified in the program by strings $\ttt{"x1"}$, $\ttt{"y1"}$, etc.,
as are the gates in the protocol. So the circuit is wired in the
same way as the logic circuit diagram on the right. Each gate
specifies a subprotocol where the operations are performed on
encrypted wire data-- in the case of GMW, these are secret shares.

\begin{fpfig}[t]{A two-bit adder circuit (without carry) in $\metaprot$ with
    verification postcondition, using the GMW protocol.}{fig-adder}
\begin{center}
\begin{tabular}{cc}
\begin{tabular}[b]{l}
\verb+main() {+\\
\verb+   genshares("x1", 1, 2); genshares("x2", 1, 2);+\\ 
\verb+   genshares("y1", 2, 1); genshares("y2", 2, 1);+\\ 
\verb+   xorgmw("z1", "x1", "y1");+ \\
\verb+   andgmw("g1", "x1", "y1");+\\
\verb+   xorgmw("g2", "x2", "y2");+\\
\verb+   xorgmw("z2", "g2", "g1")+\\
\verb+}+ \\
\verb+postcondition:+\\
\verb+  (|RECON("z2"), RECON("z1")| ==+\\
\verb+      BVAdd(|RECON("x2"), RECON("x1")|,+\\
\verb+            |RECON("y2"), RECON("y1")|))+
\end{tabular}
& 
{
\scriptsize
\begin{circuitikz}
  \draw
  (0,0)
  node[label=left:$x_1$] {}
  to[short, o-]
  (1,0)
  node[xor port, anchor=in 1] (z1) {}
  (0,52 |- z1.in 2)
  node[label=left:$y_1$] {}
  to[short, o-]
  (z1.in 2)
  (z1.out)
  to[short, -o]
  ++(3,0)
  node[label=right:$z_1$] {}
  (2,-1.5)
  node[and port, anchor=in 1] (g1) {}
  (g1.in 1)
  -- ++(-1,0)
  to[short,-*] (1,0)
  (g1.in 2)
  -- ++(-1.5,0)
  to[short, -*] (.5, 52 |- z1.in 2)
  (0,-3)   
  node[label=left:$x_2$] {}
  to[short, o-]
  (1,-3)
  node[xor port, anchor=in 1] (g2) {}
  (0,52 |- g2.in 2)
  node[label=left:$y_2$] {}
  to[short, o-]
  (g2.in 2)
  (g1.out)
  to[short,-]
  ++(.2,0)
  node[xor port, anchor=in 1] (z2) {}
  (g2.out)
  -- ++(1.2,0)
  to[short, -] (3.75, 52 |- z2.in 2)
  (z2.out)
  to[short, -o]
  ++(.25,0)
  node[label=right:$z_2$] {}  
  ;
\end{circuitikz}
}
\end{tabular}
\end{center}
\end{fpfig}

An important feature of the example is the $\ttt{postcondition}$ that
specifies the correctness of the circuit. Here the $\ttt{|...|}$
syntax denotes a concatenation of bits into a bit vector, and
$\ttt{BVAdd}$ is bit vector addition. These operations are components
of the SMT cvc5 Bit-Vector theory and the postcondition is
automatically verified using it. $\ttt{RECON}$ is a constraint-level
function indicating the reconstructed value represented by secret
shares. So the postcondition says that the reconstructed shares of
the outputs are equal to the sum of the reconstructed secret shares
of the inputs.

In the remainder of the paper we will go into more complete detail
about the underlying definitions of $\ttt{genshares}$, $\ttt{xorgmw}$,
and $\ttt{andgmw}$ (see in particular Figure \ref{fig-gmwprepost} in
Section \ref{section-metalang} and Figure \ref{fig-gmwsharetypes} in
Section \ref{section-metaty}). This includes their share type checking that
guarantees passive simulator security in this and other examples.
Another important point is that other protocols can be used to define
gates, for example malicious secure protocols, since $\metaprot$
and $\minicat$ support low-level protocol specification.

\subsection{A Brief Introduction to MPC Protocols}

An MPC protocol allows a collection of parties to compute a function
of their private inputs without revealing those inputs to each other.
The function to be computed is typically called the \emph{ideal
  functionality}, and a correct protocol must produce the same output
as the ideal functionality. An MPC protocol is secure if it reveals no
additional information to the parties beyond what is intentionally
revealed by the ideal functionality. The information available to a
party (its input, randomness, received messages, and outputs) is
called its \emph{view}; an adversary can observe the views of all
corrupted parties~\cite{evans2018pragmatic,Lindell2017}.

\emph{Secret sharing} splits a secret value into \emph{shares} that
allow limited computations but prevent individual parties from
learning the secret. \emph{Additive secret sharing} is one example: to
additively share $x \in \mathbb{F}_p$ between two parties, we sample a
uniform $r \in \mathbb{F}_p$ and gives the shares $x_1=r$ and
$x_2=x-r$ to the respective parties. To reconstruct the secret, we
compute $x_1+x_2=x$. Each share by itself is uniform and therefore
does not reveal $x$, but two shares together uniquely determine $x$.
This construction is a finite-field \emph{one-time pad} (OTP): for
every possible $x$, the observed value $x-r$ has the same uniform
distribution, so even an adversary with unlimited computational power
learns nothing about $x$. The mask must be fresh for every secret. For
example, reusing $r$ to mask $x$ and $y$ reveals their difference,
since $(x-r)-(y-r)=x-y$. In this paper $\ttt{RECON}$ denotes
reconstruction, and over $\mathbb{F}_2$ additive reconstruction is
XOR, as used for GMW wire values \cite{goldreich1987play, evans2018pragmatic}.

Additive secret sharing is a form of \emph{semi-homomorphic}
encryption: it allows limited forms of computation without decryption.
Adding two additive sharings componentwise gives a sharing of the sum.
However, additive secret shares cannot be multiplied to obtain shares
of the product; to support multiplication, MPC protocols use
additional machinery (e.g. oblivious transfer in GMW, or preprocessed
Beaver triples in arithmetic protocols~\cite{goldreich1987play,
  evans2018pragmatic, BDOZ,10.1007/978-3-030-68869-1_3}).

\emph{Oblivious transfer} (OT) is a two-party protocol that allows one
party to select a value from the other without revealing which one was
selected. The \emph{sender} supplies two values $m_0,m_1$, and the
\emph{receiver} supplies a choice bit $b$. The receiver obtains $m_b$
without learning $m_{1-b}$, while the sender learns nothing about $b$.
For example, GMW uses OT to select an entry from a table that
implements a Boolean gate without revealing either party's
secret-shared input bits. Different OT implementations rely on
different assumptions, so our language treats OT abstractly and
requires only these functional and privacy guarantees
\cite{goldreich1987play, evans2018pragmatic}.

Another approach moves expensive work into an input-independent
\emph{preprocessing} phase that distributes correlated randomness
before the parties know their private inputs. A Beaver
triple~\cite{beaver1991efficient} consists of secret shares of random
values $a,b,c$ satisfying $c=ab$. To multiply shared $x$ and $y$, the
parties reconstruct the masked differences $d=x-a$ and $e=y-b$, then
locally obtain shares of $xy=c+db+ea+de$. Reusing a Beaver triple
leads to correlations that break security, so triples and other
preprocessing resources must be fresh where required and securely
generated; this paper specifies their needed properties without
implementing that generation protocol \cite{evans2018pragmatic,
  BDOZ,10.1007/978-3-030-68869-1_3}.

MPC confidentiality is conventionally defined by comparing a
\emph{real} protocol execution with an \emph{ideal} execution in which
a trusted party receives the inputs, evaluates the ideal
functionality, and returns only its specified outputs. A simulator for
a corrupt coalition receives only the information available in the
ideal execution---in particular, the coalition's inputs and prescribed
outputs---and must generate a simulated view indistinguishable from
its real protocol view. The existence of such a simulator means that
\emph{anything learned from the real messages could also have been
  learned from the allowed ideal information}---in other words, the
real protocol reveals no more than the ideal
functionality~\cite{Lindell2017}.

Depending on the setting, indistinguishability may mean equality of
distributions, statistical closeness, or computational
indistinguishability. Equality gives \emph{perfect}, or
\emph{information-theoretic}, security even against unlimited
computation. Statistical security permits a small difference between
the distributions, bounding the distinguishing advantage even for
unlimited computation. Computational security instead considers
efficient adversaries and rests on a hardness assumption. Its
\emph{security parameter}, such as a key length (or a field size for
field-based authentication), controls both cost and protection; a
distinguishing advantage is \emph{negligible} when it decreases faster
than any inverse polynomial as that parameter grows. This paper uses
equality of finite distributions in its core model.

The adversary model determines how the corrupt parties behave. A
\emph{passive} (or honest-but-curious) adversary records and analyzes
the combined view of corrupt parties, but those parties follow the
protocol. A \emph{malicious} (or active) adversary may instead choose
arbitrary inputs, messages, and local computations. Malicious security
therefore combines confidentiality and
integrity~\cite{evans2018pragmatic}.

BDOZ and SPDZ protect integrity using \emph{authenticated
  shares}~\cite{SPDZ1, SPDZ2, BDOZ, 10.1007/978-3-030-68869-1_3}. A
\emph{message authentication code} (MAC) is a tag that binds a value
to secret authentication information; in these protocols, another
party holds the information needed to check a share's tag. A corrupt
party cannot tamper with a share, since it is impossible to produce a
corresponding valid tag. The MACs are linearly homomorphic, so tags
can be updated consistently when shares are added. Before revealing an
authenticated value, the parties check the resulting tag and abort if
it is invalid.

Ideal outputs may \emph{intentionally} reveal information, so MPC
security differs from ordinary noninterference. MPC security requires
only that the protocol's execution reveal nothing \emph{additional}
compared to the ideal functionality. Section \ref{section-model}
formalizes this idea as \emph{noninterference modulo output} and uses
it as a hyperproperty that implies passive simulator
security.

\subsection{Summary of the Theory and Implementation}

Here we summarize the main theoretical developments and results in
the paper and their connections.  In Section \ref{section-lang}
we define the $\minicat$ language model. In Section \ref{section-model}
we articulate our security model as hyperproperties of
noninterference modulo output (Definition \ref{definition-NIMO})
and integrity (Definition \ref{def-integrity}). As has been
shown in previous work \cite{skalka-near-ppdp24, skalka-near-esop25},
noninterference modulo output is sound for passive simulator
security, while the addition of integrity obtains malicious security.
Compared to this prior whole-program analysis, our use of weaker
customized postconditions supports compositional reuse of verified
components. Section \ref{section-examples} quantifies the resulting
performance improvement.
Figure \ref{fig-framework-workflow} summarizes how the languages,
analyses, and guarantees fit together.

\begin{figure*}[t]
\centering
\resizebox{\textwidth}{!}{%
\begin{tikzpicture}[
  font=\sffamily\footnotesize,
  language/.style={
    draw=blue!60!black, fill=blue!8, rounded corners=2pt,
    align=center, text width=27mm, minimum height=9mm, inner sep=3pt
  },
  analysis/.style={
    draw=orange!70!black, fill=orange!10, rounded corners=2pt,
    align=center, text width=28mm, minimum height=9mm, inner sep=3pt
  },
  guarantee/.style={
    draw=green!50!black, fill=green!9, rounded corners=2pt,
    align=center, text width=28mm, minimum height=9mm, inner sep=3pt
  },
  implementation/.style={
    draw=black!55, fill=black!4, rounded corners=2pt,
    align=center, text width=27mm, minimum height=9mm, inner sep=3pt
  },
  group/.style={
    draw=black!45, dashed, rounded corners=3pt, inner sep=6pt
  },
  flow/.style={
    -{Latex[length=2mm]}, thick, draw=black!70
  },
  link/.style={
    thick, draw=black!70
  },
  edge label/.style={
    font=\scriptsize, inner sep=0pt, yshift=4pt, text=black!80
  }
]

\node[language] (prelude) at (0,4.1)
  {\textbf{Prelude}\\language and annotations\\Sec.~\ref{section-metalang}};
\node[analysis] (prelude-hoare) at (0,2.4)
  {Automated abstract\\Hoare logic\\Sec.~\ref{section-metahtrip}};
\node[analysis] (prelude-types) at (0,0.7)
  {Automated share\\type checking\\Sec.~\ref{sec:metapr-type-check}};
\node[analysis, dashed] (prelude-integrity) at (0,-1.0)
  {Automated integrity\\type checking\\Prior work~\cite{skalka-near-esop25}};

\node[language] (overture) at (4.2,4.1)
  {\textbf{Overture}\\protocol semantics\\Sec.~\ref{section-lang}};
\node[analysis] (overture-hoare) at (4.2,2.4)
  {SMT interpretation\\and Hoare logic\\Sec.~\ref{section-smt}};
\node[analysis] (share-types) at (4.2,0.7)
  {Share types\\Sec.~\ref{section-sty}};
\node[analysis] (integrity-types) at (4.2,-1.0)
  {Integrity types\\Sec.~\ref{section-ipj}};

\node[guarantee] (correctness) at (8.4,2.4)
  {Functional correctness\\of pre/postconditions};
\node[guarantee] (passive) at (8.4,0.7)
  {Noninterference modulo output\\and passive security\\Sec.~\ref{section-model}};
\node[guarantee] (integrity) at (8.4,-1.0)
  {Integrity\\Sec.~\ref{section-model}};
\node[guarantee] (malicious) at (12.3,-0.15)
  {Malicious security\\Sec.~\ref{section-model}};

\node[group, fit=(prelude)(prelude-hoare)(prelude-types)(prelude-integrity),
  label={[font=\bfseries\small,text=black!70]above:Prelude language}] {};
\node[group, fit=(overture)(overture-hoare)(share-types)(integrity-types),
  label={[font=\bfseries\small,text=black!70]above:Overture language}] {};
\node[group, fit=(correctness)(passive)(integrity)(malicious),
  label={[font=\bfseries\small,text=black!70]above:Guarantees}] {};

\draw[flow] (prelude) -- node[edge label] {evaluates to} (overture);
\draw[flow] (prelude-hoare) -- node[edge label] {instantiates} (overture-hoare);
\draw[flow] (prelude-types) -- node[edge label] {preserves} (share-types);
\draw[flow] (prelude-integrity) -- node[edge label] {preserves} (integrity-types);
\draw[flow] (overture-hoare) -- node[edge label] {proves} (correctness);
\draw[flow] (share-types) -- node[edge label] {enforces} (passive);
\draw[flow] (integrity-types) -- node[edge label] {enforces} (integrity);

\coordinate (security-join) at (10.35,-0.15);
\draw[link] (passive.east) -| (security-join);
\draw[link] (integrity.east) -| (security-join);
\draw[flow] (security-join) -- node[edge label] {} (malicious.west);




\end{tikzpicture}
}
\caption{Relationships among the paper's formal components and security
guarantees. Blue boxes denote languages and their semantics, orange
boxes denote verification analyses, and green boxes denote the
correctness and security guarantees they establish. Dashed outlines
group components by abstraction level.}
\label{fig-framework-workflow}
\end{figure*}

In Section \ref{section-smt} we develop an SMT-based Hoare logic
endowed with notions of validity and derivability, where the latter is
shown to imply the former (Lemma \ref{lemma-hoare-sound}). This is an
important foundation both for our verification method and for notions
of constraint correctness in type systems.  In Section
\ref{section-sty} we develop a novel share type system, enabling
security verification for protocols that use secret sharing as a
building block; our approach works for any secret sharing scheme and
enforces noninterference modulo output and therefore passive security
(Theorem \ref{theorem-sty}). The share type system also supports
verification of protocols with respect to ideal functionality.  In
Section \ref{section-ipj}, we develop a different type system that
enforces integrity (Theorem \ref{theorem-ipj}). Thus both share and
integrity typability of a $\minicat$ protocol imply that it is
malicious secure.

In Section \ref{section-metalang} we define the $\metaprot$ syntax and
semantics, including pre- and postcondition annotations on functions
(as in the example in Figure \ref{fig-adder}). Intuitively, any
$\metaprot$ program evaluates to a $\minicat$ protocol, so we refer to
it as a metaprogramming language. We define an automated Hoare logic
for $\minicat$ programs, to verify pre- and postcondition
annotations. This automated Hoare logic is shown to be sound in the
sense that pre- and postconditions will be valid for any function
applications (Lemma \ref{lemma-metahtrip}). An implication of this
(Corollary \ref{cor-metahtrip}) means that function pre- and
postconditions are compositional and need only be verified once,
achieving greater performance efficiency. In Section
\ref{sec:metapr-type-check}, we develop an automated type system for
$\metaprot$ that guarantees that any well-typed $\metaprot$ program
evaluates to a share-typable $\minicat$ protocol (Theorem
\ref{theorem-metaty}) and hence its passive security.

In Section \ref{section-examples} we develop and discuss extended
example applications of our type analyses to real protocols including
the Goldreich-Micali-Wigderson (GMW), Ben-Or-Goldwasser-Widgerson
(BGW), and Bendlin-Damgard-Orland-Zakarias (BDOZ). We also summarize
our implementation of the $\metaprot$ Hoare logic and evaluate
verification of both binary and arithmetic gate and circuit
properties. Our implementation integrates with cvc5 SMT to support
efficient verification, including Satisfiability Modulo Finite Fields
(SMFF) \cite{SMFF} for binary and arithmetic circuits, and the
Bit-Vector theory for binary circuits. The $\metaprot$ Hoare logic
analysis has been fully implemented in the current work in an
Antlr/Scala/cvc5-based workflow.

The implementation that produced our experimental results is available in our
companion
repository\footnote{\url{https://github.com/uvm-plaid/prelude-public}}. It uses
ANTLR4 to parse Prelude the source code, Scala to convert the Prelude abstract
syntax tree into cvc5 constraints, and the cvc5 solver to prove that for all
functions the precondition and function body meet the postcondition. When
built, the tool runs as a single Java archive (JAR). The repository includes
installation instructions, source code, further documentation, and a variety of
Prelude examples.

\subsection{Related Work and Our Contributions}
\label{section-related-work}

\compwrapfig

The goal of the $\metaprot/\minicat$ framework is to automate
correctness and security verification of low-level protocols.  Other
prior work has considered automated verification of high-level
protocols and manual verification of low-level protocols.  We
summarize this comparison in Figure \ref{fig-comp-wrap}.  In this
Figure we consider related systems in several dimensions.  This
includes whether the language system provides \emph{probabilistic
features} (e.g., sampling) that are fundamental to low-level
protocols. It also includes whether existing verification methods
include \emph{probabilistic conditioning} that is essential to
condition analysis of information leakage inherent in ideal
functionalities. We consider whether the system is designed for
\emph{low-level} protocols, as opposed to being a framework for
integrating them in applications (e.g., during compilation). We
consider whether the system has integrated methods and/or mechanisms
for proving \emph{passive} and \emph{malicious security}, and whether
sound \emph{hyperproperties} have been formulated to support
enforcement of security via program analysis such as type systems.
Finally, we consider whether proof/enforcement mechanisms are
\emph{automated} in the system, vs., e.g., realized in proof
assistants.

The most closely related works are previous versions of the
$\metaprot/\minicat$ framework
\cite{skalka-near-ppdp24,skalka-near-esop25}. In the earliest work
\cite{skalka-near-ppdp24} foundations for $\metaprot/\minicat$ are
introduced including a verification algorithm. However, this algorithm
only works for $\mathbb{F}_2$-- e.g., binary circuits-- and the
datalog-based automation is a brute-force whole program analysis.
Subsequent work \cite{skalka-near-esop25} extends the language model
to arbitrary prime fields and introduces the use of SMT to optimize
verification in prime fields, though the automation is still a
whole-program analysis. The type systems presented in these works
support a gradual release property, weaker than noninterference
modulo output, and integrity properties. The current work provides
several advancements of $\metaprot/\minicat$:
\begin{itemize}
\item We develop a compositional Hoare logic that provides a foundation for
  compositional verification and type safety results.
\item We define a share type system that enforces noninterference modulo
  output as a result of type safety. This implies that share
  types enforce passive security and, together with integrity
  typability, imply malicious security.
\item Share types also support verification of ideal functionality
  specification of protocol outputs.
\item We extend the share type system to $\metaprot$, along with
  a compositional language of pre- and postconditions for functions.
  Type safety in $\metaprot$ ensures that share typing in programs
  is preserved by evaluation to $\minicat$ protocols.
\item The extensions in this work additionally enable the type-based
enforcement of noninterference modulo output for more complex secret
sharing protocols, e.g., that leverage Beaver triples.
\item We implement and evaluate the compositional verification of
  pre- and postconditions by considering examples of large circuits,
  specifically Bristol circuits. This evaluation demonstrates
  empirical advantages of compositional vs. whole-program verification.
\end{itemize}

The SAW/Cryptol framework \cite{10.1007/978-3-319-48869-1_5} supports
definitions of cryptographic protocols and verification of functional
correctness using SMT. The Swanky extension \cite{swanky} is
especially related to our work, since it is intended to apply to MPC
protocols in binary and prime fields. For example, Swanky has been
used in the implementation of ZK systems
\cite{cryptoeprint:2020/1410}. However, while the goals of our system
are overlapping with regard to verification of protocol correctness,
we are also focused on type systems for enforcement of security
properties, incorporating reasoning about probabilistic behavior of
protocols. SAW/Cryptol and Swanky, by contrast, are not designed to
verify security properties of protocols, since these are inherently
probabilistic. Our language model also more readily accommodates
primitives for message passing between multiple parties. But as we
discuss in Section \ref{section-conclusion}, leveraging Swanky to
accelerate verification is an interesting direction for future work.

Other low-level languages with probabilistic features for privacy
preserving protocols have been proposed. The
$\lambda_{\mathrm{obliv}}$ language \cite{darais2019language} uses a
type system to automatically enforce so-called probabilistic trace
obliviousness.  But similar to previous work on oblivious data
structures \cite{10.1145/3498713}, obliviousness is related to pure
noninterference, not the conditional form related to passive MPC
security (Definition \ref{definition-NIMO}). A Haskell-based security
type system has been defined as an embedded DSL \cite{6266151} that
enforces a version of noninterference that is sound for passive
security, but does not consider malicious security. Properties of
real/ideal passive and malicious security for a probabilistic language
have been (manually) formulated in EasyCrypt \cite{8429300}.

Several high-level languages have been developed for writing MPC
applications. Previous work on analysis for the
SecreC language \cite{almeida2018enforcing,10.1145/2637113.2637119} is
concerned with properties of complex MPC circuits, in particular a
user-friendly specification and automated enforcement of
declassification bounds in programs that use MPC in subprograms. The
Wys$^\star$ language \cite{wysstar}, based on Wysteria
\cite{rastogi2014wysteria}, has similar goals and includes a
trace-based semantics for reasoning about the interactions of MPC
protocols. Their compiler also guarantees that underlying
multi-threaded protocols enforce the single-threaded source language
semantics. These two lines of work were focused on passive
security. The Viaduct language \cite{10.1145/3453483.3454074} has a
well-developed information flow type system that automatically
enforces both confidentiality and integrity through hyperproperties
such as robust declassification, in addition to rigorous compilation
guarantees through orchestration \cite{viaduct-UC}. However, these
high level languages lack probabilistic features and other
abstractions of low-level protocols, the implementation and security
of which are typically assumed as a selection of library components.

Program logics for probabilistic languages and specifically reasoning
about properties such as joint probabilistic independence is also
important related work. Probabilistic Separation Logic (PSL)
\cite{barthe2019probabilistic} develops a logical framework for
reasoning about probabilistic independence (aka separation) in
programs, and they consider several (hyper)properties, such as perfect
secrecy of one-time-pads and indistinguishability in secret sharing,
that are critical to MPC. Lilac \cite{li2023lilac} extends this line
of work with formalisms for conditional independence which is 
also particularly important for MPC \cite{skalka-near-ppdp24}.

\section{$\minicat$ Syntax and Operational Semantics}
\label{section-lang}

The $\minifed$ language establishes a basic model of synchronous
protocols between a federation of \emph{clients} exchanging values in
a finite field. Synchronous means that the sender and receiver in each
communication ``block'' until data is exchanged. This is consistent
with standard models of MPC and real protocols \cite{evans2018pragmatic}.
In order to align with existing SMT theories
\cite{SMFF} we restrict our consideration to prime fields. We identify
clients by natural numbers and federations- finite sets of clients-
are always given statically.  Our threat model assumes a partition of
the federation into \emph{honest} $H$ and \emph{corrupt} $C$
subsets. We model probabilistic programming via a random tape
semantics. That is, we will assume that programs can make reference to
values chosen from a uniform random distribution defined in the
initial program memory.  Programs aka protocols execute
deterministically given the random tape.

\subsection{Syntax}

\minicatsyntaxfig

The syntax of $\minifed$, defined in Figure \ref{fig-minicat-syntax},
includes values $v$ and standard operations of addition, subtraction,
and multiplication in a finite field $\mathbb{F}_p$ where $p$ is some
prime. Protocols are given input secret values $\secret{w}$ as well
as random samples $\flip{w}$ on the input tape, implemented using a
\emph{memory} as described below (Section
\ref{section-lang-semantics}) where $w$ is a distinguishing 
identifier string. Protocols are sequences of assignment commands of three
different forms:
\begin{itemize}
\item $\eassign{\mesg{w}}{\cid_2}{\be}{\cid_1}$: This
  is a unicast \emph{message send} where expression $\be$ is computed
  by client $\cid_1$ and sent to client $\cid_2$ as message
  $\mesg{w}$.
\item $\reveal{w}{\be}{\cid}$: This
  is a \emph{public reveal} where expression $\be$ is computed
  by client $\cid$ and broadcast to the federation, typically
  to communicate intermediate results for use in final output
  computations.
\item $\pubout{\cid}{\be}{\cid}$: This
  is an \emph{output} where expression $\be$ is computed
  by client $\cid$ and reported as its output. As a
  sanity condition we disallow commands
  $\pubout{\cid_1}{\be}{\cid_2}$ where $\cid_1\ne\cid_2$.
\end{itemize}
The distinction between
messages and broadcast public reveal is consistent with previous
formulations, e.g., \cite{6266151}. While this distinction is important,
it will frequently be helpful to refer to located variables in general
so we collect them in a syntactic sort $x$. To help identify and distinguish
between collections of variables in protocols we introduce the
following notation.
\begin{definition}
We let $x$ range over \emph{located variables} which are identifiers where
client ownership is specified- e.g.,
$\elab{\mesg{\mathit{foo}}}{\cid}$ is a message $\mathit{foo}$ that
was sent to $\cid$. We let $X$ range over sets of variables, and more
specifically, $S$ ranges over sets of secret variables
$\elab{\secret{w}}{\cid}$, $R$ ranges over sets of random variables
$\elab{\flip{w}}{\cid}$, $M$ ranges over sets of message variables
$\elab{\mesg{w}}{\cid}$, $P$ ranges over sets of public variables
$\rvl{w}$, and $O$ ranges over sets of output variables $\out{\cid}$.
Given a program $\prog$, we write $\iov(\prog)$ to denote the
particular set $S \cup M \cup P \cup O$ of variables in $\prog$ and
$\secrets(\prog)$ to denote $S$, and we write $\flips(\prog)$ to
denote the particular set $R$ of random samplings in $\prog$. We write
$\vars(\prog)$ to denote $\iov(\prog) \cup \flips(\prog)$. For any set
of variables $X$ and clients $I$, we write $X_I$ to denote the subset
of $X$ owned by any client $\cid \in I$, in particular we write $X_H$
and $X_C$ to denote the subsets belonging to honest and corrupt
parties, respectively.
\end{definition}
In all cases we disallow overwriting of variables, since assignment is
intended to represent distinct message sends. An example of a
$\minicat$ protocol, for additive secret sharing, is discussed in
Section \ref{section-smt-toeq}.

\subsection{Semantics}
\label{section-lang-semantics}

\minicatredxfig

\emph{Memories} are fundamental to the semantics of $\fedcat$ and
provide random tape and secret inputs to protocols, and also record
message sends, public broadcast, and client outputs.
\begin{definition}
Memories $\store$ are finite (partial) mappings from variables $x$ to
values $v \in \mathbb{F}_p$.  The \emph{domain} of a memory is written
$\dom(\store)$ and is the finite set of variables on which the memory
is defined.  We write $\store\{ x \mapsto v\}$ for
$x\not\in\dom(\store)$ to denote the memory $\store'$ such that
$\store'(x) = v$ and otherwise $\store'(y) = \store(y)$ for all $y \in
\dom(\store)$. We write $\store \subseteq \store'$ iff $\dom(\store)
\subseteq \dom(\store')$ and $\store(x) = \store'(x)$ for all $x \in
\dom(\store)$. Given any $\store$ and $\store'$ with $\store(x) =
\store'(x)$ for all $x \in \dom(\store) \cap \dom(\store')$, we write
$\store \uplus \store'$ to denote the memory with domain $X =
\dom(\store) \cup \dom(\store')$ such that:
$$
\forall x \in X .
(\store \uplus \store')(x) =
\begin{cases} \store(x) \text{\ if\ } x\in\dom(\store) \\ \store'(x) \text{\ otherwise\ }\end{cases} 
$$
\end{definition}
In our subsequent presentation we will often want to consider arbitrary
memories that range over particular variables and to restrict
memories to particular subsets of their domain:
\begin{definition}
  Given a set of variables $X$ and memory $\store$, we write
  $\store_X$ to denote the memory with $\dom(\store_X) = X$ and
  $\store_X(x) = \store(x)$ for all $x \in X$. We define $\mems(X)$ as
  the set of all memories with domain $X$:
  $$
  \mems(X) \defeq \{ \store \mid \dom(\store) = X \}
  $$
\end{definition}
So for example, the set of all random tapes for
a protocol $\prog$ is $\mems(\flips(\prog))$, and the memory $\store_{\secrets(\prog)}$
is $\store$ restricted to the secrets in $\prog$.

Given a variable-free expression $\be$, we write $\cod{\be}$ to denote
the standard interpretation of $\be$ in the arithmetic field
$\mathbb{F}_{p}$. With the introduction of variables to expressions,
we need to interpret variables with respect to a specific memory, and
all variables used in an expression must belong to a specified client.
Thus, we denote interpretation of expressions $\be$ computed on a
client $\cid$ as $\lcod{\store,\be}{\cid}$. This interpretation is
defined in Figure \ref{fig-minicat-redx}, along with the big-step
reduction relation $\redx$ to evaluate commands. Reduction is a mapping
from \emph{configurations} $(\store,\prog)$ to final memories
where all three command forms- message send, broadcast, and
output- are implemented as updates to the memory $\store$.

\paragraph{Oblivious Transfer} A passive secure oblivious transfer (OT) protocol
based on previous work \cite{barthe2019probabilistic} can be defined
in $\minicat$, albeit assuming some shared randomness. Alternatively,
a simple passive secure OT can be defined with the addition of public
key cryptography as a primitive. But given the diversity of approaches
to OT, we instead assume that OT is abstract with respect to its
implementation. In particular, we do not define a semantics of OT in
Figure \ref{fig-minicat-redx}. However we will be able to define a
formal specification of OT-- we return to this in Section
\ref{section-smt}. Informally, any use is of the form $\mx{w}{\cid_1}
:= \elab{\OT{\be_1}{\cid_1}{\be_2}{\be_3}}{\cid_2}$ -- given a
\emph{choice bit} $\be_1$ provided by a receiver $\cid_1$, the sender
$\cid_2$ sends either $\be_2$ or $\be_3$. Critically, $\cid_2$ learns
nothing about $\be_1$ and $\cid_1$ learns nothing about the unselected
value.

\paragraph{Sanity conditions on protocols} To simplify
the presentation we will disregard protocols that are semantically
ill-defined. That is, we generally assume that for any
program $\prog$ there exists $\store$
with $(\store, \prog) \redxs \store$. This rules out
use of undefined variables and, in the adversarial semantics,
programs consisting of just a failing $\ttt{assert}$.

\subsection{$\minicat$ Adversarial Semantics}
  
\minicataredxfig

In the malicious model we assume that corrupt clients are in
thrall to an adversary $\adversary$ that does not necessarily follow
the rules of the protocol.  We model this by positing a $\arewrite$
function which is given a corrupt memory $\store_C$ and expression
$\be$, and returns a rewritten expression that can be interpreted to
yield a corrupt input. We define the evaluation relation that
incorporates the adversary in Figure \ref{fig-minicat-aredx}.

A key technical distinction of the malicious setting is that it
typically incorporates ``abort''. Honest parties implement strategies
to detect rule-breaking-- aka \emph{cheating}-- by using, e.g.,
message authentication codes with semi-homomorphic properties as in
BDOZ/SPDZ \cite{10.1007/978-3-030-68869-1_3}. If cheating is detected,
the protocol is aborted. Note that abort is an explicit action, not
semantic ill-formedness (disallowed by sanity conditions). To model
this, we extend $\minifed$ with an \ttt{assert} command:
$$
    \begin{array}{rcl@{\hspace{2mm}}r}
      \prog &::=& \cdots \mid \elab{\assert{\be = \be}}{\cid}
    \end{array}
$$
As we will discuss, this form of asserting equality between expressions
is sufficient to capture MAC checking and support integrity in protocols.
The big-step adversarial semantics relation $\aredx$ is then defined
in Figure \ref{fig-minicat-aredx} as a mapping from configurations to
a memory $\store$ or $\bot$. Note that any corrupted party is free to
ignore any failing $\ttt{assert}$, in addition to rewriting their
own protocol steps.

\subsubsection{Protocol Runs}

Our semantics require that random tapes contain values for all program
values $\elab{\flip{w}}{\cid}$ sampled from a uniform distribution
over $\mathbb{F}_p$. Input memories also contain input secret values
and possibly other initial view elements as a result of
pre-processing, e.g., Beaver triples for efficient multiplication,
and/or MACed share distributions as in BDOZ/SPDZ
\cite{evans2018pragmatic,10.1007/978-3-030-68869-1_3}. We therefore
posit a predicate $\preproc$ on memories that specifies relevant
preconditions, e.g., that it contains Beaver triples with
associated properties, or pre-distributed keys for MAC checking.

We define $\runs(\prog)$ as the set of final memories resulting from
execution of $\prog$ given any initial memory, possibly augmented with
a preprocessed memory (a memory that satisfies a predicate
$\preproc$), and $\aruns(\prog)$ is the analog in the malicious
setting. It is necessary to define $\aruns$ independently since
adversaries can rewrite their own protocol steps and thus affect
protocol semantics. Since we constrain programs to not overwrite any
variable, we are assured that final memories contain both a complete
record of all initial secrets as well as views resulting from
communicated information. This formulation will be fundamental to our
consideration of probability distributions and hyperproperties of
protocols.
\begin{definition}
  \label{def-runs}
  Given $\prog$ with $\secrets(\prog) = S$ and $\flips(\prog) = R$ and
  predicate $\preproc$ on memories, define:
  $$
  \begin{array}{c}
    \runs(\prog) \defeq 
    \{ \store \mid \exists \store_1 \in \mems(S \cup R) . 
    \exists \store_2 . \preproc(\store_2) \wedge
    (\store_1 \uplus \store_2,\prog) \redxs \store \}
  \end{array}
  $$
  And similarly for any adversary $\adversary$ we define adversarial runs
  as follows:
  $$
  \begin{array}{c}
    \aruns(\prog) \defeq 
    \{ \store \mid \exists \store_1 \in \mems(S \cup R) . 
    \exists \store_2 . \preproc(\store_2) \wedge
    (\store_1 \uplus \store_2,\prog) \aredxs \store \}
  \end{array}
  $$
\end{definition}

\section{Security Model}
\label{section-model}

MPC protocols are intended to implement some \emph{ideal
functionality} $\idealf$ with per-client outputs. In the $\minifed$
setting, given a protocol $\prog$ that implements $\idealf$, with
$\iov(\prog) = S \cup M \cup P \cup O$, the domain of $\idealf$ is
$\mems(S)$ and its range is $\mems(O)$. Security in the MPC setting
means that, given $\store \in \mems(S)$, a secure protocol $\prog$
does not reveal any more information about honest secrets $\store_H$
to parties in $C$ beyond what is implicitly declassified by
$\idealf(\sigma)$. Security comes in \emph{passive} and
\emph{malicious} flavors, wherein the adversary either follows the
rules or not, respectively. Characterization of both real world
protocol execution and simulation is defined probabilistically.  In
this work we will focus on the enforcement of hyperproperties of
passive and malicious security developed in related work
\cite{skalka-near-ppdp24}, that were shown to be sound for real/ideal
security properties that are traditionally used for MPC.  As in that
work we use probability mass functions to express joint dependencies
between input and output variables, as a metric of information
leakage.

\subsection{Probability Mass Functions (pmfs)} 

Pmfs map \emph{realizations} of variables in a joint
distribution to values in $[0..1]$. For convenience and following
\cite{barthe2019probabilistic,skalka-near-ppdp24} we use memories to
represent realizations, so for example given a pmf $\pmf$ over
variables $\{ \sx{x}{1}, \mx{y}{2} \}$ we write $\pmf(\{
\elab{\secret{x}}{1} \mapsto 0, \elab{\mesg{y}}{2} \mapsto 1 \})$ to
denote the (joint) probability that $\elab{\secret{x}}{1} = 0 \wedge
\elab{\mesg{y}}{2} = 1$. Recall from Section
\ref{section-lang-semantics} that $\uplus$ denotes the combination of
memories, so for example $\{ \elab{\secret{x}}{1} \mapsto 0\} \uplus
\{\elab{\mesg{y}}{2} \mapsto 1 \} = \{ \elab{\secret{x}}{1} \mapsto 0,
\elab{\mesg{y}}{2} \mapsto 1 \}$.
\begin{definition}
  A \emph{probability mass function} $\pmf$ over random variables $X$
  (written $\rvs(\pmf) \defeq X$) is a function
  with domain $\mems(X)$ and range $\mathbb{R}$ such that:
  $$
  \sum_{\store \in \mems(X)} \pmf(\store) \  = \ 1
  $$
\end{definition}
Now, we can define a notion of marginal and conditional
distributions as follows, which are standard for discrete
probability mass functions. 
\begin{definition}
  Given $\pmf$ with $\rvs(\pmf) = X_2$, the \emph{marginal distribution}
  of variables $X_1 \subseteq X_2$ in $\pmf$ is denoted $\margd{\pmf}{X_1}$ and defined as follows:
  $$
  \forall \store \in \mems(X_1) \quad . \quad \margd{\pmf}{X_1}(\store) \defeq
  \sum_{\store' \in \mems(X_2-X_1)} \pmf(\store \uplus \store')
  $$
\end{definition}

\begin{definition}
  Given $\pmf$, the \emph{conditional distribution}
  of $X_1$ given $X_2$ where $X_1 \cup X_2 = \rvs(\pmf)$ 
  is denoted $\condd{\pmf}{X_1}{X_2}$ and defined as follows:
  $$
  \forall \store \in \mems(X_1 \cup X_2)\ .\ 
  \condd{\pmf}{X_1}{X_2}(\store) \defeq
  \begin{cases}
    \begin{array}{ll}
      0 & \text{\ if\ } \margd{\pmf}{X_2}(\store_{X_2}) = 0\\
      \pmf(\store) / \margd{\pmf}{X_2}(\store_{X_2})\ & \text{\ o.w.}
     \end{array}
  \end{cases}
  $$
\end{definition}
The first case applies when the marginal probability that $X_2$ takes
the value assigned by the restricted memory $\store_{X_2}$ is zero;
without this case, the conditional probability would be undefined.
We also define some convenient syntactic sugaring for application of
pmfs to realizations when their domains don't match.
\begin{definition}
  Given $\pmf$ and $\store$ with $\rvs(\pmf) \subseteq \dom(\store)$,
  $\pmf(\store) \defeq \pmf(\store_{\rvs(\pmf)})$.
\end{definition}
As in previous work \cite{darais2019language,barthe2019probabilistic,skalka-near-ppdp24,li2023lilac}, 
probabilistic independence, aka separation, is an important concept and we adopt the
following standard notation to express conditional and unconditional separation:
\begin{definition}
  \label{def-condsep}
  We write $\condsep{\pmf}{X_1}{X_2}{X_3}$ iff for all $\store \in \mems(X_1 \cup X_2 \cup X_2)$
  we have $\condd{\pmf}{X_2 \cup X_3}{X_1}(\store)
  = \condd{\pmf}{X_2}{X_1}(\store) * \condd{\pmf}{X_3}{X_1}(\store)$. If $X_1$ is empty
  we write $\sep{\pmf}{X_2}{X_3}$.
\end{definition}

\subsection{Basic Distribution of a Protocol}

We treat all elements of $\runs(\prog)$ as equally likely.  This
establishes the basic program distribution that can be marginalized
and conditioned to quantify input/output information dependencies.
\begin{definition}
  \label{def-progtt}
  \label{def-progd}
  \label{definition-progd}
  The \emph{basic distribution of $\prog$}, written $\progtt(\prog)$, is
  defined such that for all $\store \in \mems(\vars(\prog))$:
  $$
  \progtt(\prog)(\store) =  1 / |\runs(\prog)| \ \text{if}\ \store \in \runs(\prog), \text{otherwise}\ 0
  $$
  For any $\adversary$ the basic \emph{$\adversary$ distribution of $\prog$}, written
  $\progtt(\prog,\adversary)$, is
  defined such that for all $\store \in \mems(\vars(\prog))$:
  $$
  \progtt(\prog,\adversary)(\store) =  1 / |\botruns(\prog)| \ \text{if}\ \store \in \botruns(\prog), \text{otherwise}\ 0
  $$
  where $\botruns$ pads out undefined views and outputs with $\bot$:
  $$
  \begin{array}{c}
    \botruns(\prog)\ \defeq\ 
    \{ \store\{ x_1 \mapsto \bot, \ldots, x_n \mapsto \bot \} \mid 
    \store \in \aruns(\prog) \wedge \{ x_1,\ldots,x_n\} = \iov(\prog) - \dom(\store) \}
  \end{array}
  $$
\end{definition}

\subsection{Honest and Corrupt Views}

Information about honest secrets can be revealed to corrupt clients
through messages sent from honest to corrupt clients, and through
publicly broadcast information from honest clients. Dually,
corrupt clients can impact protocol integrity through the messages
sent from corrupt to honest clients, and through publicly broadcast information
from corrupt clients. We call the former \emph{corrupt views}, and
the latter \emph{honest views}. We let $V$ range over sets
of views, i.e., subsets of $M \cup P$.
\begin{definition}[Corrupt and Honest Views]
  We let $V$ range over \emph{views} which are sets of messages
  and reveals. Given a program $\prog$ with $\iov(\prog) = S \cup M \cup P \cup O$,
  define $\views(\prog) \defeq M \cup P$. Letting $I_1$, $I_2$ be
  a partitioning of the federation clients, define:
  $$
  \begin{array}{lcl}
    V_{I_1 \rhd I_2} & \defeq
        & \{\ \rvl{w} \mid\ \reveal{w}{\be}{\cid} \in \prog \wedge \cid \in I_1 \ \}\ \cup\ 
    \{\ \elab{\mesg{w}}{\cid}\ \mid\  \eassign{\mesg{w}}{\cid}{\be}{\cid'} \in
           \prog \wedge \cid \in I_2 \wedge \cid' \in I_1 \ \} 
  \end{array}
  $$
  Given a partitioning of the federation $H,C$, we call $\houtputs$
  \emph{corrupt views} and $\cinputs$ \emph{honest views}.
\end{definition}

\subsection{Hyperproperties of Confidentiality and Integrity}

In this section we restate hyperproperties developed in previous work
\cite{skalka-near-ppdp24}. The reader is referred to that paper for a more
thorough discussion of their relation to real/ideal security. In this
paper, we leverage that relation to establish security properties in
our type system.

\subsubsection{Confidentiality}

Since MPC protocols release some information about secrets through
their outputs-- that equal the ideal functionality
$\idealf$ applied to all secret inputs-- they do not enjoy strict
noninterference. Public reveals and protocol outputs are fundamentally
forms of declassification. But the following property of probabilistic
noninterference \emph{conditioned on output} is sound for passive
security. Intuitively, in this definition we can think of
$\condd{\progtt(\prog)}{S_H}{S_C \cup O}$ as the distribution of
honest inputs $S_H$ given the adversarial knowledge of corrupt inputs
$S_C$ and public outputs $O$. This establishes an upper bound on
information that can be leaked to the adversary as determined by the
ideal functionality under the assumption of corrupted parties-- or,
put another way, the adversary's ability to guess $S_H$ knowing $O$
and $S_C$.  We can think of $\condd{\progtt(\prog)}{S_H}{S_C \cup O
  \cup \houtputs}$ as the distribution of $S_H$ given the same corrupt
inputs and public outputs, and additionally the adversarial views
$\houtputs$ communicated during runs of $\prog$. If these
distributions are the same, we can deduce that the protocol reveals no
more information about $S_H$ than the corrupt inputs $S_C$ and public
outputs $O$ alone. More details and a proof that this property is
sound for passive simulator security are reported in previous work
\cite{skalka-near-ppdp24}.
\begin{definition}[Noninterference modulo output]
  \label{definition-NIMO}
  We say that a program $\prog$ with $\iov(\prog) = S \cup V \cup O$
  satisfies \emph{noninterference modulo output}
  iff for all $H$ and $C$ and $\store \in \runs(\prog)$ 
  we have:
  $$
  \condd{\progtt(\prog)}{S_H}{S_C \cup O}(\store) = \condd{\progtt(\prog)}{S_H}{S_C \cup O \cup \houtputs}(\store)
 $$
\end{definition}

In practice, MPC protocols typically satisfy a \emph{gradual release}
property \cite{sabelfeld2009declassification}, where the distribution
of messages exchanged during the protocol are independent of the
distribution of input secrets, with only final reveal and output
stages having dependence on input secrets. Referring to Example
\ref{section-smt-example}, the messages exchanged during the protocol
in this case are all in independent uniform distributions, while the
broadcast reveals $\rvl{1-3}$ used to reconstruct the outputs
$\out{1-3}$ are dependent. Gradual release is strictly weaker than
noninterference. It says that intentional reveals and outputs
(declassifications) are the only point of information escape in the
protocol, but does not specify what information may be revealed.
\begin{definition}
  Given $H,C$, a protocol $\prog$ with $\iov(\prog) = S \cup M \cup P \cup O$
  satisfies \emph{gradual release} iff
  $\sep{\progtt(\prog)}{M_C}{S_H}$.
\end{definition}

Noninterference modulo output directly implies passive simulator
security~\cite{skalka-near-ppdp24}, so a sound analysis for
noninterference modulo output is also a sound analysis for passive
simulator security. However, such an analysis is not necessarily
complete: there exist MPC protocols that satisfy passive simulator
security, but which do not satisfy noninterference modulo output. In
particular, our definition of noninterference modulo output requires
equality of the two distributions, which does not hold in protocols
for which indistinguishability relies on computational hardness
assumptions (this includes protocols based on public-key cryptography,
such as many protocols for oblivious transfer). Our approach is not
capable of proving security for these protocols.

\subsubsection{Integrity}

\emph{Integrity} is an important hyperproperty in security models that
admit malicious adversaries. Consistent with formulations in
deterministic settings, we define protocol integrity as the
preservation of high equivalence (of secrets and views). Intuitively,
this property says that any adversarial strategy either ``mimics'' a
passive strategy with some choice of inputs or causes an abort.
\begin{definition}[Integrity]
  \label{def-integrity}
  We say that a protocol $\prog$ with $\iov(\prog) = S \cup V \cup O$ has
  \emph{integrity} iff for all $H$, $C$, and $\adversary$,
  if $\store \in \aruns(\prog)$ there exists $\store' \in \runs(\prog)$ with
  $\store_{S_H \cup X}' = \store_{S_H \cup X}$ and:
    $$
    \condd{\progtt(\prog,\adversary)}{X}{S_H \cup Y}(\store) =
    \condd{\progtt(\prog)}{X}{S_H \cup Y}(\store')
    $$ 
  where $X \defeq (\houtputs \cup O_H) \cap \dom(\store)$ and $Y = \cinputs \cap \dom(\store)$. 
\end{definition}
Integrity plus noninterference modulo output implies malicious security
in the real/ideal model \cite{skalka-near-ppdp24}, and integrity plus
gradual release is a probabilistic form of robust declassification
\cite{sabelfeld2009declassification}.

\section{$\minicat$ SMT-Based Hoare Logic}
\label{section-smt}

In previous work it was shown that the semantics of $\minicat$ in
$\mathbb{F}_2$ can be implemented with Datalog
\cite{skalka-near-ppdp24}, which is a form of constraint
programming. In this paper we explore using SMT instead which
has several advantages. SMT is a highly studied and optimized
system with a variety of theories. In particular, we can leverage
the cvc5 Bit-Vector theory for verification in $\fieldp{2}$
and extend verification to abritrary prime fields using 
Satisfiability Modulo Finite Fields \cite{SMFF}. As we will
show in subsequent sections, this interpretation supports automated
enforcement of correctness guarantees as well as static type analysis
for enforcing confidentiality and integrity properties.  In this
section we formulate a Hoare Logic for $\minicat$ as a foundation for
protocol verification.

\subsection{Constraint Satisfiability and Entailment}

We introduce the following syntax of SMT-style constraints for
$\minicat$. The language of constraint terms includes field values
and operations ($+$, $-$, and $*$), and also bit operations
(construction $\bv{\cdot}$, summation $\oplus$, and subtraction
$\ominus$). We note that bit vector operations are included \emph{only}
in the constraint language-- they allow us to leverage cvc5 bit vector
operations and constraints, while protocols (circuits) themselves
still manipulate individual bits. We illustrate how this can be used for
verification in Section \ref{section-examples}.
$$
\begin{array}{rclr}
  \phi &::=& v \mid x \mid \phi \fplus \phi \mid \phi \fminus \phi \mid \phi \ftimes \phi
     \mid \bv{\phi,\ldots,\phi} \mid \phi \oplus \phi \mid \phi \ominus \phi & \textit{constraint terms}\\ 
     \eqs &::=& \phi \eop \phi \mid \eqs \wedge \eqs \mid \eqs \vee \eqs \mid \neg\eqs \mid \eqs \impl \eqs
        & \textit{constraints} 
\end{array}
$$
%
While constraint terms $\phi$ are similar to expressions $\be$,
in contrast they can ``cross party lines''. This is needed
to express correctness properties-- for example, in the
Goldreich-Micali-Wigderson (GMW) protocol wire values in circuits are
represented by reconstructive shares \cite{evans2018pragmatic}.  If by
convention shares of values $n$ are represented by $\mesg{n}$ on
clients, then assuming two clients $\setit{1,2}$ the reconstructed
value can be expressed as $\mx{n}{1} + \mx{n}{2}$.  So, while summing
values across clients is disallowed in $\minicat$ protocols, we can
express properties of shared values via constraints.

%
%
%

This language of constraints has an obvious direct interpretation in
SMT, in particular the cvc5 Bit-Vector theory in the case of $\fieldp{2}$ and
Satisfiability Modulo Finite Fields \cite{SMFF} for other prime fields. We can
leverage this to implement a critical \emph{entailment} property,
written $\eqs_1 \models \eqs_2$.  Our entailment relation is based on
satisfiability in a standard sense, where we represent models as
memories $\store$ (mappings from variables to values).
\begin{definition}
  We write $\store \models \eqs$ iff $\store$ satisfies, aka is a model
  of, $\eqs$. We write $\eqs_1 \models
  \eqs_2$ iff  $\store \models E_1$ implies $\store \models
  E_2$ for all $\store$, and note that this relation is a preorder.
\end{definition}
Given this definition, the following Theorem is well-known and a fundamental
technique in SMT to implement our (common) notion of entailment. 
\begin{theorem}
  $\eqs_1 \models \eqs_2$ iff $\eqs_1 \wedge \neg\eqs_2$ is
  not satisfiable.
\end{theorem}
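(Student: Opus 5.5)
The plan is to prove both directions directly from the definition of $\eqs_1 \models \eqs_2$, using only the Tarskian satisfaction clauses for the connectives. The one semantic fact needed is the negation clause: for any memory $\store$ whose domain covers $\vars(\eqs_1) \cup \vars(\eqs_2)$, we have $\store \models \neg\eqs$ iff $\store \not\models \eqs$, and $\store \models \eqs \wedge \eqs'$ iff $\store \models \eqs$ and $\store \models \eqs'$. Throughout, ``satisfiable'' means that some memory $\store$ satisfies the constraint.

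For the forward direction, I would assume $\eqs_1 \models \eqs_2$ and suppose, for contradiction, that some $\store$ satisfies $\eqs_1 \wedge \neg\eqs_2$. By the conjunction clause $\store \models \eqs_1$, and by the negation clause $\store \not\models \eqs_2$. But the entailment gives $\store \models \eqs_2$, a contradiction, so $\eqs_1 \wedge \neg\eqs_2$ is unsatisfiable.

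For the converse, I would argue by contraposition. Suppose $\eqs_1 \not\models \eqs_2$. Then there is a witness $\store$ with $\store \models \eqs_1$ and $\store \not\models \eqs_2$. By the negation clause $\store \models \neg\eqs_2$, so $\store \models \eqs_1 \wedge \neg\eqs_2$, which is therefore satisfiable.

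The only delicate point is the partiality of memories. If $\store$ could leave some variable of $\eqs_2$ undefined, then ``$\store \not\models \eqs_2$'' and ``$\store \models \neg\eqs_2$'' might come apart. I would handle this by fixing the convention, implicit in the SMT reading, that models range over memories total on the free variables of the constraints in question. If needed, any witness can be extended to such a memory. That extension does not change the truth value of $\eqs_1$ or $\eqs_2$ when their variables were already defined, and the convention simply declares the relevant memories total. With that convention in place, both directions are immediate.
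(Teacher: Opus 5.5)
Your proof is correct, and it is the standard argument: the paper gives no proof at all, only calling the theorem well-known, and your two directions from the conjunction and classical negation clauses are exactly the fact it takes for granted. Your caveat about partial memories is needed and not a formality: the paper never fixes satisfaction for memories that leave variables undefined, and under the reading where $\store \models \eqs$ requires $\vars(\eqs) \subseteq \dom(\store)$ the equivalence fails (for $\eqs_1 = (x \eop x)$ and $\eqs_2 = (y \eop y)$, the memory $\setit{x \mapsto 0}$ witnesses $\eqs_1 \not\models \eqs_2$, yet $\eqs_1 \wedge \neg\eqs_2$ is unsatisfiable), so restricting models to memories total on the relevant variables is what makes the statement true.
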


\subsection{Programs as Constraint Systems}
\label{section-smt-toeq}

\toeqfig

A central idea of our approach is that we can interpret any protocol
$\prog$ as a set of equality constraints (denoted $\toeq{\prog}$) and use an SMT
solver to verify properties relevant to correctness, confidentiality,
and integrity. Further, we can leverage entailment for tractability--
we can use annotations to obtain a weakened precondition for relevant properties.
That is, given $\prog$, program annotations or other cues can be used
to find a minimal $\eqs$ with $\toeq{\prog} \models \eqs$ for verifying
correctness and security.

The mapping $\toeq{\cdot}$ from programs $\prog$ to constraints is
defined in Figure \ref{fig-toeq}. The interpretation of OT is omitted
from this figure which is general. For $\mathbb{F}_2$ (using 
$\neg$ to denote negation) the interpretation is:
\begin{mathpar}
  \toeq{\elab{\OT{\be_1}{\cid_1}{\be_2}{\be_3}}{\cid_2}} =
  (\toeq{\elab{\be_1}{\cid_1}} \ftimes \toeq{\elab{\be_3}{\cid_2}}) \fplus
  (\neg\toeq{\elab{\be_1}{\cid_1}} \ftimes \toeq{\elab{\be_2}{\cid_2}}) 
\end{mathpar}
In general we will assume that any preprocessing predicate is defined
via a constraint $\eqspre$, i.e., $\preproc(\store) \iff \store
\models \eqspre$ for all $\store$. Intuitively, a protocol satisfies
a correctness (post)condition $\eqs$ if its SMT interpretation
together with $\eqspre$ entails $\eqs$.


\subsubsection{Example: Correctness of 3-Party Addition}
\label{section-smt-example}
Consider the following 3-party additive secret sharing protocol, where
each party generates two additive secret shares
\cite{evans2018pragmatic} for their respective secrets. Each party
publicly reveals the sums of all their shares, and the sum of those is
the protocol output for each party.  
$$
\begin{array}{c@{\qquad}c}
\begin{array}{lll}
  \elab{\mesg{s1}}{2} &:=& \elab{(\secret{1} \fminus \locflip \fminus \flip{x})}{1} \\ 
  \elab{\mesg{s1}}{3} &:=& \elab{\flip{x}}{1} \\ 
  \elab{\mesg{s2}}{1} &:=& \elab{(\secret{2} \fminus \locflip \fminus \flip{x})}{2} \\ 
  \elab{\mesg{s2}}{3} &:=& \elab{\flip{x}}{2} \\ 
  \elab{\mesg{s3}}{1} &:=& \elab{(\secret{3} \fminus \locflip \fminus \flip{x})}{3} \\ 
  \elab{\mesg{s3}}{2} &:=& \elab{\flip{x}}{3}
\end{array} & 
\begin{array}{lll}
  \rvl{1} &:=& \elab{(\locflip \fplus \mesg{s2} \fplus \mesg{s3})}{1} \\ 
  \rvl{2} &:=& \elab{(\mesg{s1} \fplus \locflip \fplus \mesg{s3})}{2} \\
  \rvl{3} &:=& \elab{(\mesg{s1} \fplus \mesg{s2} \fplus \locflip)}{3} \\
  \out{1} &:=& \elab{(\rvl{1} \fplus \rvl{2} + \rvl{3})}{1}\\
  \out{2} &:=& \elab{(\rvl{1} \fplus \rvl{2} + \rvl{3})}{2}\\
  \out{3} &:=& \elab{(\rvl{1} \fplus \rvl{2} + \rvl{3})}{3}
\end{array}
\end{array}
$$
Letting $\prog$ be this protocol, we can verify correctness
as follows-- that is, the output sum of all shares is the sum of the
three input secrets:
$$
\toeq{\prog}\ \models\ \out{1} \eop \sx{1}{1} \fplus \sx{2}{2} \fplus \sx{3}{3}
$$

\noindent We can establish a logical identity between the models of a program and
its (non-adversarial) runs as follows.   
\begin{theorem}
  \label{theorem-toeq}
  $\sigma \models \eqspre \wedge \toeq{\prog}$ iff $\store \in \runs(\prog)$.
\end{theorem}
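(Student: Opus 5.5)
The plan is to prove both directions by structural induction on $\prog$. The core is an auxiliary lemma relating single executions to models of $\toeq{\prog}$. Here the domain convention matters: since $\runs(\prog)$ consists of final memories over $\vars(\prog)$ together with the preprocessed variables, I read $\store \models \eqspre \wedge \toeq{\prog}$ as ranging over memories whose domain is exactly those variables.

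The first ingredient is an expression lemma, proved by induction on $\be$ using Figures \ref{fig-minicat-redx} and \ref{fig-toeq}. It states that for any $\store$ defining the variables of $\be$ located at $\cid$,
$$\lcod{\store,\be}{\cid} = \store(\toeq{\elab{\be}{\cid}}),$$
where the right side evaluates the constraint term in the model $\store$. The base cases for $\flip{w}$, $\secret{w}$, $\mesg{w}$ and $\rvl{w}$ are immediate, because $\toeq{\cdot}$ maps each of them to the located variable that $\lcod{\cdot}{\cid}$ looks up. The binary operations follow directly from the induction hypotheses, and OT follows from its stated $\fieldp{2}$ interpretation, which I take as its semantics.

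The second ingredient is a program lemma. For any initial $\store_0$ with $(\store_0,\prog) \redx \store$, we have $\store_0 \subseteq \store$ and $\store \models \toeq{\prog}$. Conversely, if $\store \models \toeq{\prog}$ and $\store_0 = \store_{\dom(\store) - A}$, where $A$ is the set of variables assigned in $\prog$, then $(\store_0,\prog) \redx \store$. Both parts go by induction on $\prog$.

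For a single assignment $\xassign{x}{\be}{\cid}$, the result $\extend{\store_0}{x}{\lcod{\store_0,\be}{\cid}}$ satisfies $x \eop \toeq{\elab{\be}{\cid}}$ by the expression lemma, because $x$ does not occur in $\be$ (no overwriting, and no use of undefined variables). The converse also uses the expression lemma: $\store(x)$ is forced to equal the computed value.

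For sequencing $\prog_1;\prog_2$ we chain the two induction hypotheses. In the forward direction, the no-overwriting discipline keeps the intermediate memory $\store_2 \subseteq \store_3$, so the constraint $\toeq{\prog_1}$ remains satisfied by $\store_3$. The backward direction is where I expect the main obstacle. We must split $\store$ into an intermediate memory that restricts $\store$ to everything except the variables assigned by $\prog_2$. We then need that each assigned variable is determined only by earlier variables, so that $\toeq{\prog_1}$ does not mention $\prog_2$'s outputs on the left. This relies on the sanity conditions: variables are used only after they are defined, and each is assigned at most once.

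With the lemma in hand, the theorem follows from Definition \ref{def-runs}. Suppose $\store \in \runs(\prog)$. Then the initial memory $\store_1 \uplus \store_2$ satisfies $\eqspre$ through $\store_2$, and $\preproc(\store_2) \iff \store_2 \models \eqspre$. Since initial memories are preserved in the final one and $\eqspre$ mentions only preprocessed variables, $\store \models \eqspre$; the program lemma gives $\store \models \toeq{\prog}$.

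Conversely, suppose $\store \models \eqspre \wedge \toeq{\prog}$. Take $\store_1 = \store_{S \cup R}$ and take $\store_2$ to be the restriction of $\store$ to the preprocessed variables. Then $\preproc(\store_2)$ holds, and the backward direction of the program lemma gives $(\store_1 \uplus \store_2,\prog) \redx \store$, so $\store \in \runs(\prog)$.
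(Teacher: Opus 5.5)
Your proposal is correct and follows the paper's route. The paper obtains this theorem by instantiating Lemma~\ref{lemma-toeq} at $\eqs = \eqspre$; that lemma says runs from a memory satisfying a precondition $\eqs$ correspond to models of $\eqs \wedge \toeq{\prog}$. Your induction proves essentially that lemma, handling $\eqspre$ separately through preservation of the initial memory.

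Your explicit initial memory $\store_{\dom(\store)-A}$ in the backward direction is the form that direction actually needs, since the lemma as printed fixes the initial memory before ranging over models. When you build the $\runs(\prog)$ witness, make sure $\store_1 \uplus \store_2$ is exactly that restriction. This means your ``preprocessed variables'' must include every unassigned message or reveal the protocol reads, and must exclude every variable the protocol assigns.
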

\noindent This result follows as a consequence of a more general Lemma
that incorporates a notion of preconditions:
\begin{lemma}
  \label{lemma-toeq}
  Given $\eqs, \prog, \store$ with $\store \models \eqs$.
  Then $(\store,\prog) \redxs \store'$ iff $\store' \models
  \eqs \wedge \toeq{\prog}$.
\end{lemma}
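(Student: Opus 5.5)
The plan is a structural induction on $\prog$, proving both directions of the biconditional at once for every $\eqs$ and $\store$ with $\store \models \eqs$. The right-to-left direction needs a framing convention that the statement leaves implicit. The memories $\store'$ must range over extensions of $\store$: $\store \subseteq \store'$ and $\dom(\store') = \dom(\store) \cup \avars(\prog)$. Without this, an arbitrary model of $\eqs \wedge \toeq{\prog}$ need not arise from running $\prog$ on $\store$. I would also use the standing assumptions on protocols: no variable is overwritten, so $\avars(\prog) \cap \dom(\store) = \varnothing$; the sanity conditions hold, so every variable read by $\prog$ is defined when it is read; and $\vars(\eqs) \subseteq \dom(\store)$. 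For $\pskip$, $\toeq{\pskip}$ is taken to be $\eqtrue$.

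The first step is an auxiliary expression lemma, proved by induction on $\be$. If every variable occurring in $\be$, localized at $\cid$, lies in $\dom(\store)$, then $\lcod{\store,\be}{\cid}$ equals the value of the constraint term $\toeq{\elab{\be}{\cid}}$ under $\store$. This value is also unchanged under any extension $\store \subseteq \store''$.
\begin{itemize}
\item Constants are immediate.
\item A variable such as $\mesg{w}$ evaluates to $\store(\elab{\mesg{w}}{\cid})$ on both sides, by Figure \ref{fig-minicat-redx} and Figure \ref{fig-toeq}.
\item Binary operators follow from the induction hypothesis, because the constraint operators $\fplus,\fminus,\ftimes$ are interpreted in the same field $\fieldp{p}$.
\item $\ttt{OT}$ has no operational semantics of its own. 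I would take its constraint interpretation as its defining specification, so it reduces to the binary-operator case.
\end{itemize}

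Then the induction on $\prog$ proceeds as follows.
\begin{itemize}
\item For $\pskip$, the only reduct is $\store$ itself. Conversely, the framing convention forces $\store' = \store$ when $\avars(\pskip)=\varnothing$, and $\store \models \eqs$ is given.
\item For $\xassign{x}{\be}{\cid}$, the unique reduct is $\store' = \extend{\store}{x}{\lcod{\store,\be}{\cid}}$, and $x \notin \dom(\store) \supseteq \vars(\eqs)$.
  \begin{itemize}
  \item Left to right: $\store'$ still satisfies $\eqs$, since it agrees with $\store$ on $\vars(\eqs)$. By the expression lemma it also satisfies $x \eop \toeq{\elab{\be}{\cid}}$.
  \item Right to left: any $\store'$ extending $\store$ by exactly $x$ that satisfies $x \eop \toeq{\elab{\be}{\cid}}$ must assign $x$ the value $\lcod{\store,\be}{\cid}$, again by the expression lemma. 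Hence $\store'$ is that reduct.
  \end{itemize}
\item For $\prog_1;\prog_2$, left to right: the rule gives $(\store,\prog_1)\redx\store_2$ and $(\store_2,\prog_2)\redx\store'$. The induction hypothesis for $\prog_1$ gives $\store_2 \models \eqs \wedge \toeq{\prog_1}$. The induction hypothesis for $\prog_2$, applied with precondition $\eqs \wedge \toeq{\prog_1}$, gives $\store' \models \eqs \wedge \toeq{\prog_1} \wedge \toeq{\prog_2}$.
\item For $\prog_1;\prog_2$, right to left: given such a $\store'$, restrict it to $\store_2 \defeq \store'_{\dom(\store)\cup\avars(\prog_1)}$. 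Then $\store_2$ satisfies $\eqs \wedge \toeq{\prog_1}$, because those constraints mention only variables in this domain; this uses the sanity condition that $\prog_1$ reads no variable assigned later. The induction hypothesis then yields $(\store,\prog_1)\redx\store_2$, and a second application yields $(\store_2,\prog_2)\redx\store'$.
\end{itemize}

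The main obstacle is the right-to-left direction, especially in the sequencing case. There one must show that restricting $\store'$ preserves satisfaction of the intermediate constraints. This is exactly where the framing convention, no-overwriting, and the well-definedness conditions are essential. I would first try to isolate them as a separate "restriction" fact: if $\vars(\eqs') \subseteq X \subseteq \dom(\store')$ and $\store' \models \eqs'$, then $\store'_X \models \eqs'$. I would then discharge the sequencing case by invoking this fact, with a remark that the lemma is to be read modulo this convention.
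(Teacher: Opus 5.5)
The paper states this lemma without proof; it is only invoked to obtain Theorem~\ref{theorem-toeq} and Lemma~\ref{lemma-hoare}. So there is no authors' argument to compare against. Your structural induction on $\prog$, generalized over $\eqs$ and $\store$ and supported by the expression lemma, is the natural proof, and it is sound under the conventions you make explicit.

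You are also right that the right-to-left direction is false as literally stated. For $\prog = \pskip$ and $\eqs = \eqtrue$, every memory satisfies $\eqs \wedge \toeq{\prog}$, yet $\store$ is the only reduct. For nontrivial $\prog$, a model may likewise disagree with $\store$ on the inputs. So your framing convention ($\store \subseteq \store'$ and $\dom(\store') = \dom(\store) \cup \avars(\prog)$) is a necessary repair, not pedantry. It costs nothing for Lemma~\ref{lemma-hoare}, which uses only the left-to-right direction.

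There are two points to tighten.

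\emph{Disjointness in the sequencing case.} The second appeal to the induction hypothesis requires $\avars(\prog_2) \cap \dom(\store_2) = \varnothing$. That means you need $\avars(\prog_1) \cap \avars(\prog_2) = \varnothing$ in addition to the $\avars(\prog) \cap \dom(\store) = \varnothing$ you state. This is part of the paper's no-overwriting convention, and it also follows from the sanity condition that $\prog$ runs from some memory. Still, check it explicitly: right-to-left genuinely fails for two assignments to the same $x$ whose right-hand sides happen to agree.

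\emph{Assertions.} $\toeq{\cdot}$ also interprets $\elab{\assert{\be_1 = \be_2}}{\cid}$, but the non-adversarial $\redx$ of Figure~\ref{fig-minicat-redx} has no rule for it. You can either restrict the lemma to assert-free protocols, as your case analysis implicitly does, or add the evident rule (return $\store$ when both sides evaluate to the same value). In the latter case, the assert case follows by combining your framing argument for $\pskip$ with your expression lemma.
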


\subsection{$\minicat$ Hoare Logic}

To reason effectively about pre and postconditions we introduce
a Hoare logic for $\minicat$,
where $\htrip{\eqs_1}{\prog}{\eqs_2}$ means that $\eqs_1,\eqs_2$ are
the preconditions and postconditions, respectively, of $\prog$.
A semantic notion of validity for Hoare triples is defined as
follows:

\minicathtripfig

\begin{definition}
$\htrip{\eqs_1}{\prog}{\eqs_2}$ is \emph{valid} iff $\eqs_1 \wedge \toeq{\prog} \models \eqs_2$.
\end{definition}

The following Lemma establishes that this is a meaningful (and
standard) definition. It follows by Lemma \ref{lemma-toeq}
and transitivity of $\models$.
\begin{lemma}
  \label{lemma-hoare}
  Given $\eqs_1, \eqs_2, \prog, \store$ with $\store \models \eqs_1$
  and $\eqs_1 \wedge \toeq{\prog} \models \eqs_2$. Then
  $(\store,\prog) \redxs \store'$ implies $\store' \models
  \eqs_2$.
\end{lemma}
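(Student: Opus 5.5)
The plan is to obtain Lemma \ref{lemma-hoare} directly from Lemma \ref{lemma-toeq} together with the definition of entailment and its transitivity, as the paper indicates.

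Fix $\eqs_1, \eqs_2, \prog, \store$ with $\store \models \eqs_1$ and $\eqs_1 \wedge \toeq{\prog} \models \eqs_2$, and suppose $(\store,\prog) \redxs \store'$.

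First, instantiate Lemma \ref{lemma-toeq} with $\eqs := \eqs_1$. Its hypothesis $\store \models \eqs_1$ holds by assumption. Its left-to-right direction then gives $\store' \models \eqs_1 \wedge \toeq{\prog}$.

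Second, unfold the definition of $\eqs_1 \wedge \toeq{\prog} \models \eqs_2$. It says that every model of $\eqs_1 \wedge \toeq{\prog}$ is a model of $\eqs_2$. Applying this to $\store'$ yields $\store' \models \eqs_2$, as required. Phrased via transitivity of the preorder $\models$, the same step reads: $\store'$, viewed as the constraint fixing its assignment, entails $\eqs_1 \wedge \toeq{\prog}$, which entails $\eqs_2$.

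The only point needing care is the meaning of satisfaction when $\dom(\store')$ differs from the set of variables in $\eqs_2$. Since runs never overwrite variables, $\store \subseteq \store'$. Moreover, $\store'$ is defined on every variable of $\eqs_1$ and of $\toeq{\prog}$. I will therefore read $\store' \models \eqs_2$ in the same convention that Lemma \ref{lemma-toeq} uses: satisfaction by the memory on the variables it defines. The entailment then applies verbatim. No further obstacle is expected, since all the semantic content lies in Lemma \ref{lemma-toeq}.
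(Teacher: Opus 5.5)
Your proposal is correct and matches the paper's argument: the paper obtains this lemma in one line from Lemma~\ref{lemma-toeq} and transitivity of $\models$. That is exactly your two steps, namely the forward direction of Lemma~\ref{lemma-toeq} instantiated with $\eqs := \eqs_1$, followed by unfolding $\eqs_1 \wedge \toeq{\prog} \models \eqs_2$ at $\store'$. Your remark on the domain of $\store'$ addresses a technicality that the paper leaves implicit.
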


Now we can introduce a validity-preserving derivation system for the
$\minicat$ Hoare logic.  The derivation rules are specified in Figure
\ref{fig-minicathtrip}, where in rule \TirName{Frame} we define
$\avars(\prog)$ as the variables in $\prog$ that are assigned to.  The
rules are mostly standard except for \TirName{Assert} which includes
the expected equality in a precondition, and \TirName{Hyp}.  It is
possible that \TirName{Hyp} is derivable, but in any case it is
validity preserving as we show in the following Lemma. Its main use in
this paper will be to enable an automated Hoare logic check we develop
later (Section \ref{section-metahtrip}).
\begin{lemma}
  \label{lemma-hoare-sound}
  If $\htrip{\eqs_1}{\prog}{\eqs_2}$ is derivable, then it is valid.
\end{lemma}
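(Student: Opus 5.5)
We proceed by induction on the derivation of $\htrip{\eqs_1}{\prog}{\eqs_2}$, showing in each case that $\eqs_1 \wedge \toeq{\prog} \models \eqs_2$. Models are memories, so in each case we fix an arbitrary $\store$ with $\store \models \eqs_1 \wedge \toeq{\prog}$ and show $\store \models \eqs_2$. We reason directly about satisfaction rather than through the operational semantics, though Lemma \ref{lemma-toeq} could be used interchangeably.

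The base cases go as follows.
\begin{itemize}
\item \TirName{Skip}: we take $\toeq{\pskip}$ to be $\eqtrue$, so the case is immediate.
\item \TirName{Assign}: here $\toeq{\xassign{x}{\be}{\cid_2}}$ is $x \eop \toeq{\elab{\be}{\cid_2}}$. Since $\store$ satisfies this equation, $\store(x)$ equals the value of $\toeq{\elab{\be}{\cid_2}}$ under $\store$. A standard substitution lemma then gives that $\store \models \eqs[\toeq{\elab{\be}{\cid_2}}/x]$ iff $\store \models \eqs$. We prove that lemma by routine induction on constraints and terms, using that satisfaction of $\eqs$ depends only on the values of its variables.
\item \TirName{Assert}: $\toeq{\cdot}$ adds only an equation, and the postcondition equals the precondition, so this case is immediate.
\end{itemize}

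The inductive cases go as follows.
\begin{itemize}
\item \TirName{Seq}: from $\store \models \eqs_1 \wedge \toeq{\prog_1} \wedge \toeq{\prog_2}$, the first induction hypothesis gives $\store \models \eqs_2$. Then $\store \models \eqs_2 \wedge \toeq{\prog_2}$, and the second induction hypothesis gives $\eqs_3$.
\item \TirName{Consequence}: this follows by transitivity of $\models$ (it is a preorder) together with monotonicity of conjunction.
\item \TirName{Hyp}: if $\store \models \eqs_1 \wedge (\eqs_2 \impl \eqs) \wedge \toeq{\prog}$, the induction hypothesis gives $\store \models \eqs_2$. Modus ponens then yields $\eqs$, so $\store \models \eqs_2 \wedge \eqs$.
\end{itemize}

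The one case needing care is \TirName{Frame}, though here it is actually easy. Because validity is entailment and there is no state change in the constraint reading, $\store \models \eqs_1 \wedge \eqs \wedge \toeq{\prog}$ gives $\store \models \eqs_2$ by the induction hypothesis. Since $\store \models \eqs$ holds directly, we get $\store \models \eqs_2 \wedge \eqs$. The side condition $\vars(\eqs) \cap \avars(\prog) = \varnothing$ is therefore not needed for this constraint-level soundness; it matters only for the operational reading via Lemma \ref{lemma-hoare}.

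The main obstacle is the substitution lemma in \TirName{Assign}. It must handle all term forms, including the bit-vector operators and OT encodings, and it must treat $\neg$ and $\impl$ in constraints correctly. Because substitution here replaces a variable by a term, and constraints have no binders, the induction is straightforward and free of capture issues.
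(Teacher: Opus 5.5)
Your proof is correct and takes the paper's approach: induction on the derivation, with the \TirName{Hyp} case handled exactly as the paper does it (the induction hypothesis gives $\eqs_2$, and $\eqs_2 \wedge (\eqs_2 \impl \eqs)$ then gives $\eqs_2 \wedge \eqs$). The paper dismisses the remaining cases as standard, whereas you work them out, and you are right that the \TirName{Frame} side condition is superfluous when validity is entailment; it is not needed for Lemma \ref{lemma-hoare} either, since that lemma follows from validity and Lemma \ref{lemma-toeq} alone.
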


\begin{proof}
  By induction on the derivation and cases analysis on the last derivation
  step. We only need to consider $\TirName{Hyp}$ since other cases are standard.
  In this case the consequent is of the form
  $\htrip{\eqs_1 \wedge (\eqs_2 \impl \eqs)}{\prog}{\eqs_2 \wedge \eqs}$
  where $\htrip{\eqs_1}{\prog}{\eqs_2}$ is derivable by inversion
  of \TirName{Hyp} and hence valid by the induction hypothesis.
  Thus $\eqs_1 \wedge \toeq{\prog} \models \eqs_2$,
  and we can strengthen the condition on the rhs of the entailment as $
  \eqs_1 \wedge (\eqs_2 \impl \eqs) \wedge \toeq{\prog} \models \eqs_2 
  $,
  so by standard reasoning we have:
  $
  \eqs_1 \wedge (\eqs_2 \impl \eqs) \wedge \toeq{\prog} \models \eqs_2 \wedge (\eqs_2 \impl \eqs)
  $,
  and since $\eqs_2 \wedge (\eqs_2 \impl \eqs)$ iff
  $\eqs_2 \wedge \eqs$, we have
  $\eqs_1 \wedge (\eqs_2 \impl \eqs) \wedge \toeq{\prog} \models \eqs_2 \wedge \eqs$.
\end{proof}

\section{Share Types}
\label{section-sty}

Our approach to static confidentiality analysis is to enforce the use
of sound techniques in semi-homomorphic encryption in protocols.  As
discussed in the example in Section \ref{section-smt-example},
protocols integrate interactive elements, such as the sharing of
one-time-pad (OTP) encrypted secrets, and noninteractive elements,
such as addition (of shares). To comport with this we introduce
distinct judgement forms for interactive and noninteractive protocol
elements. Our type system as presented will assume 2-client protocols,
but this is for clarity of presentation and the analysis can be
generalized to more parties.  So in this Section we define the
federation as $\setit{1,2}$ (client 1 and client 2).

Our confidentiality analysis mainly considers protocols in terms of
their SMT interpretations as described in Section \ref{section-smt}.
We introduce a \emph{share type} language $\sty$, classifying
\emph{share terms}. We represent share terms using standard notation
\cite{10.1007/978-3-030-68869-1_3}. A share term $[\phi_1,\phi_2]$
represents a secret-shared value where client 1 has $\phi_1$ and
client 2 has $\phi_2$, and this shared value can be reconstructed
as a public value $\pubx{\phi}$.
$$
\begin{array}{rcl@{\hspace{10mm}}l}
  \sty &::=& \sharety \mid \pubty \mid \outty & \textit{share types} \\
  \stt &::=& [\phi,\phi] \mid \pubx{\phi} & \textit{share terms} 
\end{array}
$$
\paragraph{Share generation and reconstruction}
A \emph{secret sharing scheme} includes functions
$(\genshares,\recon)$ that generate and reconstruct shares,
respectively. The $\genshares$ function takes some
input $\phi$ and randomness $\rx{x}{\cid}$ and produces
two shares (or as many is needed for the number of clients
in the federation): 
$$
\genshares(\phi,\rx{w}{\cid}) = \phi_1,\phi_2
$$
and:
$$
\recon(\genshares(\phi,\rx{w}{\cid})) = \phi
$$
In order for $\genshares$ to be \emph{sound} with regard to
confidentiality, we require
that both of $\phi_1$ and $\phi_2$ are marginally independent
of $\phi$, and also of any other values in the protocol.
This entails the requirement that $\rx{w}{\cid}$ is
consumed by $\genshares$, e.g., as an OTP.
We formally characterize soundness of $\genshares$
in our protocol model in Definition \ref{def-genshares}.
We emphasize that our type system is agnostic with respect
to the definition of $\genshares$ and $\recon$ modulo
soundness-- we use an additive shares encoding in this work,
but the analysis could apply instead to, e.g., Shamir
sharing.

More generally, when we ascribe the type $[\phi_1,\phi_2] : \sharety$,
this means that $[\phi_1,\phi_2]$ are shares of their reconstruction
$\recon(\phi_1,\phi_2)$, noting that shares can be obtained either by
$\genshares$ or by noninteractive homomorphic protocols.  Likewise,
$\pubx{\phi} : \outty$ means that $\phi$ has been publicly
\emph{revealed} via broadcast shares, and our type system requires
that $\phi \eop \recon(\phi_1, \phi_2)$ with $[\phi_1,\phi_2] :
\sharety$. This implies that the process of revealing $\phi$ reveals
nothing besides $\phi$.  The ascription $\pubx{\phi} : \pubty$ is
similar, with the additional meaning that $\phi$ is a publicly-known
value independent of any client inputs. This includes known constant
values used in protocols, and more interestingly \emph{secure
openings}. Secure openings are commonly used to interactively compute
public values that are masked with OTPs secret-shared with clients in
initialization protocols, as for example in Beaver Triple
multiplication which we discuss in Section \ref{section-examples}.

\stylocalfig

We proceed by formally describing the type analysis, then
discussing an illustrative example, and finally by demonstrating share
type soundness in Theorem \ref{theorem-sty}. This result shows that
our analysis enforces noninterference modulo output, in contrast to
the previous system for $\minicat$ that only enforce a gradual release
property \cite{skalka-near-esop25}. By connecting this analysis with
protocol correctness properties enforced by the Hoare logic discussed
in Section \ref{section-smt}, the output can also be connected to an
intended ideal functionality.

\subsection{Non-Interactive Typing Rules}

Our non-interactive typing rules are presented in Figure
\ref{fig-stylocal}. Judgements are of the form $\Gamma, \eqs \vdash
\stt : \sty$, where $\Gamma$ is a set of \emph{share type bindings} of
the form $\stt : \sty$. In the context of typing a protocol $\prog$
soundness requires that the constraint $\eqs$ must be entailed by
$\toeq{\prog}$. The syntax-directed rules are all known semi-homomorphic operations,
including addition of shares (\TirName{HEAdd}), and both addition and
multiplication of shares with a public value (\TirName{HEAddPub} and
\TirName{HEMultPub}, respectively) \cite{10.1007/978-3-030-68869-1_3}.
These rules also leverage SMT entailment in the \TirName{ShareEntails}
and \TirName{PubEntails} rules, meaning these operations are
allowable up to algebraic equivalence in the context of a protocol.

\subsection{Interactive Typing Rules} 

The interactive share typing rules are defined in Figure
\ref{fig-styinteractive}, with judgements of the form $\eqs \vdash
\Gamma_1,R_1, \stredx \Gamma_2,R_2$.  Each such judgement establishes
a sound type binding in $\Gamma_2$ for interactively-defined
variables-- that is, unicast and broadcast messages exchanged between
clients-- assuming the bindings in $\Gamma_1$. In addition to type
bindings, judgments track randomness used by $\genshares$, allowing us
to use type linearity to enforce one-time usage similar to previous
related work \cite{darais2019language}, where any randomness used in
the typing is not in $R_1$ and added to $R_2$. To this end, in these
rules $\uplus$ is disjoint union.

The \TirName{GenShares} rule generates the binding $[x_1,
  x_2] : \sharety$ if the term is a secret sharing as defined
by $\genshares$, and adds $\rx{w}{\cid}$ to the set of variables
consumed by $\genshares$, ensuring one-time usage. The
\TirName{Reveal} rule generates the binding $\pubx{\recon(\px{w_1},\px{w_2})} :
\outty$ with the requirement that $[\px{w_1},\px{w_2}]$ are typable as
shares. The \TirName{SecOpen} rule is similar except it allows the
assertion $\pubx{\recon(\px{w_1},\px{w_2})} : \pubty$ if we can additionally
show that the shared value is masked by the use of random value
$\rx{w}{\oid}$ as an OTP. Here we are positing a third party $\oid$
that secret-shares $\rx{w}{\oid}$ as part of an initialization
protocol. We illustrate secure openings in practice at more length
in Section \ref{section-examples}, though we note that appropriate
properties of initialization protocols can be encoded in $\eqspre$.

We write $\stredx^*$ to denote the transitive
closure of $\stredx$. For any protocol $\prog$, the goal of the
type analysis is to show that all interactive messages can be classified--
that is, to show that $\eqspre \wedge \toeq{\prog} \vdash
\varnothing,\varnothing \stredx^* \Gamma,R$ where all
interactive messages are classified in $\Gamma$. 

\styinteractivefig

\subsection{Instantiating $\genshares$}

It is well-known that adding or subtracting a sample from a uniform
distribution in a finite field yields a value in a uniform
distribution, meaning that samples can be used as OTPs with
perfect secrecy \cite{barthe2019probabilistic,darais2019language}.
So we can define:
\begin{eqnarray*}
  \genshares(\phi,\rx{w}{\cid}) &\defeq& \phi - \rx{w}{\cid}, \rx{w}{\cid}\\
  \recon(\phi_1,\phi_2) &\defeq& \phi_1 \fplus \phi_2
\end{eqnarray*}
where we require that $\genshares$ consumes the random value
$\rx{w}{\cid}$ as an OTP. 

This approach highlights an important interaction between our type
system and SMT entailment as defined by the $\TirName{ShareEntails}$
rule. In particular there are other methods for encrypting values in MPC
protocols that can be shown to be algebraically equivalent
to this fundamental schema. For example, in Yao's Garbled Circuits
(YGC) in $\mathbb{F}_{2}$, the ``garbler'' represents encrypted wire
values as a random sample (for 0) or its negation (for 1), and shares
its secret input encoding with the ``evaluator'' by using
it as a selection bit. That is, assuming that client 2 is the garbler and
client 1 is the evaluator, we can define:
$$
\xassign{\mx{w}{1}}{\mux{\secret{w}}{\flip{w}}{\neg\flip{w}}}{2}
$$
where for all $\be_1,\be_2,\be_3$ (with $\neg$ denoting negation in $\mathbb{F}_2$):
$$
\mux{\be_1}{\be_2}{\be_3} \defeq (\neg\be_1 \ftimes \be_2) \fplus (\be_1 \ftimes \be_3)
$$
and letting this protocol be $\prog$ the following is valid:
\begin{mathpar}
   \toeq{\prog} \models \mx{w}{1} \eop \sx{w}{2} \fminus \rx{w}{2}
\end{mathpar}
For this reason SMT entailment is also an important enabling technology
for the interactive rules. 

\subsection{Share Typing Example} To illustrate details of the
type system we return to the example of additive sharing in the context
of a 2-client protocol. While the ideal functionality of this simple protocol
precisely reveals input secrets, it serves as an understandable example
of the type system mechanics.
We first focus on initial secret sharing steps--
let $\prog$ be the following protocol in $\fieldp{2}$ (noting
that $-$ is equivalent to xor in $\fieldp{2}$):
$$
\begin{array}{lll}
  \elab{\mesg{s1}}{2} := \elab{(\secret{1} \fminus \locflip)}{1}; \quad
  \elab{\mesg{s2}}{1} := \elab{(\secret{2} \fminus \locflip)}{2} 
\end{array}
$$
By $\TirName{GenShares}$ we can then assert:
$$
\toeq{\prog} \vdash \varnothing,\varnothing \stredx
\setit{[\mx{s2}{1}, \elab{\locflip}{2}] : \sharety,
  [\elab{\locflip}{1}, \mx{s1}{2}] : \sharety}, \setit{\elab{\locflip}{1},\elab{\locflip}{2}}
$$
Suppose the protocol goes on to reveal the negation of the
sum (xor-ing) of the inputs:
$$
\begin{array}{l}
  \px{1} := \elab{\mesg{s2} + \locflip + 1}{1};\quad \px{2} := \elab{\mesg{s1} + \locflip}{2};\\
  \out{1} := \elab{\px{1} + \px{2}}{1};\quad \out{2} := \elab{\px{1} + \px{2}}{2}
\end{array}
$$
Letting this extended protocol be $\prog'$, and letting:
$$
\begin{array}{rcl}
\Gamma &\defeq& \setit{[\mx{s2}{1}, \elab{\locflip}{2}] : \sharety,
  [\elab{\locflip}{1}, \mx{s1}{2}] : \sharety}\\
R &\defeq&  \setit{\elab{\locflip}{1},\elab{\locflip}{2}}
\end{array}
$$
observe we can then obtain share typings for noninteractive components for
addition of shares by \TirName{Shared} and \TirName{HeAdd}:
$$
\Gamma, \toeq{\prog'} \vdash
[\mx{s2}{1} + \elab{\locflip}{1}, \mx{s1}{2} + \elab{\locflip}{2}] : \sharety
$$
and hence for negation by \TirName{HeAddPub}:
$$
\Gamma, \toeq{\prog'} \vdash
[\mx{s2}{1} + \elab{\locflip}{1} + 1, \mx{s1}{2} + \elab{\locflip}{2}] : \sharety
$$
This implies:
$$
\toeq{\prog'} \vdash \varnothing,\varnothing \stredx
\Gamma \cup \setit{\pubx{\px{1} + \px{2}} : \outty}, R 
$$
Hence by \TirName{PubEntails} we have:
$$
\begin{array}{l}
\Gamma \cup \setit{\pubx{\px{1} + \px{2}} : \outty}, \toeq{\prog'} \vdash
\pubx{\out{1}} : \outty \\
\Gamma \cup \setit{\pubx{\px{1} + \px{2}} : \outty}, \toeq{\prog'} \vdash
\pubx{\out{2}} : \outty
\end{array}
$$

\subsection{Share Type Soundness}
\label{section-sty-sound}

Share type soundness (Theorem \ref{theorem-sty}) means that share
typing enforces noninterference modulo output. To prove this we need
to assert probabilistic properties in share typings, including for
functions of protocol variables such as reconstruction of shares. So
we make standard extensions of notation for probability distributions
to express probabilities, and conditional probabilities, of functions
of program variables $\phi$. For this extension we posit that
memories can incorporate variables $x_\phi$ that assign valuations
to functions $\phi$.
\begin{definition}
  We let $\Phi$ range over sets of SMT formulas $\phi$, and define
  $
  \mems(\Phi) \defeq \{ \store \mid \dom(\store) = \setit{\ x_{\phi}\ \mid\ \phi \in \Phi\ } \}
  $.
  Given pmf $\pmf$ and $\Phi$ with $\rvs(\pmf) = \bigcup_{\phi \in \Phi} \dom(\phi)$,
  for all $\store \in \mems(\Phi)$ define:
  $$
  \pmf(\store) \defeq \sum_{\store' \in \mems(\Phi)}
  \begin{cases}
    \pmf(\store') \ \text{if}\ \forall \phi \in \Phi . \store'(\phi) = \store(x_\phi)  \\
    0 \ \text{otherwise}
  \end{cases}
  $$
\end{definition}

\begin{definition}
  Given $\pmf$, the \emph{conditional distribution}
  of $X$ given $\Phi$ where $\bigcup_{\phi \in \Phi} \dom(\phi) \cup X = \rvs(\pmf)$ 
  is denoted $\condd{\pmf}{X}{\Phi}$ and defined as follows: 
  $$
  \forall \store_1 \in \mems(X)\ .\ \forall \store_2 \in \mems(\Phi) \ .\ 
  \condd{\pmf}{X}{\Phi}(\store_1 \uplus \store_2) \defeq
  \begin{cases}
    0 \text{\ if\ } \pmf(\store_2) = 0\\
    \pmf(\store_1 \uplus \store_2) / \pmf(\store_2) \text{\ otherwise}
  \end{cases}
  $$
\end{definition}

Now we define what we mean by soundness of $\genshares$. In essence
we mean perfect secrecy \cite{barthe2019probabilistic} in the context
of protocols where multiple encodings can occur. These
conditions are obviously met by, e.g., ``vanilla'' additive or Shamir
shares which generate shares in uniform marginal distributions. Here
and later we leverage (conditional) probabilistic independence as
characterized in Definition
\ref{def-condsep}.
\begin{definition}
  \label{def-genshares}
  We say that $\genshares$ is \emph{sound} iff given any $\prog$ with
  $\vars(\prog) = S \cup V \cup O \cup R$
  and $\phi$ and $\rx{w}{\cid} \not \in R$ and $\genshares(\phi,\rx{w}{\cid}) =
  \phi_1,\phi_2$ we have for all $a \in \setit{1,2}$:
  $$
  \sep{\progd(\prog)}{(S \cup V \cup O \cup \setit{\phi})}{\setit{\phi_a}}
  $$
\end{definition} 

Next we define a well-formedness property of tuples $\prog,\Gamma,R$
intuitively meaning that all the share types in $\Gamma$ are sound with respect
to $\prog$, and consumed randomness in $R$ as OTPs for initial sharing.
This is a representation invariant that we will show is preserved by
typing, and under closure conditions (Definition \ref{def-sty-closed})
the invariant guarantees confidentiality (Theorem \ref{theorem-OK}).
Property (i) ensures that views do not reveal information about secrets
consistent with noninterference modulo output.
Property (ii) ensures a key property of sharing-- that individual shares
don't reveal more about secrets than their reconstruction does. Property
(iii) ensures that public values are independent of outputs or secrets. Properties
(iv) and (v) ensure that OTPs used for generating shares or for
secure openings, respectively, are included in $R$.
\begin{definition}
  \label{def-OK}
  Given $\prog,\Gamma,R$ where $\iov(\prog) = S \cup M \cup P \cup O$. For all $a,b\in \setit{1,2}$ with $a \ne b$ let:
  \begin{eqnarray*}
    \Phi_b & \defeq & \setit{ \phi_b \mid ([\phi_1,\phi_2] : \sharety) \in \Gamma} \cup
                        \setit{ \phi_a \mid (\pubx{\recon(\phi_1,\phi_2)} : \sty) \in \Gamma} \\
    \Phi^p & \defeq & \setit{ \recon(\phi_1,\phi_2) \mid (\pubx{\recon(\phi_1,\phi_2)} : \pubty) \in \Gamma}\\
    \Phi^o & \defeq & \setit{ \recon(\phi_1,\phi_2) \mid (\pubx{\recon(\phi_1,\phi_2)} : \outty) \in \Gamma}
  \end{eqnarray*}
  Then we write $\prog,\Gamma,R : {OK}$ iff all of the following conditions
  hold:
  \begin{enumerate}[\hspace{5mm}i.]
  \item For all $a,b \in \setit{1,2}$ we have
    $
    \condsep{\progtt(\prog)}{(S_{\setit{b}} \cup \Phi^o)}{\Phi_b}{S_{\setit{a}}}
    $.
    \item For all $a,b \in \setit{1,2}$ and $([\phi_1,\phi_2] : \sharety) \in \Gamma$ we have:
    $$\condsep{\progd(\prog)}{(S_{\setit{b}} \cup  \Phi_b \cup \Phi^o \cup \setit{\recon(\phi_1,\phi_2)})}{\phi_a}{S_{\setit{a}}}$$
  \item $\sep{\progd(\prog)}{(S \cup O)}{\Phi^p}$
  \item  For all $a,b \in \setit{1,2}$ and if
    $([x_1,x_2] : \sharety) \in \Gamma$ with $x_a \in \imesgs{M}{\setit{b}}{\setit{a}}$ then
    $\genshares(\phi,\rx{w}{\cid}) = \phi_1,\phi_2$
    and $\toeq{\prog} \models x_1 \eop \phi_1 \wedge x_2 \eop \phi_2$ 
    for some $\phi$ and $\rx{w}{\cid} \in R$.
  \item For all $\phi \in \Phi^p$ if $\toeq{\prog} \models \phi \eop \phi' + \rx{w}{\cid_{\oid}}$ for some $\phi'$
    and $\rx{w}{\cid_{\oid}}$ then $\rx{w}{\cid_{\oid}} \in R$.
  \end{enumerate} 
\end{definition}

Next we show that, given well-formed typing environments, share type judgements preserve
the important properties (i-ii) in Definitions \ref{def-OK}. This result follows from
known semi-homomorphic properties of non-interactive operations that are syntactically
verified in typing judgements.
\begin{lemma}
  \label{lemma-sty-noninteractive-sound}
  Given $\prog,\Gamma,R : {OK}$ where $\secrets(\prog) = S$. For all $a,b\in \setit{1,2}$ with $a \ne b$ let:
  \begin{eqnarray*}
    \Phi_b & \defeq & \setit{ \phi_b \mid ([\phi_1,\phi_2] : \sharety) \in \Gamma} \cup
                        \setit{ \phi_a \mid (\pubx{\recon(\phi_1,\phi_2)} : \sty) \in \Gamma} \\
    \Phi^o & \defeq & \setit{ \recon(\phi_1,\phi_2) \mid (\pubx{\recon(\phi_1,\phi_2)} : \outty) \in \Gamma}
  \end{eqnarray*} 
  and also suppose $\Gamma, \toeq{\prog} \vdash [\phi_1,\phi_2] : \sharety$.
  Then for all $a,b \in \setit{1,2}$ with $a \ne b$ the following conditions hold:
    \begin{enumerate}[\hspace{5mm}i.]
    \item
      $
      \condsep{\progtt(\prog)}{(S_{\setit{b}} \cup \Phi^o)}{(\Phi_b \cup \setit{\phi_b})}{S_{\setit{a}}}
      $
    \item
       $\condsep{\progtt(\prog)}{(S_{\setit{b}} \cup  \Phi_b \cup \Phi^o \cup \setit{\recon(\phi_1,\phi_2)})}{\phi_a}{S_{\setit{a}}}$
    \end{enumerate}
\end{lemma}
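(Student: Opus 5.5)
The proof goes by induction on the derivation of $\Gamma, \toeq{\prog} \vdash [\phi_1,\phi_2] : \sharety$, with a case analysis on the last rule applied. Only \TirName{Shared}, \TirName{HEAdd}, \TirName{HEAddPub}, \TirName{HEMultPub} and \TirName{ShareEntails} can conclude a share typing of this form. Throughout, conditions (i) and (ii) of $\prog,\Gamma,R : {OK}$ serve as the base invariant. We also use Lemma \ref{lemma-toeq} (equivalently Theorem \ref{theorem-toeq}): every $\store \in \runs(\prog)$ is a model of $\toeq{\prog}$. Hence any two terms $\phi,\phi'$ with $\toeq{\prog} \models \phi \eop \phi'$ denote the same random variable under $\progtt(\prog)$, and they can be exchanged freely inside any (conditional) independence statement.

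\emph{Base and entailment cases.} For \TirName{Shared}, we have $([\phi_1,\phi_2]:\sharety) \in \Gamma$, so $\phi_b \in \Phi_b$ already. Part (i) is then exactly condition (i) of $OK$, and part (ii) is exactly condition (ii) of $OK$. For \TirName{ShareEntails}, the premise gives $\toeq{\prog} \models \phi_1 \eop \phi_1' \wedge \phi_2 \eop \phi_2'$. Since $\recon$ is a function of its arguments, also $\recon(\phi_1,\phi_2)$ coincides with $\recon(\phi_1',\phi_2')$ on all runs. The induction hypothesis for $[\phi_1',\phi_2']$ therefore transfers directly by the substitution observation above.

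\emph{Homomorphic cases.} For \TirName{PubOp}-style public premises, we first establish an auxiliary fact by induction on public typings. If $\Gamma,\toeq{\prog}\vdash\pubx{\phi}:\pubty$, then $\phi$ is either a constant or a function of $\Phi^p$ (via \TirName{Val}, \TirName{PubOp}, \TirName{PubEntails}, \TirName{Shared}), and hence, by $OK$ (iii), is independent of $S \cup O$.

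For \TirName{HEAddPub} and \TirName{HEMultPub}, the new share $\phi_a \fplus \phi$ (resp.\ $\phi_a\ftimes\phi$) is a function of the old share $\phi_a$ and the public $\phi$.
\begin{itemize}
\item For the side that is unchanged, the statement is immediate.
\item For the changed side, the reconstruction $\recon(\phi_1\fplus\phi,\phi_2)$ is a deterministic function of $\recon(\phi_1,\phi_2)$ and $\phi$.
\end{itemize}
So (i) and (ii) follow from the induction hypothesis using two standard facts: conditional independence is preserved under applying functions to the separated side, and conditioning on a function of conditioned variables is harmless.

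For \TirName{HEAdd}, party $b$'s new share is $\phi^1_b \fplus \phi^2_b$, a function of $\phi^1_b,\phi^2_b$. Part (i) therefore follows by applying the induction hypothesis to the first premise, then to the second with $\Phi_b$ enlarged. Concretely, we apply (i) to $\Phi_b\cup\setit{\phi^1_b}$, which requires strengthening the induction hypothesis so that it holds for arbitrary extensions of $\Phi_b$ by previously typed shares. Then we take the function.

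\emph{Main obstacle.} The main obstacle is part (ii) for \TirName{HEAdd}. We must show that $\phi^1_a\fplus\phi^2_a$ is conditionally separated from $S_{\setit{a}}$ given $S_{\setit{b}}\cup\Phi_b\cup\Phi^o\cup\setit{\recon(\phi^1\fplus\phi^2)}$. The difficulty is that the conditioning set now contains only the reconstruction of the sum, not the two summand reconstructions separately, and separation is not monotone in the conditioning set.

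The first attempt is to condition additionally on $\phi^1_b,\phi^2_b$, which is legitimate since they are functions of the enlarged $\Phi_b$. In the additive instantiation, $\phi_a$ is then determined by $\recon$ and $\phi_b$. The claim reduces to part (i) applied to $\Phi_b \cup \setit{\phi^1_b,\phi^2_b}$ together with $\recon(\phi^1\fplus\phi^2)$, which we would discharge using the chain rule for conditional independence. If this reduction fails for general schemes, the fallback is to strengthen the induction hypothesis to carry a joint statement over all typed shares simultaneously, rather than one share at a time.
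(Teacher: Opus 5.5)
Your skeleton is the paper's: induction on the share-typing derivation, with \TirName{Shared} read off from conditions (i)--(ii) of ${OK}$, and \TirName{ShareEntails} handled by almost-sure equality of entailed-equal terms via Theorem~\ref{theorem-toeq}. Those cases are fine. The paper disposes of the three homomorphic rules only by citing known semi-homomorphic properties, and your elaboration of them is where the argument breaks.

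\textbf{Where the homomorphic cases fail.} For \TirName{HEAddPub}, the ``unchanged side'' is not immediate in (ii). The conditioning set now contains $\recon(\phi_1\fplus\phi,\phi_2)$ rather than $\recon(\phi_1,\phi_2)$, and separation is not monotone in the conditioning set. \TirName{HEMultPub} has no unchanged side at all. More seriously, both of your ``standard facts'' need the public operand $\phi$ to be a function of the separated or the conditioning variables. Your auxiliary fact gives only marginal independence of $\phi$ from $S\cup O$ (via condition (iii)), which says nothing about $\phi$ jointly with $\Phi_b$.

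This cannot be repaired from ${OK}$ alone. Work over $\fieldp{2}$, in a protocol with samples $u,v,r$, a secret $s$ of client 2, and no outputs, and let $\Gamma=\setit{[u,v]:\sharety,\ \pubx{\recon(s\fplus u\fplus r,\ r)}:\pubty}$. Then $\Phi_1=\setit{u,r}$, $\Phi_2=\setit{v,\ s\fplus u\fplus r}$ and $\Phi^p=\setit{s\fplus u}$. Conditions (i)--(iii) hold; (ii) is trivial because $\phi_b\in\Phi_b$ and the reconstruction is conditioned on. Condition (iv) is vacuous and (v) only constrains $R$. Yet \TirName{HEAddPub} derives a share whose first component $u\fplus(s\fplus u\fplus r)\fplus r$ equals $s$ on every run, violating conclusion (i) for $b=1$. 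Your \TirName{HEAdd}~(ii) ``first attempt'' has the same hidden requirement: conditioning additionally on $\phi^1_b,\phi^2_b$ is harmless only if they are functions of the \emph{existing} conditioning set, not of an enlarged $\Phi_b$.

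\textbf{The missing idea.} Carry a determinacy invariant as the strengthened induction hypothesis, instead of your ``arbitrary extensions of $\Phi_b$'' or joint-statement fallback. Writing $K_b \defeq S_{\setit{b}}\cup\Phi_b\cup\Phi^o$, the invariant says that the $b$-component of every derivable share is, on all runs, a function of $K_b$.
\begin{itemize}
\item It holds for \TirName{Shared} and is preserved by \TirName{HEAdd} and \TirName{ShareEntails}.
\item It is preserved by \TirName{HEAddPub} and \TirName{HEMultPub} once your auxiliary fact is replaced by: every $\pubty$-typable term is a function of both $K_1$ and $K_2$.
\item That last property is not implied by ${OK}$; the example violates it, since $s\fplus u\fplus r$ is not determined by $K_1$. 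It must be added as a hypothesis, for instance by requiring, for each binding $\pubx{\recon(\psi_1,\psi_2)}:\pubty$, that $\psi_b$ is determined by $K_b$. For \TirName{SecOpen} this follows from the invariant applied to its share premise.
\end{itemize}
With the invariant, conclusion (i) follows from condition (i) of ${OK}$, because adjoining to the separated side a function of the separated and conditioning variables preserves separation. Conclusion (ii) is then trivial for additive reconstruction, since $\phi_a=\recon(\phi_1,\phi_2)\fminus\phi_b$ is determined by the conditioning set. So your ``main obstacle'' disappears rather than requiring the chain rule.
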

\begin{proof}
  The proof is by induction on the derivation of $\Gamma, \toeq{\prog} \vdash [\phi_1,\phi_2] : \sharety$.
  In the base case where the judgement follows by the \TirName{Shared} rule, the result
  follows trivially by Definition \ref{def-OK}. In cases \TirName{HEAdd}, \TirName{HEAddPub}, and
  \TirName{HEMultPub} the result follows by Definition \ref{def-OK} and known
  semi-homomorphic properties of secret sharing \cite{10.1007/978-3-030-68869-1_3}.
  Case \TirName{ShareEntails} follows by correctness of the SMT interpretation of
  protocols as in Theorem \ref{theorem-toeq}.
\end{proof}

Next, we show that interactive typing rules preserve well-formedness of type
environments. This Lemma follows by Lemma \ref{lemma-sty-noninteractive-sound},
and the security of share generation using OTPs as articulated in
Definition \ref{def-genshares}-- that is, shares
generated by $\genshares$ are in uniform marginal distributions. 
\begin{lemma}
  \label{lemma-OK}
  Given $\prog,\Gamma_1,R_1  : {OK}$
  and $\toeq{\prog}  \vdash \Gamma_1, R_1 \stredx \Gamma_2, R_2$.
  Then $\toeq{\prog}, \Gamma_2, R_2 : {OK}$.
\end{lemma}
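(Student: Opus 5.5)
The plan is a case analysis on the last (single) interactive step $\toeq{\prog} \vdash \Gamma_1, R_1 \stredx \Gamma_2, R_2$, over the four rules of Figure \ref{fig-styinteractive}. In each case I would check conditions (i)--(v) of Definition \ref{def-OK} for $\Gamma_2,R_2$. The sets $\Phi_b$, $\Phi^p$ and $\Phi^o$ for $\Gamma_2$ differ from those for $\Gamma_1$ by at most one share term or one reconstruction. So every condition for $\Gamma_2$ is a statement about $\Gamma_1$ plus one new element. Conditions that do not mention the new element are inherited directly from $\prog,\Gamma_1,R_1 : OK$; the real work is adding one term to a conditional independence statement.

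\emph{Case} \TirName{GenShares}. The new binding is $[x_1,x_2]:\sharety$, with $\toeq{\prog} \models x_1 \eop \phi_1 \wedge x_2 \eop \phi_2$ and $\genshares(\phi,\rx{w}{\cid}) = \phi_1,\phi_2$, where $\rx{w}{\cid} \notin R_1$.
\begin{enumerate}[\hspace{5mm}a.]
\item By Theorem \ref{theorem-toeq}, every run satisfies these equations, so the distribution of $x_a$ under $\progtt(\prog)$ equals that of $\phi_a$.
\item Since $\rx{w}{\cid}$ is fresh with respect to $R_1$, soundness of $\genshares$ (Definition \ref{def-genshares}) gives that $\phi_a$ is marginally independent of $S \cup V \cup O \cup \setit{\phi}$.
\item All previously typed share terms and reconstructions are functions of those variables. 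Hence $x_a$ can be added to $\Phi_b$ and to the separation statements (i) and (ii) by the standard fact that independence of a fresh OTP survives conditioning on anything it does not mention.
\item Condition (iv) holds because $\rx{w}{\cid}$ is added to $R_2 = R_1 \uplus \setit{\rx{w}{\cid}}$.
\item Conditions (iii) and (v) are unchanged.
\end{enumerate}

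\emph{Case} \TirName{Reveal}. The new binding is $\pubx{\recon(\px{w_1},\px{w_2})}:\outty$. By Lemma \ref{lemma-sty-noninteractive-sound}, applied to the premise $\Gamma_1,\toeq{\prog} \vdash [\px{w_1},\px{w_2}]:\sharety$, condition (ii) for that share already holds conditioned on its reconstruction. That is what we need to move $\recon$ into $\Phi^o$ and $\px{w_a}$ into $\Phi_b$: once the output is conditioned on, each single share carries no additional information about $S_{\setit{a}}$. Condition (i) for $\Gamma_2$ then follows by combining conclusions (i) and (ii) of Lemma \ref{lemma-sty-noninteractive-sound} via a chain rule for conditional independence.

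\emph{Cases} \TirName{SecOpen} and \TirName{Output}. \TirName{SecOpen} is analogous to \TirName{Reveal}, except that the reconstruction goes into $\Phi^p$. Condition (iii) uses the premise $\toeq{\prog} \models \px{w_1} + \px{w_2} \eop \phi' + \rx{w}{\oid}$ together with the fresh mask $\rx{w}{\oid} \notin R_1$: the masked value is uniform and independent of $S \cup O$ by the OTP argument, and (v) holds since the mask is added to $R_2$. \TirName{Output} adds only a term already typed as $\outty$, so it is essentially inherited.

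The main obstacle I expect is the conditional independence bookkeeping in the \TirName{Reveal} and \TirName{SecOpen} cases. There one must justify that enlarging both the conditioning set ($\Phi^o$) and the separated set ($\Phi_b$) at once preserves (i) for \emph{all} previously bound shares, not just the new one. I would first try to derive a small lemma of the form: if $\condsep{\pmf}{X}{Y}{Z}$ and $\condsep{\pmf}{X \cup Y \cup \setit{f}}{y}{Z}$ with $f$ a function of $Y\cup\setit{y}$, then $\condsep{\pmf}{X \cup \setit{f}}{Y\cup\setit{y}}{Z}$. This is a graphoid-style contraction property, and I would apply it uniformly in those cases.
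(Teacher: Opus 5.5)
Your overall route is the paper's: a case split on the interactive rules, Definition~\ref{def-genshares} for \TirName{GenShares}, and Lemma~\ref{lemma-sty-noninteractive-sound} plus the OTP mask for \TirName{Reveal} and \TirName{SecOpen}. You also rightly observe that in \TirName{Reveal} condition (i) must be re-proved, because both $\Phi_b$ and $\Phi^o$ grow; the paper's sketch cites only (ii) there. The gap is that the tool you propose for exactly this step is false.

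Take $X=\varnothing$, $Y=\setit{u}$, $Z=\setit{s}$ with $u,s$ independent and uniform on $\fieldp{p}$, and set $y=s+u$, $f=y$. Then $\sep{\pmf}{Y}{Z}$ holds, and $\condsep{\pmf}{\setit{u,f}}{\setit{y}}{Z}$ holds trivially because $y$ is fixed by $f$. Yet $\condsep{\pmf}{\setit{f}}{\setit{u,y}}{Z}$ fails, since once $f$ is known, $u=f-s$ determines $s$.

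In \TirName{Reveal} the lemma could not do the work anyway. Use the natural instantiation $X=S_{\setit{b}}\cup\Phi^o$, $Y=\Phi_b\cup\setit{\px{w_b}}$, $y=\px{w_a}$, $f=\recon(\px{w_1},\px{w_2})$. Additive reconstruction makes $y=f-\px{w_b}$ fixed by the conditioning set of your second hypothesis. So that hypothesis is vacuous, and the lemma would assert that conditioning on \emph{any} reconstruction is harmless.

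Likewise, conclusions (i) and (ii) of Lemma~\ref{lemma-sty-noninteractive-sound} do not by themselves give the new (i) through any chain rule. Take $S_{\setit{b}}=\Phi^o=\varnothing$, $\Phi_b=\setit{u}$, $\px{w_b}=v$, $\px{w_a}=s+u-v$, with $u,v,s$ independent and uniform. Both conclusions hold, but $\setit{u,\px{w_a}}$ and $s$ are dependent given $s+u$. Conditioning on the revealed output can correlate $b$'s earlier view with $S_{\setit{a}}$.

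The missing fact is $\condsep{\progtt(\prog)}{(S_{\setit{b}}\cup\Phi^o\cup\setit{\recon(\px{w_1},\px{w_2})})}{\Phi_b}{S_{\setit{a}}}$: the old view must stay independent of $S_{\setit{a}}$ once the new output is conditioned on. Given it, ordinary contraction with conclusion (ii) of Lemma~\ref{lemma-sty-noninteractive-sound} yields (i) for $\Gamma_2$ at once.

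This fact follows from (i) of Definition~\ref{def-OK} by weak union if, under $\toeq{\prog}$, the reconstruction is a function of $S_{\setit{a}}$ together with $S_{\setit{b}}\cup\Phi^o$. That is an extra invariant, namely that revealed reconstructions are determined by the secrets. Neither Definition~\ref{def-OK} nor Lemma~\ref{lemma-sty-noninteractive-sound} provides it, and it is not automatic, e.g.\ for shares whose reconstruction involves preprocessed $\oid$ randomness. You would have to add it and show the rules maintain it.

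In \TirName{SecOpen}, by contrast, $\Phi^o$ does not change. There (i) should come from independence of the masked opened value from $b$'s view and the secrets, an OTP argument like your one for (iii), not from this lemma.
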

\begin{proof}
  The proof follows by case analysis on  $\toeq{\prog}  \vdash \Gamma_1, R_1 \stredx \Gamma_2, R_2$.
  In case \TirName{GenShares} the result follows by Definition \ref{def-genshares},
  which satisfies relevant conditions (i-ii,iv) of Definition \ref{def-OK}. In case
  \TirName{SecOpen} the public value must be reconstructed shares, so Lemma
  \ref{lemma-sty-noninteractive-sound} enforces relevant conditions (i-ii) of Definition
  \ref{def-OK}, and the requirement that the public value be masked with an OTP satisfies conditions
  (iii) and (iv) of Definition \ref{def-OK}. In case \TirName{Reveal} since
  the output value must be reconstructed shares, Lemma \ref{lemma-sty-noninteractive-sound}
  satisfies relevant case (ii) of  Definition \ref{def-OK}. 
\end{proof}

In addition to typing judgements, in order to show that a protocol is
secure, we need to classify all corrupt views as shares, all
public values as secure openings, and all outputs as reconstructed
shares by adding them to $\Gamma$. We capture this formally with the following notion of
type environment closure.
\begin{definition}
  \label{def-sty-closed}
  $\Gamma$ is \emph{closed for $\prog$} iff for all $H,C$,
  for all $x \in \houtputs$ there exists $\phi$ such that either
  $([x,\phi] : \sharety) \in \Gamma$ if $C = \setit{1}$
  or  $([\phi,x] : \sharety) \in \Gamma$ if $C = \setit{2}$,
  and for all $\out{\cid} \in O$ we have $(\pubx{\out{\cid}} : \outty) \in \Gamma$.
\end{definition}
Now, the following is a consequence of Definitions \ref{def-OK} and \ref{def-sty-closed}, and then
our main Theorem \ref{theorem-sty} follows by Lemma \ref{lemma-OK} and Theorem \ref{theorem-OK}.
\begin{theorem}
  \label{theorem-OK}
  If $\prog,\Gamma,R : {OK}$ and $\Gamma$ is closed
  for $\prog$, then $\prog$ satisfies noninterference modulo output. 
\end{theorem}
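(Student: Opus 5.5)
We fix a partition $H,C$ of the federation $\setit{1,2}$. When both sets are nonempty, say $C = \setit{b}$ and $H = \setit{a}$ with $a \ne b$. (If $C$ or $H$ is empty the claim is trivial: $\houtputs$ is empty, or $S_H$ is empty.) We must show that for every $\store \in \runs(\prog)$,
$$\condd{\progtt(\prog)}{S_{\setit{a}}}{S_{\setit{b}} \cup O}(\store) = \condd{\progtt(\prog)}{S_{\setit{a}}}{S_{\setit{b}} \cup O \cup \houtputs}(\store).$$
The plan is to push the conditioning on $\houtputs$ through the share terms recorded in $\Gamma$, using closure (Definition \ref{def-sty-closed}). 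Then we apply condition (i) of Definition \ref{def-OK}, which is exactly a conditional independence of $\Phi_b$ from $S_{\setit{a}}$ given $S_{\setit{b}} \cup \Phi^o$.

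First, we relate the raw variables to the formula sets. By closure, every $x \in \houtputs$ occurs as the $b$-th component of some share binding $[\ldots,x,\ldots] : \sharety \in \Gamma$, so $\houtputs \subseteq \Phi_b$. Likewise, every $\out{\cid} \in O$ has $\pubx{\out{\cid}} : \outty \in \Gamma$. By the typing discipline, these output bindings are of the form $\pubx{\recon(\phi_1,\phi_2)}$, with $\toeq{\prog}$ entailing the output equal to the reconstruction. Using Theorem \ref{theorem-toeq}, on every run the values of $O$ are determined by, and determine, the values of $\Phi^o$. Hence, for all $\store \in \runs(\prog)$, conditioning on $O$ and conditioning on $\Phi^o$ coincide (up to the events of measure zero handled by the first case of the conditional definitions). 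So it suffices to show
$$\condd{\progtt(\prog)}{S_{\setit{a}}}{S_{\setit{b}} \cup \Phi^o} = \condd{\progtt(\prog)}{S_{\setit{a}}}{S_{\setit{b}} \cup \Phi^o \cup \houtputs}.$$

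Second, we apply condition (i): $\condsep{\progtt(\prog)}{(S_{\setit{b}} \cup \Phi^o)}{\Phi_b}{S_{\setit{a}}}$. This is the standard separation fact: conditional independence of $Y$ and $Z$ given $W$ is equivalent to $P(Z \mid W \cup Y) = P(Z \mid W)$ wherever $P(W \cup Y) > 0$. We prove this by unfolding Definition \ref{def-condsep} and the definitions of marginal and conditional distributions. So conditioning $S_{\setit{a}}$ additionally on $\Phi_b$ changes nothing. Since $\houtputs \subseteq \Phi_b$, we then marginalize away $\Phi_b \setminus \houtputs$, which preserves independence: a sub-collection of an independent set remains independent. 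This gives the required equality. Where conditioning events have probability zero, both sides are $0$ by the first clause of the definitions, and on $\runs(\prog)$ the relevant marginals are positive since $\store$ is itself in the support.

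The main obstacle I expect is the first step. We need to make precise that conditioning on the variables $O$ and $\houtputs$ is the same as conditioning on the formula-valued random variables in $\Phi^o$ and $\Phi_b$ under the extended pmf notation with $x_\phi$. This requires a small lemma: if $\toeq{\prog} \models x \eop \phi$, then on the support of $\progtt(\prog)$ the events $x = v$ and $x_\phi = v$ coincide. We get this from Theorem \ref{theorem-toeq}, because $\progtt(\prog)$ is uniform on $\runs(\prog)$ and every run models $\toeq{\prog}$. Care is also needed because the symmetric case $C=\setit{1}$ versus $C=\setit{2}$ swaps component order in closure. The definition of $\Phi_b$ is indexed so that the $b$-th component of shares is the corrupt party's view, so both cases follow identically.
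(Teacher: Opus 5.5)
Your argument follows the paper's own proof. Closure gives $\houtputs \subseteq \Phi_b$. Condition (i) of Definition~\ref{def-OK} separates $\Phi_b$ from $S_H$ given $S_C \cup \Phi^o$. Marginalization drops $\Phi_b \setminus \houtputs$, and the \TirName{Output} rule connects $O$ with $\Phi^o$. You trade $O$ for $\Phi^o$ first, while the paper applies (i) first and eliminates $\Phi^o$ at the end; that difference is immaterial.

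The gap is your claim that on runs the values of $O$ also \emph{determine} those of $\Phi^o$. Your reduction needs this on both sides of the equality. The \TirName{Output} rule together with Theorem~\ref{theorem-toeq} gives only the other direction: each $\out{\cid}$ is entailed equal to some element of $\Phi^o$. Nothing in Definition~\ref{def-OK}, in closure, or in the typing rules forces every reconstruction typed $\outty$ to feed an output. \TirName{Reveal} adds $\pubx{\recon(\px{w_1},\px{w_2})} : \outty$ whether or not any output uses it. When the converse fails, conditioning on $\Phi^o$ is strictly finer than conditioning on $O$. Conditional independence given the finer variable does not transfer to the coarser one.

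For example, suppose client~1 additively shares its only secret $\sx{w}{1}$ with client~2, and both shares are then broadcast but no output depends on them. Then $\Phi^o$ contains a term equal to $\sx{w}{1}$ on every run. Your intermediate equality holds trivially, since both sides are point masses. But for $C = \setit{2}$ the corrupt views (client~1's broadcast share and the message it sent) determine $\sx{w}{1}$, while $S_C \cup O$ does not. So the noninterference-modulo-output equality fails. The paper's proof needs exactly this converse and obtains it only by stipulation (``we can assume wlog that all revealed values are assigned to outputs''). You should state it as an explicit extra hypothesis rather than derive it from Theorem~\ref{theorem-toeq}.

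Two smaller points. First, your forward direction, like the paper's, relies on $\Gamma$ having been built with the \TirName{Output} rule. Definition~\ref{def-OK} and closure alone do not relate a binding $\pubx{\out{\cid}} : \outty$ to $\Phi^o$, so say explicitly that $\Gamma$ comes from a typing derivation. Second, the degenerate partition $C = \varnothing$ is not trivial: every public reveal $\rvl{w}$ then has an honest revealer and lies in $\houtputs$, so $\houtputs$ is not empty. Restrict to $H = \setit{a}$, $C = \setit{b}$, as the paper does.
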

\begin{proof}
  Let $\iov(\prog) = S \cup V \cup O$ and assume some partitioning $H,C$
  with $H = \setit{a}$ and $C = \setit{b}$ for $a,b \in \setit{1,2}$.
  Let $\Phi_b$ and $\Phi^o$ be as defined in Definition \ref{def-OK}. 
  By Definition \ref{def-sty-closed} we have that $x \in \houtputs$
  implies $x \in \Phi_b$ and $(\out{\cid} : \outty) \in \Gamma$ for all
  $\out{\cid} \in O$. 
  Therefore by Definition \ref{def-OK} (and marginalizing away
  extraneous $(\stt : \sty) \in \Gamma$) we have:
  $$
  \condsep{\progd(\prog)}{(S_C \cup \Phi^o \cup O)}{\houtputs}{S_H}
  $$
  By definition of $\TirName{Output}$ it is also the case that for all
  $\out{\cid} \in O$ we have $\toeq{\prog} \models o \eop \phi$ for some
  $\phi \in \Phi^o$. Therefore since we can assume wlog that all revealed
  values are assigned to outputs, we have
  $
  \condsep{\progd(\prog)}{(S_C \cup O)}{\houtputs}{S_H}
  $,
  so by definition we have that for all $\store \in \runs(\prog)$: 
  $$
  \condd{\progd(\prog)}{S_H}{S_C \cup O}(\store) = \condd{\progd(\prog)}{S_H}{S_C \cup \houtputs \cup O}(\store)
  $$
  which is noninterference modulo output by Definition \ref{definition-NIMO}.
\end{proof}

\begin{theorem}
  \label{theorem-sty}
  If $\toeq{\prog} \vdash \varnothing,\varnothing \stredx \Gamma,R$ and $\Gamma$ is closed
  for $\prog$, then $\prog$ satisfies noninterference modulo output. 
\end{theorem}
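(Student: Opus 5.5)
The proof will reduce Theorem~\ref{theorem-sty} to Theorem~\ref{theorem-OK}. Closure of $\Gamma$ for $\prog$ is already a hypothesis. So it suffices to establish $\prog,\Gamma,R : {OK}$, and this will come from iterating Lemma~\ref{lemma-OK} along the typing derivation.

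Concretely, $\toeq{\prog} \vdash \varnothing,\varnothing \stredx \Gamma,R$ abbreviates a finite chain of interactive steps
\[
\toeq{\prog} \vdash \Gamma_0,R_0 \stredx \Gamma_1,R_1 \stredx \cdots \stredx \Gamma_n,R_n ,
\]
with $\Gamma_0 = R_0 = \varnothing$, $\Gamma_n = \Gamma$ and $R_n = R$. The argument proceeds by induction on $n$, with invariant $\prog,\Gamma_i,R_i : {OK}$.

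\textbf{Base case.} Show $\prog,\varnothing,\varnothing : {OK}$. With $\Gamma$ empty, the sets $\Phi_b$, $\Phi^p$ and $\Phi^o$ of Definition~\ref{def-OK} are all empty, and each condition becomes trivial.
\begin{itemize}
\item Condition (i) reduces to $\condsep{\progtt(\prog)}{S_{\setit{b}}}{\varnothing}{S_{\setit{a}}}$. This holds because conditional independence from the empty set is immediate: the conditional distribution of $\varnothing$ is identically $1$ wherever the conditioning marginal is nonzero, and $0$ otherwise, on both sides.
\item Condition (ii) and conditions (iv)--(v) are vacuous, since they quantify over bindings in $\Gamma$ or elements of $\Phi^p$.
\item Condition (iii) is separation from the empty set, which is trivial.
\end{itemize}

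\textbf{Inductive step.} Apply Lemma~\ref{lemma-OK} to $\prog,\Gamma_i,R_i : {OK}$ and the single step $\toeq{\prog} \vdash \Gamma_i,R_i \stredx \Gamma_{i+1},R_{i+1}$ to obtain $\prog,\Gamma_{i+1},R_{i+1} : {OK}$. The conclusion of Lemma~\ref{lemma-OK} is written with $\toeq{\prog}$ in place of $\prog$; I read this as a notational slip for $\prog$, since ${OK}$ is defined on protocols. The constraint passed to each step must be $\toeq{\prog}$ itself, or something it entails, so that SMT entailments made during typing are sound for $\runs(\prog)$. This is guaranteed by Theorem~\ref{theorem-toeq}. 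Finally, apply Theorem~\ref{theorem-OK} to $\prog,\Gamma,R : {OK}$ together with closure of $\Gamma$ to conclude noninterference modulo output.

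\textbf{Main obstacle.} The delicate point is the base case, specifically condition (i) with empty $\Phi_b$. One must check that Definition~\ref{def-condsep} is well behaved when the middle set is empty, i.e.\ that the conditional distribution of the empty set is $1$ wherever the conditioning marginal is nonzero. If the formal definition gives trouble there, I would instead start the induction from the first \TirName{GenShares} step, establishing OK directly from Definition~\ref{def-genshares}. The other point to confirm is that $\stredx^*$ is meant to include zero steps; if $n=0$, closure forces $\houtputs$ and $O$ to be empty, and noninterference modulo output then holds trivially.
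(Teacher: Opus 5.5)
Your proposal is correct and follows the paper's proof. The paper obtains $\prog,\Gamma,R : {OK}$ from Lemma~\ref{lemma-OK} and then concludes by Theorem~\ref{theorem-OK}, and your version usefully makes explicit what that one-line argument leaves implicit: the base case $\prog,\varnothing,\varnothing : {OK}$, which Lemma~\ref{lemma-OK} needs as a hypothesis, and the iteration of the lemma along the chain of interactive steps.
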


\begin{proof}
  By Lemma \ref{lemma-OK} we have that $\prog,\Gamma,R : {OK}$, so the result follows
  by Theorem \ref{theorem-OK}.
\end{proof}

\section{Integrity Types}
\label{section-ipj}

Our integrity type system is essentially a taint analysis, with
machinery to allow consideration of any $H,C$ partitioning.  We also
include rules to consider MAC codes to prevent cheating. We consider a
semi-homomorphic encryption scheme that is used in BDOZ and SPDZ
\cite{SPDZ1,SPDZ2,BDOZ,10.1007/978-3-030-68869-1_3} and is applicable
to various species of circuits. However other MAC schemes could be
considered with a modified derivation rule incorporated with our taint
analysis framework. BDOZ/SPDZ assumes a pre-processing phase that
distributes shares with MAC codes to clients.  This integrates well
with pre-processed Beaver Triples to implement malicious secure, and
relatively efficient, multiplication \cite{evans2018pragmatic}.

In BDOZ, a field value $v$ is secret shared among 2 clients 
with accompanying MAC values.  Each client $\cid$ gets a pair of
values $(v_\cid,m_\cid)$, where $v_\cid$ is a share of $v$
reconstructed by addition, i.e., $v = \fcod{v_1 \fplus v_2}$, and
$m_\cid$ is a MAC of $v_\cid$.  More precisely, $m_\cid = k +
(k_\delta * v_\cid)$ where \emph{keys} $k$ and $k_\delta$ are known
only to $\cid' \ne \cid$. The \emph{local key} $k$ is
unique per MAC, while the \emph{global key} $k_\delta$ is common to
all MACs authenticated by $\cid'$. This is a semi-homomorphic
encryption scheme that supports addition of shares and multiplication
of shares by a constant-- for more details the reader is referred to
Orsini \cite{10.1007/978-3-030-68869-1_3}. We note that while we
restrict values $v$, $m$, and $k$ to the same field in this
presentation for simplicity, in general $m$ and $k$ can be in
extensions of $\mathbb{F}_p$.

\ipjfig

As in previous work \cite{skalka-near-ppdp24}, we use
$\macgv{\mesg{w}}$ to refer to secret-shared values reconstructed with
addition, and by convention each share of $\macgv{\mesg{w}}$ is
represented as $\elab{\mesg{w\ttt{s}}}{\cid}$, the MAC of which is
represented as a $\elab{\mesg{w\ttt{m}}}{\cid}$ for all $\cid$, and
each client holds a key $\elab{\mesg{w\ttt{k}}}{\cid}$ for
authentication of the other's share. Each client also holds a global
key $\mesg{\ttt{delta}}$. For any such share identified by string $w$,
the BDOZ MAC scheme is defined by the equality predicate
$\macbdoz{w}$:
$$
\macbdoz{w} \defeq
\mesg{w\ttt{m}} = \mesg{w\ttt{k}} \fplus \ttt{(}\mesg{\ttt{delta}} \ftimes
\mesg{w\ttt{s}}\ttt{)}
$$

\subsection{Typing and Integrity Labeling}

The integrity of values during protocol executions depends on the
partitioning $H,C$, since any value received from a member of $C$ is
initially considered to be low integrity. But as for confidentiality
types, we want our basic type analysis to be generalizable to
arbitrary $H,C$. So integrity type judgements, defined in Figure
\ref{fig-ipj}, are of the form $\ipj{\eqs}{\prog}{\Delta}$, where
$\eqs$ plays a similar role as in confidentiality types, as a possibly
weakened expression of algebraic properties of $\prog$, and
environments $\Delta$ record dependency information that can be
specialized to a given $H,C$. Integrity environments are ordered lists
since order of evaluation is important for tracking integrity--
validating low integrity information through MAC checking allows
subsequent treatment of it as high integrity, but not before. Note
that MAC checking is \emph{not} a form of endorsement as it is usually
defined (as a dual of declassification
\cite{sabelfeld2009declassification}), since a successful check
guarantees no deviation from the protocol.

The typing of protocols depends on a type judgement for expressions
of the form $\itj{\cid}{\be}{V}$, which records
messages and reveals $V$ used in the construction of $\be$.
This source is recorded in $\Delta$ for message sends-- so for
example, for some string $w$ letting:
$$
\prog\ \defeq\ (\xassign{\mx{w}{1}}{\secret{w} - \flip{w}}{2};\ \rvl{w} := \mx{w}{1})
$$
the following judgement is valid:
$$
\ipj{\toeq{\prog}}{\prog}{(\mx{w}{1} : \ity{2}{\varnothing}; \rvl{w} : \ity{1}{\mx{w}{1}})}
$$
More formally, we define validity as follows:
\begin{definition}
  Given pre-processing predicate $\eqspre$ and protocol $\prog$, 
  we say $\ipj{\eqs}{\prog}{\Delta}$ is \emph{valid}
  iff it is derivable and $\eqspre \wedge \toeq{\prog} \models E$.
\end{definition}

In the above example, note that the environment implicitly records
that $\rvl{w}$ depends on data from client 1, which in turn
depends on data from client 2. The rule for MAC
checking allows an overwrite of the dependency. We explore
the application of this in more detail in Section
\ref{section-examples}.

\cheatjfig

\subsection{Assigning Integrity Labels}

In all cases, we want to determine the integrity level of computed
values based on any partitioning $H,C$. In the previous example,
we want to say that $\rvl{w}$ has low integrity if either client 1 or
2 is corrupt.  To this end we define \emph{security labelings}
$\seclev$ which are mappings from variables $x$ to security levels
$\latel$. By default, given a partitioning $H,C$ this labeling maps
messages to the integrity level of the owner.
\begin{definition}  
  Given $H,C$,
  define $\seclev_{H,C}$ such that $\seclev_{H,C}(\mx{w}{\cid} ) = \hilab$
  if $\cid \in H$  and $\lolab$
  otherwise for all $\mx{w}{\cid}$.
\end{definition}
But the values of messages and reveals may be affected by values sent
by other parties, which is recorded in the $\Delta$ obtained from its
typing. To determine whether this impacts integrity for any given
$H,C$, we apply the inductively defined rewrite relation
$\cheatj{\Delta}{H,C}{\seclev}$, axiomatized in Figure
\ref{fig-cheatj}. The composition of typing and this latter rewriting
obtains essentially a traditional taint analysis, modulo MAC checking.
Our main type soundness result for integrity is then given as follows.
Proof details are available in the appendix of an extended arXiv
version of previous work \cite{skalka-near-esop25-arxiv}.
\begin{theorem}
  \label{theorem-ipj}
  Given  $\eqspre$ and $\prog$ with
  $\iov(\prog) = S \cup V \cup O$, if
  $\ipj{\eqs}{\prog}{\Delta}$ is valid
  and for all $H,C$ with $\cheatj{\Delta}{H,C}{\seclev}$ 
  we have $\seclev(x) = \hilab$ for all $x \in \houtputs \cup O_H$, then $\prog$
  has integrity for all $H,C$ and $\adversary$.
\end{theorem}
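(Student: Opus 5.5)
We will prove Theorem \ref{theorem-ipj} by fixing an arbitrary partition $H,C$ and adversary $\adversary$, together with an adversarial run $\store \in \aruns(\prog)$ produced from some initial memory $\store_1 \uplus \store_2$ with $\preproc(\store_2)$. We then construct a passive run $\store' \in \runs(\prog)$ that agrees with $\store$ on $S_H \cup X$, where $X = (\houtputs \cup O_H) \cap \dom(\store)$. The construction keeps the honest secrets, the random tape and the preprocessed memory. It replaces the corrupt secrets and corrupt randomness by a choice that reproduces the honest views $Y = \cinputs \cap \dom(\store)$ actually observed in $\store$; equivalently, we reinterpret $\arewrite$ as a choice of corrupt inputs. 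Since $\store'$ agrees with $\store$ on the conditioning variables $S_H \cup Y$, equality of the two conditional distributions in Definition \ref{def-integrity} would follow from a stronger pointwise fact. That fact is: for fixed $S_H$ and $Y$, every variable in $X$ is a deterministic function of the same honest-side data in both semantics. Then the pushforwards of the uniform tapes coincide, because both distributions are induced by uniform tapes as in Definition \ref{def-progd}.

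The core step is a simulation lemma, proved by induction on the typing derivation of $\ipj{\eqs}{\prog}{\Delta}$ (rules \TirName{Send}, \TirName{Seq}, \TirName{MAC}) in lockstep with $\cheatj{\Delta}{H,C}{\seclev}$. The invariant is that every variable $x$ assigned so far with $\seclev(x) = \hilab$ has the same value in the adversarial run as in the matched passive run, as a function of $S_H$, the honest randomness and $Y$. In case \TirName{Send} with $\cid \in H$, the expression judgement $\itj{\cid}{\be}{V}$ records every message and reveal read, so the label $\hilab \wedge \bigwedge_{x \in V}\seclev(x)$ being high means all inputs are high. The induction hypothesis together with the definition of $\lcod{\cdot}{\cid}$ then gives equal values. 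If $\cid \in C$ the label is $\lolab$ and nothing is required. Case \TirName{Seq} is immediate from the ordered composition of $\Delta$. Lemma \ref{lemma-toeq} and validity $\eqspre \wedge \toeq{\prog} \models \eqs$ let us transport algebraic facts entailed by $\eqs$ to every passive run.

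The main obstacle is case \TirName{MAC}, where a share $\mx{w\ttt{s}}{\cid}$ received from a corrupt party is relabelled $\ity{\cid}{\varnothing}$, hence high. Here we must argue that if the honest check $\elab{\assert{\macbdoz{w}}}{\cid}$ passes in the adversarial semantics, so that the run does not abort, then the corrupt share equals its honest value. Suppose the adversary altered the share by some $\delta_s \ne 0$ and the MAC by $\delta_m$. Passing the check requires $\delta_m = \mesg{\ttt{delta}} \ftimes \delta_s$. The adversary's view is independent of the honest global key by the $\eqspre$ assumptions on preprocessing, so this holds with probability $1/p$. We plan to handle this perfectly, in line with the paper's equality-of-distributions model, as follows. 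Runs in which cheating passes an assert are exactly those where the key is determined by the adversary's deviations. Conditioned on $S_H \cup Y$, where $Y$ includes the tampered values, these contribute mass only in the $\bot$-padded components of $\botruns$ or coincide with honest behaviour. If this cannot be made exact, we will follow the cited extended proof \cite{skalka-near-esop25-arxiv} and treat the forgery event as excluded by the assumed MAC soundness. Abort cases are absorbed by the restriction of $X$ and $Y$ to $\dom(\store)$ and the $\bot$-padding. Finally, the hypothesis that $\seclev(x) = \hilab$ for all $x \in \houtputs \cup O_H$ applies the invariant to exactly the variables in $X$, which completes the argument.
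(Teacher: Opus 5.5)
There is a genuine gap in the step you treat as routine: going from your ``stronger pointwise fact'' to equality of $\condd{\progtt(\prog,\adversary)}{X}{S_H \cup Y}$ and $\condd{\progtt(\prog)}{X}{S_H \cup Y}$. Your induction on the typing does show, modulo \TirName{MAC}, that each high variable in $X$ is the same function of honest-side data in both semantics. That is not enough, because conditioning on $Y$ selects \emph{different} sets of tapes in the two distributions. In $\progtt(\prog,\adversary)$ the values of $Y$ are whatever $\arewrite$ computes from the corrupt memory, and that memory contains the honest-to-corrupt messages in $X$. You would need $Y$ to be correlated with the honest randomness in the same way in both distributions (e.g.\ independent of it given $S_H$). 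Neither the typing nor the labelling constrains how $Y$ is computed.

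Concretely, take $\prog = \xassign{\mx{a}{2}}{(\secret{x} \fminus \flip{x})}{1};\ \xassign{\mx{b}{1}}{\flip{y}}{2}$ with trivial $\eqspre$ and $\eqs$, and $\Delta = (\mx{a}{2} : \ity{1}{\varnothing};\ \mx{b}{1} : \ity{2}{\varnothing})$. For every partition, every element of $\houtputs \cup O_H$ is labelled $\hilab$. Now take $H = \setit{1}$, $C = \setit{2}$, and let $\adversary$ send back the received value of $\mx{a}{2}$ as $\mx{b}{1}$. Then $X = \setit{\mx{a}{2}}$ and $Y = \setit{\mx{b}{1}}$, and the adversarial conditional equals $1$ at every adversarial run. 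In $\progtt(\prog)$, however, the two messages are independent given $\sx{x}{1}$, so the passive conditional is $1/p$ at every $\store'$.

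Your construction of $\store'$ actually succeeds here (set $\rx{y}{2}$ to the echoed value), and $\mx{a}{2}$ is literally $\sx{x}{1} \fminus \rx{x}{1}$ in both runs. So the failure is exactly at ``the pushforwards coincide.'' As far as I can see, this example meets every hypothesis of Theorem~\ref{theorem-ipj} under the definitions as written. No sharper induction will close the gap: the statement needs either a change to what Definition~\ref{def-integrity} conditions on or a restriction on adversaries. The paper gives no proof of its own here, only a pointer to \cite{skalka-near-esop25-arxiv}.

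Three further steps are unsupported.
\begin{itemize}
\item Re-choosing corrupt secrets and randomness so as to reproduce $Y$ is impossible in general. The protocol $\xassign{\mx{b}{1}}{\flip{y}}{2};\ \xassign{\mx{c}{1}}{\flip{y}}{2}$ satisfies the hypotheses and every passive run gives the two messages equal values, yet an adversary may send different ones. Definition~\ref{def-integrity} only asks $\store'$ to agree with $\store$ on $S_H \cup X$, so you can drop this requirement. But then you are comparing conditionals at different conditioning points, which your argument does not handle.
\item The \TirName{MAC} case is not closed. A run in which a tampered share passes the honest check does not abort, so it is not among the $\bot$-padded runs. Its relabelled share also differs from the honest value, so it does not coincide with honest behaviour either. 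The theorem has no MAC-unforgeability hypothesis, and under exact equality of distributions a forgery event of probability $1/p$ cannot be discarded. Deferring to the cited proof is not an argument.
\item The treatment of aborts is incomplete. Aborting runs that agree with $\store$ on $S_H \cup Y$ can put $\bot$-valued realizations of $X$ into the adversarial conditional. Restricting $X$ and $Y$ to $\dom(\store)$ does not by itself absorb them.
\end{itemize}
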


\section{The $\metaprot$ Language and Automated Hoare Logic}
\label{section-metalang}

The $\metaprot$ language \cite{skalka-near-ppdp24} includes structured
data and function definitions for defining composable protocol
elements at a higher level of abstraction than $\minicat$.  The
$\metaprot$ language is a \emph{metalanguage} aka metaprogramming
language, where an $\fedprot$ protocol is the result of
computation. In addition to these declarative benefits of $\metaprot$,
component definitions support compositional verification of larger
protocols. This includes an automated verification of algebraic
program properties expressed as Hoare logic assertions that we
introduce in this Section. Furthermore, we can compositionally verify
confidentiality and integrity properties expressed as types, as we
will discuss in Sections \ref{section-metaty} and especially
\ref{section-examples} where we present and discuss
extended examples of $\metaprot$ programs.

\metaprotsyntaxfig

\subsection{Syntax and Semantics}

The syntax of $\metaprot$ is defined in Figure
\ref{fig-metaprotsyntax}.  It includes a syntax of values $\mv$ that
include client ids $\cid$, identifier strings $w$, expressions $\be$
in field $\mathbb{F}_p$, record values, and $\minicat$ variables
$x$. $\metaprot$ \emph{expression} forms allow dynamic construction of these
values. Note in particular that $\concat$ is string concatenation.
$\metaprot$ \emph{instructions} allow dynamic
construction of $\minicat$ protocols $\prog$ that incorporate expression
evaluation. The syntax also supports definitions of functions that
compute values $\mv$ and, as a distinct form, functions that compute
protocols $\prog$.  The latter requires type annotations on arguments
which are restricted to being either strings, client ids, or
records-- we use this type information to support Hoare logic
automation as we discuss below in Section \ref{section-metahtrip}.
While type inference for this simple type system would be possible,
it would complicate the novel theory and implementation elements
of our presentation, so we just check annotations instead.

Formally, we consider a complete metaprogram to include
both a codebase and a ``main'' program that uses the codebase.
For all functions (both the expression and instruction forms) we disallow
recursion, which allows us to guarantee the decidability
of type dependence (Section \ref{section-metaty}).
\begin{definition}
A \emph{codebase} $\codebase$ is a list of function 
declarations. We write $ \codebase(f) = y_1,\ldots,y_n,\ e$
if $f(y_1,\ldots,y_n) \{ e \} \in \codebase$, and we
write  $ \codebase(f) = y_1 : \tau_1, \ldots, y_n : \tau_n, \instr$
if $f(y_1 : \tau_1, \ldots, y_n : \tau_n) \{ \instr \} \in \codebase$.
\end{definition}

\metaprotexprsemanticsfig

We define a big-step evaluation relation $\redx$ in Figures
\ref{fig-metaprotexprsemantics} and \ref{fig-metaprotinstrsemantics}
for expressions and instructions, respectively.  In this definition we
write $e[\mv/y]$ and $\instr[\mv/y]$ to denote the substitution of
$\mv$ for free occurrences of $y$ in $e$ or $\instr$ respectively. The
rules are mostly standard.


\paragraph{Sanity conditions on programs.} We assume that
any $\metaprot$ codebase $\codebase$ includes a function
$\ttt{main}()$. And as for $\minicat$, we assume that $\metaprot$
programs are semantically well-formed-- that is, that
$\ttt{main}() \redx \prog$ and $\prog$ is well-formed.

\metaprotinstrsemanticsfig

\subsection{Abstract pre and postcondition annotations}

$\metaprot$ instruction functions can optionally be annotated with pre
and postconditions. To model this in our theory we posit functions
$\precond$ and $\postcond$ that specify conditions for instruction
functions $f \in \dom(\codebase)$.  Note that $\precond(f)$ or
$\precond(f)$ may be undefined if no annotation is given for $f$. In
actual code, we write pre and postconditions directly before and after
functions to define them. Here is a simple example function
$\ttt{bar}$ with pre and postcondition annotations. Note that the
annotations are abstract with respect to the input string $\ttt{x}$,
and that string concatenation can be used in pre and postcondition
constraints:
\begin{verbatimtab}
precondition: ( m[x++"foo"]@1 == 3 )
bar(x:string){ m[x++"bar"]@2 := (m[x++"foo"] + m[x++"baz"])@1 }
postcondition: ( [m[z++"bar"]@2 == m[x++"baz"]@1 + 3 )
\end{verbatimtab}
Since expressions can occur in $\metaprot$ constraints, they are not
SMT constraints, but abstractions of constraints parameterized by
the formal parameters of the function they annotate. We denote these
\emph{abstract constraints} as $\peq$ with the syntax given
in Figure \ref{fig-peq}. When the function is applied, the pre and postconditions
are instantiated as SMT constraints, via evaluating the expressions
within the constraints. The evaluation semantics of abstract constraints
are also given in Figure \ref{fig-peq}.

\peqfig

In the case of \ttt{main}, since it has no arguments, the
pre and postconditions must be concrete SMT expressions.
For example, given the \ttt{bar} function above, we
could define a full program as follows, where \ttt{T}
is the true constraint. 
\begin{verbatimtab}
precondition: ( T )
main(){
  m["thisfoo"]@1 := 3; m["thisbaz"]@1 := 2; bar("this")
}
postcondition: ( [m["thisbar"]@2 == 5 )
\end{verbatimtab}
There are no preconditions to this \ttt{main} because they are
all established by the program, and the postcondition follows
as a consequence of the application \ttt{bar("this")}.

The use of pre and postconditions in \ttt{main} allows verification of
whole-program (protocol) correctness. Further, the use of annotations
on function definitions supports \emph{compositional} verification of
program components. In general postconditions can serve as
pre-verified properties of functions that can be reused in, and reduce
overhead of, whole protocol verification. A representative example of
GMW $\ttt{and}$ gates \cite{evans2018pragmatic} implementations is
given in Figure \ref{fig-gmwprepost}. Here, the $\ttt{xorgmw}$ and
$\ttt{andgmw}$ functions takes as input strings $x$ and $y$ that
distinguish values secret-shared between clients 1 and 2 as
$\mx{x}{1}, \mx{x}{2}$ and $\mx{y}{1},\mx{y}{2}$ respectively-- that
is, these shares encode $\ttt{RECON}(x)$ and $\ttt{RECON}(y)$.  The
postcondition of the non-interactive $\ttt{xorgmw}$ protocol specifies
that it results in a value shared between clients 1 and 2 as
$\ttt{RECON}(z)$ that is the xor-ing of $\ttt{RECON}(x)$ and
$\ttt{RECON}(y)$. The postcondition of the interactive protocol
defined by the $\ttt{andgmw}$ specifies that it results in a value
shared between clients 1 and 2 as $\ttt{RECON}(z)$ that is the
conjunction of $\ttt{RECON}(x)$ and $\ttt{RECON}(y)$. The
postcondition for $\ttt{andgmw}$ especially is non-trivial to prove,
and expressing these postconditions in a compositional manner reduces
proof overhead and improves scalability to large circuits that may
incorporate a multitude of $\ttt{and}$ gates.

\begin{fpfig}[t]{GMW $\ttt{and}$ gates with postcondition.}{fig-gmwprepost}
\begin{verbatimtab}
RECON(x) { (m[x]@1 + m[x]@2) }

xorgmw(z:string, x:string , y:string) {
  m[z]@1 := (m[x] + m[y])@1;
  m[z]@2 := (m[x] + m[y])@2
}
postcondition:  ( RECON(z) == RECON(x) + RECON(y))

andtablegmw(x:string, y:string , z:string) {
  let r11 = (r[z] + (m[x] + 1) * (m[y] + 1)) in
  let r10 = (r[z] + (m[x] + 1) * (m[y] + 0)) in
  let r01 = (r[z] + (m[x] + 0) * (m[y] + 1)) in
  let r00 = (r[z] + (m[x] + 0) * (m[y] + 0)) in
  { row1 = r11; row2 = r10; row3 = r01; row4 = r00 }
}

andgmw(z:string, x:string, y:string) {
  let table = andtablegmw(x,y,z) in
  m[z]@2 := OT4((m[x], m[y])@2, table.row4, table.row3, table.row2, table.row1)@1;
  m[z]@1 := r[z]@1
}
postcondition:  ( RECON(z) == RECON(x) * RECON(y))
\end{verbatimtab}
\end{fpfig}

\subsection{$\metaprot$ Automated Abstract Hoare Logic}
\label{section-metahtrip}

In the $\metaprot$ theory, we introduce Hoare logic assertions
of the form $\htrip{\peq_1}{\cmd}{\peq_2}$. The correctness
condition of this assertion is compositional, in that instantiated
abstract Hoare triples evaluate to valid $\minicat$ Hoare triples.
This is formally elaborated below in Lemma
\ref{lemma-metahtrip} and Corollaries \ref{cor-metahtrip} and
\ref{cor-mainhtrip} below.

In Figure \ref{fig-metahtrip} we give the derivation rules for the
abstract Hoare logic. These are based on $\minicat$ Hoare logic, but
are algorithmic. A main predicate for these rules is the
$\TirName{GenEntails}$ rule, which encodes a generalized version of
entailment. By using fresh variables to instantiate abstract
constraints, we guarantee entailment for any instantiation, as
follows.

\begin{lemma}
  \label{lemma-fresh}
  Given $\eqs$ and $w, \cid = \fresh(\stringty, \cidty)$ and
  some $w',\cid'$ where 
  $\peq[w/y_1,\cid/y_n] \redx \eqs$ and
  $\peq[w'/y_1,\cid'/y_n] \redx \eqs'$.
  Then $\store \models \eqs'$ implies $\store \models \eqs$. 
\end{lemma}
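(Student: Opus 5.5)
The plan is a structural induction on the abstract constraint $\peq$, following the evaluation rules of Figure~\ref{fig-peq}, with $\store$, $\eqs$ and $\eqs'$ as in the statement. The statement compares one store $\store$ against two instantiations of $\peq$: a generic one, obtained by substituting the fresh $w,\cid$ for $y_1,y_n$, and a specific one, obtained by substituting $w',\cid'$.

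First, I will prove an auxiliary substitution lemma for $\metaprot$ expressions, by induction on the derivation of $e[w/y_1,\cid/y_n] \redx \mv$ using the rules of Figure~\ref{fig-metaprotexprsemantics}. The claim is that $e[w'/y_1,\cid'/y_n]$ also evaluates. Moreover, its result is obtained from $\mv$ by the syntactic map $\rho$ that replaces the fresh string $w$ by $w'$ inside identifiers, and the fresh client $\cid$ by $\cid'$ in locations. The interesting cases are:
\begin{itemize}
\item $\concat$, which gives $\rho(w w'') = w' w''$ because $w$ is fresh and cannot arise by accident from another concatenation;
\item $\elab{e_1}{e_2}$ and the variable constructors $\mesg{e}$, $\secret{e}$, $\flip{e}$, $\rvl{e}$;
\item function application, which needs the no-recursion assumption so that the induction is well-founded.
\end{itemize}
Lifting this lemma through $\toeq{\cdot}$ and the constraint connectives gives $\eqs' = \rho(\eqs)$ syntactically.

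Second, I will show by induction on $\eqs$ that $\store \models \rho(\eqs)$ implies $\store \models \eqs$ for the same store $\store$. The connective cases $\wedge$, $\vee$, $\impl$ and $\neg$ go through directly, except that $\neg$ and $\impl$ flip polarity. I will therefore strengthen the induction to an ``iff'' on atoms, so that the negative positions are handled as well.

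The whole weight then falls on the atomic case $\phi_1 \eop \phi_2$. Here I need $\store(\rho(x)) = \store(x)$ for every located variable $x$ occurring in $\eqs$. When $x$ does not mention the fresh $w$ or $\cid$, we have $\rho(x) = x$, so this is trivial. When it does, I must use freshness. Since $w,\cid = \fresh(\stringty,\cidty)$, such an $x$ cannot lie in the domain of any store produced by the programs under consideration. My first attempt is to argue that $\store$ is extended on these fresh variables by copying the values at $\rho(x)$; this extension is legitimate because $\rho$ is injective on $\vars(\eqs)$, again by freshness. After this, $\store$ agrees on $x$ and $\rho(x)$.

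I expect this atomic step to be the main obstacle. The statement insists on the same $\store$ on both sides, so I must justify that the relevant $\store$ may be taken to be defined on the fresh variables consistently with their $\rho$-images. If that justification fails, I would fall back on reading $\store \models \eqs$, for variables outside the program's domain, as quantifying over the choice of those values. That reading is exactly how \TirName{GenEntails} uses the lemma.
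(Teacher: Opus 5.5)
Your syntactic half is sound, and more careful than the paper, which simply posits the map $\mathit{rename}$. By induction on the evaluation derivation you get $\eqs' = \rho(\eqs)$, and freshness of $w,\cid$ is what makes $\rho$ a well-defined function on identifiers built with $\concat$.

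\textbf{The gap is in the semantic half.} Your atomic step needs $\store(\rho(x)) = \store(x)$ for every $x \in \vars(\eqs)$, and nothing gives you this. The relation $\models$ quantifies over all memories, not over runs of programs. So $\store$ may already assign the fresh variables values unrelated to their $\rho$-images, and then you cannot ``extend'' it there. In fact the same-store claim is false. Take $\peq = (\mesg{y_1}\ttt{@}1 \eop 0\ttt{@}1)$, so that $\eqs = (\mx{w}{1} \eop 0)$ and $\eqs' = (\mx{w'}{1} \eop 0)$. A store with $\mx{w'}{1} \mapsto 0$ and $\mx{w}{1} \mapsto 1$ satisfies $\eqs'$ but not $\eqs$.

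Your fallback of reinterpreting $\models$ for out-of-domain variables changes the notion of entailment used in \TirName{GenEntails}; it does not prove anything about the existing one. Separately, $\rho$ need not be injective on $\vars(\eqs)$. If $w'$ equals a string constant already occurring in $\peq$, two distinct variables of $\eqs$ collapse to one in $\eqs'$, which is exactly the case the paper's proof singles out. Freshness only makes $\rho$ well defined, and that is all the construction requires.

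\textbf{The missing idea is to let the conclusion concern a different store.} Define $(\store\circ\rho)(x) \defeq \store(\rho(x))$ on $\vars(\eqs)$; this is precisely the paper's $\store(x) \defeq \store'(\mathit{rename}(x))$. Then prove by induction on $\eqs$ that $\store \models \rho(\eqs)$ iff $\store\circ\rho \models \eqs$. The atomic case is just that evaluating $\rho(\phi)$ under $\store$ equals evaluating $\phi$ under $\store\circ\rho$, and the biconditional handles $\neg$ and $\impl$.

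What is actually provable, then, is that $\store \models \eqs'$ implies $\store\circ\rho \models \eqs$. Equivalently, your copy-extension of $\store$ on the fresh variables models $\eqs$, provided $\store$ is first restricted away from those variables. The paper's own argument likewise only shows this much: its ``contradiction'' is derived for a store other than $\store'$.

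This corrected form is exactly what Lemma~\ref{lemma-genentails} needs. From $\store \models \eqs_1 = \rho(\eqs_1')$ you get $\store\circ\rho \models \eqs_1'$. The fresh entailment then gives $\store\circ\rho \models \eqs_2'$, and hence $\store \models \rho(\eqs_2') = \eqs_2$. This works because the same $\rho$ serves for both constraints.
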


\begin{proof}
  If $w',\cid'$ are fresh or chosen the result is
  trivial.  Otherwise, it must be the case that at least two
  distinct variables $x_1$ and $x_2$ in $\eqs$ are renamed as a single
  variable $x$ in $\eqs'$, given freshness of $w,\cid$. Let
  $\mathit{rename}$ be the function that maps variables in $\eqs$ to
  variables in $\eqs'$ such that $\mathit{rename}(\eqs) = \eqs'$. Suppose
  on the contrary that 
  $\store' \models \eqs'$ and $\store' \not\models \eqs$, and define $\store$ such that:
  $$
  \forall x \in \dom(\vars(E)) . \store(x) \defeq \store'(\mathit{rename}(x)) 
  $$
  Then $\store$ must be a model of $\eqs$, which is a contradiction. 
\end{proof}

\begin{lemma}
  \label{lemma-genentails}
  If $\forall y_1:\tau_1,\ldots,y_n:\tau_n . \peq_1 \models \peq_2$ and
  $\peq_1[\mv_1/y_1]\cdots[\mv_n/y_n] \redx \eqs_1$ and
  $\peq_2[\mv_1/y_1]\cdots[\mv_n/y_n] \redx \eqs_2$, then
  $\eqs_1 \models \eqs_2$.  
\end{lemma}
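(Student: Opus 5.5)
The plan is to unfold the hypothesis via \TirName{GenEntails} and transfer the entailment obtained at the fresh instantiation to the arbitrary one, using Lemma \ref{lemma-fresh}. By inversion of \TirName{GenEntails}, the judgement $\forall y_1:\tau_1,\ldots,y_n:\tau_n . \peq_1 \models \peq_2$ gives fresh values $\mv'_1,\ldots,\mv'_n = \fresh(\tau_1,\ldots,\tau_n)$. It also gives constraints $\eqs'_1,\eqs'_2$ with $\peq_1[\mv'_1/y_1\cdots\mv'_n/y_n] \redx \eqs'_1$, $\peq_2[\mv'_1/y_1\cdots\mv'_n/y_n] \redx \eqs'_2$ and $\eqs'_1 \models \eqs'_2$. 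Evaluation of abstract constraints (Figure \ref{fig-peq}) is deterministic, so $\eqs_1,\eqs_2$ are uniquely determined by the given $\mv_i$.

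The key object is the map $\mathit{rename}$ from the proof of Lemma \ref{lemma-fresh}. Because the $\mv'_i$ are fresh, every located variable of $\eqs'_1 \wedge \eqs'_2$ built from some $y_i$ contains the fresh string or client id. Substituting $\mv_i$ for $\mv'_i$ therefore induces a single function on variables. This function sends the variables of $\eqs'_j$ to those of $\eqs_j$, with $\mathit{rename}(\eqs'_j) = \eqs_j$ for $j = 1,2$, and the same $\mathit{rename}$ works for both constraints. I would prove this by a routine induction on the evaluation derivations: string concatenation, field access, and $\elab{\cdot}{\cdot}$ all commute with replacing fresh atoms.

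Now take any $\store \models \eqs_1$ and define $\store'(x) \defeq \store(\mathit{rename}(x))$ on the variables of $\eqs'_1 \wedge \eqs'_2$. Satisfaction is preserved by pulling back along a variable map, so $\store' \models \eqs'_1$. Hence $\store' \models \eqs'_2$, and pushing forward along the same map gives $\store \models \eqs_2$. This is exactly the direction argued in Lemma \ref{lemma-fresh}, applied jointly to both constraints rather than to each one separately. That establishes $\eqs_1 \models \eqs_2$.

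The main obstacle is this joint-renaming step. Lemma \ref{lemma-fresh} as stated concerns a single abstract constraint and only $\stringty, \cidty$ parameters. I would first extend it componentwise to records, whose fresh values are records of fresh atoms. Then I would apply it to the conjunction $\peq_1 \wedge \neg\peq_2$, or simply reuse its renaming argument on the pair. The point to check is that $\mathit{rename}$ is well defined even when it is non-injective, i.e.\ when distinct $\mv_i$ coincide. This holds because it is a function from fresh-indexed variables to concrete ones, which is all the pullback argument needs.
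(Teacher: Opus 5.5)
Your proposal is correct and follows the same route as the paper: invert \TirName{GenEntails} to obtain $\eqs_1' \models \eqs_2'$ at a fresh instantiation, then transfer this to the given instantiation using the renaming argument behind Lemma~\ref{lemma-fresh}. Your version is tighter than the paper's: using one renaming for both constraints (pull $\store$ back to a model of $\eqs_1'$, then push the resulting model of $\eqs_2'$ forward), or applying Lemma~\ref{lemma-fresh} to $\peq_1 \wedge \neg\peq_2$, is what actually justifies the paper's final step ``$\store \models \eqs_2'$ implying $\store \models \eqs_2$ by construction'', which runs opposite to the direction Lemma~\ref{lemma-fresh} states.
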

\begin{proof}
  By assumption and definition of \TirName{GenEntails} we have
  $\peq_1[\mv_1'/y_n \cdots \mv_n'/y_n] \redx \eqs_1'$ and
  $\peq_2[\mv_1'/y_n \cdots \mv_n'/y_n] \redx \eqs_2'$ and $\eqs_1'
  \models \eqs_2'$ for some $\mv_1',\ldots,\mv_n' =
  \fresh(\tau_1,\ldots,\tau_n)$.  Now suppose on the contrary that
  $\eqs_1 \not\models \eqs_2$.  This implies the existence of $\store$
  such that $\store \models \eqs_1$ and $\store \not\models \eqs_2$.
  But Lemma \ref{lemma-fresh} implies that $\store \models \eqs_1'$,
  so by definition we have $\store \models \eqs_2'$ implying $\store
  \models \eqs_2$ by construction, which is a contradiction.
\end{proof}

\metahtripfig

The \TirName{HardPack} rule uses \TirName{GenEntails} to ensure that
the annotated pre and postconditions are correct for a given function,
if they are defined. It is in fact not necessary for \emph{any} pre
and postconditions to be defined, since the analysis will infer valid
pre and postconditions in their absence by rule \TirName{Fn}, but
their intent is to enable fine-tuning of pre and postconditions for
improving verification efficiency as discussed above and later in
Section \ref{section-examples}.  An important implementation detail is
that verification of any given function $f$ only needs to be performed
once, and can then be memoized for other applications using the
\TirName{App} rule as a consequence of Lemma \ref{lemma-genentails}
(and more precisely of Corollary \ref{cor-metahtrip}).

The \TirName{Assign} and \TirName{Assert} rules propagate equations
induced by assignments and assertions to post and preconditions
respectively.  Note that without the use of \TirName{HardPack}, the
entire protocol would accumulate in postconditions of \ttt{main}. The
\TirName{Seq} rule relies on the correctness of our \TirName{Hyp} rule
to join pre and postconditions as we discuss in the relevant case of
Lemma \ref{lemma-metahtrip}, which is the foundational correctness
result for the analysis.

\paragraph{Pre and post sanity conditions.} We require that
annotated pre and postconditions on functions will evaluate to SMT
constraints when instantiated with function arguments, and refer only
to variables that occur in their bodies. Deduction rules other than
\TirName{HardPack} enforce this by definition.  For any $f$ with
$\codebase(f) = y_1 : \tau_1, \ldots, y_n : \tau_n, \cmd$,
$\precond(f) = \peq_1$, and $\postcond(f) = \peq_2$, we require that
for all $[\mv_1/y_1]\cdots[\mv_n/y_n]$ with $\mv_1 : \tau_1, \ldots
\mv_n : \tau_n$ there exists $\eqs_1$, $\eqs_2$
where:
\begin{mathpar}
  \cmd[\mv_1/y_1]\cdots[\mv_n/y_n] \redx \prog
  
  \peq_1[\mv_1/y_1]\cdots[\mv_n/y_n] \redx \eqs_1
  
  \peq_2[\mv_1/y_1]\cdots[\mv_n/y_n] \redx \eqs_2
\end{mathpar}
and $\vars(\eqs_1) \cup \vars(\eqs_2) \subseteq \vars(\prog)$.

\subsection{Abstact Hoare Logic Soundness}

We begin by showing that the analysis preserves the property that in
any $\htrip{\peq_1}{\prog}{\peq_2}$, fully instantiating $\prog$
also fully instantiates $\peq_1$ and $\peq_2$. The result
follows by sanity conditions on pre and postcondition annotations,
the definition of analysis, and induction on $\cmd$.
\begin{lemma}
  \label{lemma-closure}
  If $\htrip{\peq_1}{\prog}{\peq_2}$ and $\instr[\mv_1/y_1]\cdots[\mv_n/y_n] \redx \prog$
  then $\peq_1[\mv_1/y_1]\cdots[\mv_n/y_n] \redx \eqs_1$ and
  $\peq_2[\mv_1/y_1]\cdots[\mv_n/y_n] \redx \eqs_2$ for some $\eqs_1, \eqs_2$. 
\end{lemma}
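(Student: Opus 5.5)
The proof goes by induction on the derivation of $\htrip{\peq_1}{\cmd}{\peq_2}$ in Figure \ref{fig-metahtrip}, with case analysis on the last rule used. I read the statement as being about an instruction $\cmd$ whose instantiation $\cmd[\mv_1/y_1]\cdots[\mv_n/y_n]$ evaluates to $\prog$. I write $\theta$ for the substitution $[\mv_1/y_1]\cdots[\mv_n/y_n]$. We may assume $\theta$ closes $\cmd$, since otherwise $\cmd\theta$ could not evaluate to a $\minicat$ protocol.

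The key auxiliary fact is a closure property of abstract constraints. If every subexpression $e$ occurring in $\peq$ also occurs, up to the substitutions made by the derivation, in $\cmd$, and the evaluation of $\cmd\theta$ evaluates each such $e\theta$, then $\peq\theta \redx \eqs$ for some $\eqs$. This follows by a simple induction on $\peq$ using the rules of Figure \ref{fig-peq}. The $\wedge,\vee,\impl,\neg$ cases are compositional. The $\eop$ and $\bop$ cases need $e_i\theta \redx \elab{\be_i}{\cid_i}$.

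\textbf{Base cases.} In case \TirName{Mesg}, $\cmd = \xassign{e_1}{e_2}{e_3}$. Inverting the instruction evaluation rule for $\assign{e_1}{e_2}$ gives $e_1\theta \redx x$ and $\elab{e_2}{e_3}\theta \redx \elab{\be}{\cid}$. Hence $(e_1 \eop \elab{e_2}{e_3})\theta$ evaluates, and the precondition $\eqtrue$ evaluates trivially. In case \TirName{Assert}, inversion of the assert rule gives $e_i\theta \redx \be_i$ and $e_3\theta \redx \cid$, so $\elab{e_1}{e_3} \eop \elab{e_2}{e_3}$ evaluates under $\theta$.

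\textbf{Inductive cases.} In case \TirName{Seq}, inverting $(\cmd_1;\cmd_2)\theta \redx \prog_1;\prog_2$ yields $\cmd_i\theta \redx \prog_i$. The induction hypothesis then gives evaluations of all four $\peq^i_j\theta$, and these combine by the $\wedge$ and $\impl$ rules of Figure \ref{fig-peq}. In case \TirName{Let}, inversion gives $e\theta \redx \mv$ and $\cmd[\mv/y]\theta \redx \prog$. The premise, however, is about $\cmd[e/y]$, so I must show that $\cmd[e/y]\theta$ also evaluates to $\prog$. This holds by a standard substitution lemma for the deterministic, terminating big-step semantics: substituting an expression that evaluates to $\mv$ is equivalent to substituting $\mv$. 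Recursion is disallowed, so evaluation always terminates and no issues arise. Case \TirName{App} is similar, substituting the argument expressions into the parameters of $f$. Case \TirName{Fn} follows by unfolding $\codebase(f)$ and applying the induction hypothesis to the body. Case \TirName{Hardpack} does not use the induction hypothesis. Instead it appeals directly to the pre/postcondition sanity conditions, which guarantee that $\precond(f)\theta$ and $\postcond(f)\theta$ evaluate to SMT constraints for all well-typed arguments. This requires that the $\mv_i$ have the annotated types $\tau_i$, which I will assume follows from evaluation succeeding at the annotated signature.

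\textbf{Expected obstacle.} The main obstacle is the \TirName{Let}/\TirName{App} substitution step. It requires reconciling the derivation, which substitutes expressions, with the evaluation, which substitutes values. I would first try to prove a lemma: if $e \redx \mv$, then $c[e/y] \redx \prog$ iff $c[\mv/y] \redx \prog$, and similarly for abstract constraints. The proof would be by induction on $c$, using determinism of $\redx$. Care is needed for capture under nested $\ttt{let}$s, which I would handle by assuming bound variables are renamed apart.
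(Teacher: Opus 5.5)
Your proposal is correct and follows the paper's route. The paper's proof is only a one-line sketch (sanity conditions on the annotations, the definition of the analysis, and induction on $\cmd$), and your rule-by-rule case analysis, with \TirName{Hardpack} discharged directly by the sanity conditions, is a faithful elaboration of it. Your refinements are the more careful way to make that sketch go through: you induct on the Hoare derivation rather than on $\cmd$, with the statement quantified over all instantiating substitutions as your \TirName{App} case requires, and you make the expression-versus-value substitution lemma explicit for \TirName{Let} and \TirName{App}. This matters because those rules' premises concern $\cmd[e/y]$ and the body of $f$, which are not structural subterms of the conclusion's instruction.
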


Now we can prove our main result in a general form. The proof
follows mainly by appeal to the $\minicat$ Hoare logic deduction
system.
\begin{lemma}[$\metaprot$ Hoare Triple Correctness]
  \label{lemma-metahtrip}
  If $\htrip{\peq_1}{\cmd}{\peq_2}$ with $\peq_1 \redx
  \eqs_1$, $\peq_2 \redx \eqs_2$, and $\cmd \redx
  \prog$, then $\htrip{\eqs_1}{\prog}{\eqs_2}$.
\end{lemma}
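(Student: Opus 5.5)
We prove the statement by induction on the derivation of the abstract triple $\htrip{\peq_1}{\cmd}{\peq_2}$ in Figure \ref{fig-metahtrip}, doing case analysis on the last rule. In each case we invert the evaluation $\cmd \redx \prog$ using Figure \ref{fig-metaprotinstrsemantics}, and the evaluations $\peq_i \redx \eqs_i$ using Figure \ref{fig-peq}. We then produce a derivation of $\htrip{\eqs_1}{\prog}{\eqs_2}$ in the $\minicat$ Hoare logic of Figure \ref{fig-minicathtrip}. Since derivable $\minicat$ triples are valid by Lemma \ref{lemma-hoare-sound}, it is harmless to also argue semantically via validity where that is cleaner. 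Lemma \ref{lemma-closure} guarantees that whenever a subcommand fully evaluates, the corresponding sub-pre/postconditions also evaluate, so the induction hypothesis can always be applied.

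\textbf{Base cases.} For \TirName{Mesg}, $\cmd$ is $\xassign{e_1}{e_2}{e_3}$, which evaluates to $\xassign{x}{\be}{\cid}$, and the postcondition evaluates to $x \eop \toeq{\elab{\be}{\cid}}$. Validity of $\htrip{\true}{\xassign{x}{\be}{\cid}}{x \eop \toeq{\elab{\be}{\cid}}}$ is immediate from Figure \ref{fig-toeq}. Equivalently, \TirName{Assign} gives precondition $\toeq{\elab{\be}{\cid}} \eop \toeq{\elab{\be}{\cid}}$, which is entailed by $\true$, so \TirName{Consequence} finishes the case. For \TirName{Assert}, the precondition evaluates to exactly the hypothesis needed by $\minicat$ \TirName{Assert}, and \TirName{Consequence} weakens the postcondition to $\true$.

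\textbf{Structural cases.} For \TirName{Let} we use determinism of evaluation: $\elet{y}{e}{\cmd}$ evaluates to the same $\prog$ as $\cmd[\mv/y]$, where $e \redx \mv$. Expression substitution versus value substitution must be reconciled here; this works because $e$ evaluates to $\mv$ compositionally, so $\cmd[e/y]$ and $\cmd[\mv/y]$ produce the same protocol and the same constraints. \TirName{Fn} is immediate from the induction hypothesis. For \TirName{App}, the evaluation of $f(e_1,\ldots,e_n)$ is the evaluation of the body under $[\mv_i/y_i]$. Instantiating the abstract triple for $f$ with the $\mv_i$ is exactly what the induction hypothesis on the premise provides, since the premise is itself a derivation for the generic call and the instantiated conditions evaluate by Lemma \ref{lemma-closure}. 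For \TirName{Hardpack}, the induction hypothesis gives $\htrip{\eqs_1'}{\prog}{\eqs_2'}$. Lemma \ref{lemma-genentails} yields $\eqs_1 \models \eqs_1'$ and $\eqs_1 \wedge \eqs_2' \models \eqs_2$. By \TirName{Consequence} we get $\htrip{\eqs_1}{\prog}{\eqs_2'}$, hence $\eqs_1 \wedge \toeq{\prog} \models \eqs_1 \wedge \eqs_2'$, and therefore $\eqs_1 \wedge \toeq{\prog} \models \eqs_2$. This gives validity directly; for a derivation, one can alternatively use \TirName{Frame} or \TirName{Hyp}.

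\textbf{Seq, the main obstacle.} Here $\prog = \prog_1;\prog_2$, and the instantiated conditions are $\eqs^1_1 \wedge (\eqs^1_2 \impl \eqs^2_1)$ and $\eqs^1_2 \wedge \eqs^2_2$. The induction hypothesis gives $\htrip{\eqs^1_1}{\prog_1}{\eqs^1_2}$ and $\htrip{\eqs^2_1}{\prog_2}{\eqs^2_2}$.
\begin{enumerate}
\item Applying \TirName{Hyp} with $\eqs = \eqs^2_1$ gives $\htrip{\eqs^1_1 \wedge (\eqs^1_2 \impl \eqs^2_1)}{\prog_1}{\eqs^1_2 \wedge \eqs^2_1}$.
\item For $\prog_2$ we need $\htrip{\eqs^1_2 \wedge \eqs^2_1}{\prog_2}{\eqs^1_2 \wedge \eqs^2_2}$. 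This needs $\eqs^1_2$ to be preserved across $\prog_2$. \TirName{Frame} gives this only if $\vars(\eqs^1_2) \cap \avars(\prog_2) = \varnothing$, which I expect to follow from the no-overwrite sanity condition together with the requirement that conditions mention only variables of their own bodies.
\item A safer route is to argue semantically. Validity of the $\prog_2$ triple gives $\eqs^2_1 \wedge \toeq{\prog_2} \models \eqs^2_2$, so $\eqs^1_2 \wedge \eqs^2_1 \wedge \toeq{\prog_2} \models \eqs^1_2 \wedge \eqs^2_2$ trivially, since conjunction in the precondition is retained. Because $\toeq{\prog_1;\prog_2} = \toeq{\prog_1} \wedge \toeq{\prog_2}$, the two validities then compose by transitivity of $\models$.
\end{enumerate}
I would present this step semantically to avoid the frame side condition.
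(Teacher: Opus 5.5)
Your argument is correct as a proof that the instantiated triple is \emph{valid}, but it departs from the paper's proof in two ways.

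\emph{Induction measure.} The paper inducts on the evaluation derivation $\cmd \redx \prog$ and treats \TirName{Fn} and \TirName{Hardpack} as subcases of the application case. You induct on the abstract Hoare derivation instead. This means \TirName{Fn}, \TirName{Hardpack}, and the premise of \TirName{App} are judgements about the open generic call $f(y_1:\tau_1,\ldots,y_n:\tau_n)$. Their conditions contain free $y_i$ and do not evaluate, so the lemma as stated cannot serve as your induction hypothesis there. Your \TirName{App} case tacitly uses a generalized statement quantified over closing substitutions $[\mv_1/y_1]\cdots[\mv_n/y_n]$; state it explicitly. The paper's ``immediate by the induction hypothesis'' in its \TirName{Fn} subcase hides an analogous substitution property.

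\emph{Derivation versus validity.} This is the more substantive difference. The paper builds a $\minicat$ \emph{derivation}: in \TirName{Seq} it uses \TirName{Hyp}, then \TirName{Frame} with the side condition discharged by sanity conditions, then \TirName{Seq}; in \TirName{Hardpack} it uses \TirName{Consequence} and \TirName{Hyp}. Your semantic \TirName{Seq} step is sound. It also shows that the frame side condition is irrelevant for validity: validity is entailment over a single memory, so extra conjuncts are carried along for free.

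Switching to the semantic route is not ``harmless'', though. Once one case concludes only validity, every enclosing case can use only validity. Validity is strictly weaker than derivability here, because the $\minicat$ \TirName{Assert} rule requires the precondition to entail the asserted equality, while $\toeq{\prog}$ already contains it. For example, $\htrip{\eqtrue}{\elab{\assert{\mesg{a} = \mesg{b}}}{1}}{\mx{a}{1} \eop \mx{b}{1}}$ is valid but not derivable.

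Your version therefore suffices for Corollaries \ref{cor-metahtrip} and \ref{cor-mainhtrip}, which claim only validity. It gives up the guarantee that asserted checks, such as MAC checks, follow from preconditions. To obtain the paper's derivability conclusion, keep your steps 1--2 and the \TirName{Hyp} route for \TirName{Hardpack}. Discharge the frame side condition using three facts: no overwriting; well-definedness, so variables read in $\prog_1$ are already defined and hence not assigned in $\prog_2$; and the requirement that conditions mention only body variables.
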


\begin{proof}
  By structural induction on the derivation of $\cmd \redx \prog$ and
  case analysis on $\cmd$.

  \textit{Case} $\cmd = (\xassign{e_1}{e_2}{e_3})$. In this case by definition
  and inversion of $\redx$ and $\TirName{Assign}$ we have:
  \begin{mathpar}
    e_1 \redx x

    e_2 \redx \be

    e_3 \redx \cid

    \prog = (\xassign{x}{\be}{\cid})

    \peq_1 \redx \true

    \peq_2  \redx x \eop \toeq{\elab{\be}{\cid}}
  \end{mathpar}
  Now by \TirName{Assign} we have
  $\htrip{\toeq{\elab{\be}{\cid}} \eop \toeq{\elab{\be}{\cid}}}{\xassign{x}{\be}{\cid}}{x \eop \toeq{\elab{\be}{\cid}}}$, so
  we have
  $\htrip{\true}{\xassign{x}{\be}{\cid}}{x \eop
    \toeq{\elab{\be}{\cid}}}$ by \TirName{Consequence}.

  \textit{Case} $\cmd = \elab{\assert{e_1 = e_2}}{e_3}$. In this case by definition we have:
  \begin{mathpar}
    e_1 \redx \be_1

    e_2 \redx \be_2

    e_3 \redx \cid

    \prog = \elab{\assert{\be_1 = \be_2}}{\cid}

    \peq_1 \redx \toeq{\elab{\be_1}{\cid}} \eop \toeq{\elab{\be_2}{\cid}}
    
    \peq_2 \redx \true
  \end{mathpar}
  But we have $\htrip{\toeq{\elab{\be_1}{\cid}} \eop \toeq{\elab{\be_2}{\cid}}}{\elab{\assert{\be_1 = \be_2}}{\cid}}{\toeq{\elab{\be_1}{\cid}} \eop \toeq{\elab{\be_2}{\cid}}}$ by $\TirName{Assert}$, so
  $\htrip{\toeq{\elab{\be_1}{\cid}} \eop \toeq{\elab{\be_2}{\cid}}}{\elab{\assert{\be_1 = \be_2}}{\cid}}{\true}$ by $\TirName{Consequence}$.

  \textit{Case} $\cmd = \cmd_1;\cmd_2$. In this case by assumption and
  definition and inversion of $\redx$ and $\TirName{Seq}$ we have:
  \begin{mathpar}
    \cmd_1 \redx \prog_1

    \cmd_2 \redx \prog_2

    \prog = \prog_1;\prog_2

    \peq_1 \redx \eqs_1^1 \wedge (\eqs_2^1 \impl \eqs_1^2)
      
    \peq_2 \redx \eqs_2^1 \wedge \eqs_2^2
  \end{mathpar}
  and by inversion of the $\TirName{Seq}$ rule we have $\htrip{\peq_1^1}{\cmd_1}{\peq_2^1}$ and $\htrip{\peq_1^2}{\cmd_2}{\peq_2^2}$,
  %
  and by the induction hypothesis we have $\htrip{\eqs_1^1}{\cmd_1}{\eqs_2^1}$ and $\htrip{\eqs_1^2}{\cmd_2}{\eqs_2^2}$.
  %
  But by \TirName{Hyp} we have:
  $\htrip{\eqs_1^1 \wedge (\eqs_2^1 \impl \eqs_1^2)}{\cmd_1}{\eqs_2^1 \wedge \eqs_1^2}$
  and by sanity conditions we have
  $\vars(\eqs_2^1) \cap \avars(\prog_2) = \varnothing$ so by \TirName{Frame} we
  have
  $\htrip{\eqs_2^1 \wedge \eqs_1^2}{\cmd_2}{\eqs_2^1 \wedge \eqs_2^2}$.
  Thus by $\TirName{Seq}$ we have:
  $$\htrip{\eqs_1^1 \wedge (\eqs_2^1 \impl \eqs_1^2)}{\prog_1;\prog_2}{\eqs_2^1 \wedge \eqs_2^2}$$

  \textit{Case} $\cmd = f(e_1,\ldots,e_n)$. In this case by assumption and
  definition and inversion of $\TirName{App}$ and $\redx$ we have:
  \begin{mathpar}
    \codebase(f) = y_1,\ldots,y_n, \instr
    
    e_1 \redx \mv_1 \cdots e_n \redx \mv_n

    \instr[\mv_1/y_1]\cdots[\mv_n/y_n] \redx \prog

    \peq_1 = \peq^0_1[\mv_1/y_1]\cdots[\mv_n/y_n]

    \peq_2 = \peq^0_2[\mv_1/y_1]\cdots[\mv_n/y_n]
    
    \htrip{\peq^0_1}{f(y_1:\tau_1,\ldots,y_n:\tau_n)}{\peq^0_2}
  \end{mathpar}
  Note that sanity conditions on pre- and post-condition annotations
  means there exists $\eqs_1$ and $\eqs_2$ where $\peq_1 \redx
  \eqs_1$ and $\peq_2 \redx \eqs_2$. Now, the judgement
  $\htrip{\peq^0_1}{f(y_1:\tau_1,\ldots,y_n:\tau_n)}{\peq^0_2}$
  follows by one or two subcases. Where it follows by \TirName{Fn} the
  result is immediate by definition and the induction
  hypothesis. Otherwise it follows by \TirName{Hardpack} and we
  proceed as follows.

  By inversion of \TirName{Hardpack} we have
  $\htrip{\peq_1'}{f(y_1:\tau_1,\ldots,y_n:\tau_n)}{\peq_2'}$ by
  \TirName{Fn} where $\forall y_1:\tau_1,\ldots,y_n:\tau_n . \peq^0_1
  \models \peq_1'$. Then $\peq_1'[\mv_1/y_1]\cdots[\mv_n/y_n] \redx
  \eqs_1'$ and $\peq_2'[\mv_1/y_1]\cdots[\mv_n/y_n] \redx
  \eqs_2'$ by Lemma \ref{lemma-closure}. 
  By Lemma \ref{lemma-genentails} we have $\eqs_1
  \models \eqs_1'$, so by \TirName{Consequence} we have
  $\htrip{\eqs_1}{\prog}{\eqs_2'}$, and by \TirName{Hyp}
  we have:
  $$\htrip{\eqs_1 \wedge (\eqs_2' \impl \eqs_2)}{\prog}{\eqs_2' \wedge \eqs_2}$$

  Furthermore, also by inversion of \TirName{Hardpack} we have
  $\forall y_1:\tau_1,\ldots,y_n:\tau_n . \peq^0_1 \wedge 
  \peq_2' \models \peq^0_2$. 
  By Lemma \ref{lemma-genentails} we thus have $\eqs_1
  \wedge \eqs_2' \models \eqs_2$, which implies
  $\eqs_1 \models \eqs_2' \impl \eqs_2$.
  Thus by \TirName{Consequence} we have $\htrip{\eqs_1}{\prog}{\eqs_2}$.

  Finally, \textit{case} $\prog = \elet{y}{e}{\cmd}$ follows trivially
  by the induction hypothesis, which covers all cases to prove
  the result.
\end{proof}

As a consequence of the preceding Lemma, we can state important
Corollaries that characterize, first, the compositionality of function
constraints, and second, the correctness of whole program
analysis. These both follow immediately by reasoning in the
\TirName{App} case of Lemma \ref{lemma-metahtrip}.

\begin{corollary}[Compositionality of Abstract Function Constraints]
  \label{cor-metahtrip}
  Given:
  \begin{mathpar}
    \htrip{\peq_1}{f(y_1:\tau_1,\ldots,y_n:\tau_n)}{\peq_2}

    \codebase(f) = y_1 : \tau_1, \ldots, y_n : \tau_n, \cmd

    \cmd[\mv_1/y_1]\cdots[\mv_n/y_n] \redx \prog

    \peq_1[\mv_1/y_1]\cdots[\mv_n/y_n] \redx \eqs_1

    \peq_2[\mv_1/y_1]\cdots[\mv_n/y_n] \redx \eqs_2
  \end{mathpar}
  Then $\htrip{\eqs_1}{\prog}{\eqs_2}$ is valid.
\end{corollary}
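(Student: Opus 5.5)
The plan is to reduce the corollary to the \TirName{App} case of Lemma \ref{lemma-metahtrip}, and then pass from derivability to validity with Lemma \ref{lemma-hoare-sound}. The arguments $\mv_1,\ldots,\mv_n$ are already values, so I form the application $f(\mv_1,\ldots,\mv_n)$. Its evaluation needs only $\mv_i \redx \mv_i$ (by the reflexive value rule of Figure \ref{fig-metaprotexprsemantics}) and the hypothesis $\cmd[\mv_1/y_1]\cdots[\mv_n/y_n] \redx \prog$. The instruction-application rule of Figure \ref{fig-metaprotinstrsemantics} then gives $f(\mv_1,\ldots,\mv_n) \redx \prog$.

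Next I apply \TirName{App} of Figure \ref{fig-metahtrip} to the given judgement $\htrip{\peq_1}{f(y_1:\tau_1,\ldots,y_n:\tau_n)}{\peq_2}$. This yields
$$\htrip{\peq_1[\mv_1/y_1\cdots\mv_n/y_n]}{f(\mv_1,\ldots,\mv_n)}{\peq_2[\mv_1/y_1\cdots\mv_n/y_n]}.$$
The instantiated pre- and postconditions evaluate to $\eqs_1$ and $\eqs_2$ by hypothesis. Lemma \ref{lemma-metahtrip} therefore applies, with instruction $f(\mv_1,\ldots,\mv_n)$ and protocol $\prog$, and gives $\htrip{\eqs_1}{\prog}{\eqs_2}$ as a $\minicat$ triple.

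I must be careful here, because the proof of Lemma \ref{lemma-metahtrip} actually \emph{derives} this triple in the rules of Figure \ref{fig-minicathtrip}. It does so through \TirName{Consequence}, \TirName{Hyp}, \TirName{Frame} and \TirName{Seq}; in the \TirName{Hardpack} subcase it also uses the entailment side conditions, which Lemma \ref{lemma-genentails} discharges. With the triple read as derivable, Lemma \ref{lemma-hoare-sound} gives validity, that is, $\eqs_1 \wedge \toeq{\prog} \models \eqs_2$.

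The main obstacle is bookkeeping in the \TirName{App} step of Lemma \ref{lemma-metahtrip}. The induction there is on the derivation of $\cmd \redx \prog$, and the \TirName{Fn} subcase appeals to the induction hypothesis on the body $\cmd[\mv_1/y_1]\cdots[\mv_n/y_n]$. So I need the premise of \TirName{Fn} to be a triple for the body whose abstract conditions still evaluate once instantiated. Lemma \ref{lemma-closure} supplies this evaluation, and the sanity conditions on annotations supply it in the \TirName{Hardpack} case. If the lemma's statement is felt not to cover this cleanly, I would instead argue directly on the two subcases of how the function-level judgement was derived. In the \TirName{Fn} subcase, the body triple instantiates by Lemma \ref{lemma-closure}, and Lemma \ref{lemma-metahtrip} applied to the body gives the result. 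In the \TirName{Hardpack} subcase, I repeat the \TirName{Consequence}/\TirName{Hyp} argument already written in the proof of Lemma \ref{lemma-metahtrip}.
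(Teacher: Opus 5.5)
Your proposal is correct and follows the paper's route. The paper says only that the corollary follows by the reasoning in the \TirName{App} case of Lemma~\ref{lemma-metahtrip}, which is exactly your reduction through $f(\mv_1,\ldots,\mv_n)$. Your explicit use of Lemma~\ref{lemma-hoare-sound} to turn the derived $\minicat$ triple into a valid one makes precise a step the paper leaves implicit, and the substitution issue you flag in the \TirName{Fn} subcase is glossed over equally in the paper's own proof of that lemma.
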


\begin{corollary}
  \label{cor-mainhtrip}
  If $\htrip{\peq_1}{\ttt{main()}}{\peq_2}$ and $\ttt{main}() \redx \prog$ and
  $\peq_1 \redx \eqs_1$ and $\peq_2 \redx \eqs_2$ then
  $\htrip{\eqs_1}{\prog}{\eqs_2}$ is valid.
\end{corollary}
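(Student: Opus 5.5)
The plan is to obtain Corollary \ref{cor-mainhtrip} as a direct specialization of Lemma \ref{lemma-metahtrip}, with validity then supplied by Lemma \ref{lemma-hoare-sound}. First, recall that $\ttt{main}$ is an instruction function with no formal parameters. So $\codebase(\ttt{main}) = \cmd$ for some body $\cmd$, and the judgement $\htrip{\peq_1}{\ttt{main()}}{\peq_2}$ is obtained either by \TirName{Fn} directly from $\htrip{\peq_1}{\cmd}{\peq_2}$, or by \TirName{Hardpack} on top of such an \TirName{Fn} judgement. Likewise, by inversion of the instruction evaluation rule for applications with $n = 0$, $\ttt{main}() \redx \prog$ gives $\cmd \redx \prog$. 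Because there are no parameters, all substitutions $[\mv_1/y_1]\cdots[\mv_n/y_n]$ are empty, so the hypotheses $\peq_1 \redx \eqs_1$ and $\peq_2 \redx \eqs_2$ are exactly the instantiated pre- and postconditions.

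Next, I would treat $\ttt{main}()$ as an instance of the application case $\cmd = f(e_1,\ldots,e_n)$ in the proof of Lemma \ref{lemma-metahtrip}, with $f = \ttt{main}$ and $n=0$. By \TirName{App} with no arguments, $\htrip{\peq_1}{\ttt{main}()}{\peq_2}$ holds as an abstract triple on the call. The lemma then yields that $\htrip{\eqs_1}{\prog}{\eqs_2}$ is derivable in the $\minicat$ Hoare logic of Figure \ref{fig-minicathtrip}. In the \TirName{Fn} subcase this comes from the induction hypothesis on the body. In the \TirName{Hardpack} subcase it comes from the Consequence/Hyp argument already given there, using Lemma \ref{lemma-genentails} at the empty instantiation; in that situation \TirName{GenEntails} reduces to ordinary entailment $\eqs_1 \models \eqs_1'$ and $\eqs_1 \wedge \eqs_2' \models \eqs_2$. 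Alternatively, the whole statement is just Corollary \ref{cor-metahtrip} with $n = 0$, and I would note this as a one-line route.

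Finally, I would apply Lemma \ref{lemma-hoare-sound}: derivability implies validity, that is $\eqs_1 \wedge \toeq{\prog} \models \eqs_2$.

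The only real obstacle is bookkeeping. The statement is phrased for $\ttt{main()}$ as a function judgement rather than as a call instruction, so I must check that the \TirName{App} case of Lemma \ref{lemma-metahtrip} applies verbatim with empty argument lists. I must also check that the sanity conditions on $\ttt{main}$ guarantee that $\prog$ is well formed and that $\vars(\eqs_1)\cup\vars(\eqs_2)\subseteq\vars(\prog)$; Lemma \ref{lemma-closure} needs the latter for the \TirName{Hardpack} subcase. Both follow from the stated sanity conditions, since $\ttt{main}$'s annotations are concrete.
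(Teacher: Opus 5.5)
Your proposal is correct and follows the paper's argument. The paper obtains this corollary by specializing the \TirName{App} case of Lemma~\ref{lemma-metahtrip} to $\ttt{main}$ with no arguments (equivalently, Corollary~\ref{cor-metahtrip} with $n=0$). Your explicit use of Lemma~\ref{lemma-hoare-sound} to pass from derivability to validity is a step the paper leaves implicit, and it is the right one.
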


\section{$\metaprot$ Type Checking}
\label{sec:metapr-type-check}

\stafig

Now we develop a compositional type system for $\metaprot$ based on 
$\minicat$ typing and the automated Hoare logic developed in
Section \ref{section-metalang}. We focus on share types for confidentiality
which is more complex and general than our integrity type system.
An algorithmic approach to the latter was developed previously for
$\metaprot$ \cite{skalka-near-esop25}, and also the core element
of the system-- BDOZ/SPDZ-style MAC checking-- can be treated as
algebraic correctness conditions enforceable by Hoare triples
as we show in Section \ref{section-examples}.

\subsection{Share Type Checking with Annotations}
\label{section-metaty}

Share types rely on SMT entailments via the \TirName{ShareEntails} and
\TirName{PubEntails} rules, which are not syntax directed, and likewise
the interactive typing judgement is nondeterministic. Our goal
with the $\metaprot$ language is to automate share type checking. To
enable this we introduce \emph{share type annotations}. These
annotations allow programmer specification of properties that can be
verified as part of the Hoare triple analysis defined in Section
\ref{section-metahtrip} and that enable automated checking of types.

A fundamental component of the system is automated share type checking
of noninteractive computations.  We define an algorithmic, syntax
directed subset of noninteractive share typing as follows:
\begin{definition}
  Define $\Gamma \Vdash \stt : \sty$ iff $\Gamma,\true \vdash \stt : \sty$ without the
  use of \TirName{ShareEntails} and \TirName{PubEntails}.
\end{definition}

\shtripstafig

The syntax of type annotations $\sta$ is defined in Figure
\ref{fig-sta}, which we extend the language of commands $\cmd$ to
include. Type annotations are of the form $e_1 \kwas e_2$, which
defines an SMT equivalence-- the key idea here being that explicit
equivalences can stand in for implicit \TirName{ShareEntails} and
\TirName{PubEntails} judgements-- and $\notg{\stt} : \sty$ which
defines a share type assertion verified using the automated $\Vdash$
typing judgement. In the $\metaprot$ semantics type annotations are
interpreted as $\pskip$s, while their integration in Hoare triple
derivations requires annotated equivalences to be included in
preconditions.  In an implementation, Hoare triple verification can be
conducted independently prior to type checking.

\subsubsection{Annotation Type Checking}

Type checking judgements for typing annotations are defined as follows.
The \emph{definitions} element $\xdefs$ of judgements is an environment
of SMT equivalences induced either by the program or by annotation. 
\begin{definition}[$\metaprot$ Type Checking Judgements]
  \emph{Type checking judgements} are of the syntactic form
  $\shtrip{\Gamma, R, \xdefs}{\cmd}{\Gamma, R, \xdefs}$
  where $\xdefs : \phi \rightarrow \phi$ are called
  \emph{definitions}.
\end{definition}

All typing judgements are dependent on closed expressions $e$ which we
equate up to evaluation. Since the semantics of expressions is total,
we simply evaluate expressions in preconditions of derivation
rules. The derivation rules for annotations $\sta$ are given in Figure
\ref{fig-shtripsta}. First, we note that the \TirName{Coerce} rule
extends definitions to include the equivalence specified by the $e_1
\kwas e_2$ annotation form-- we assume the validity of the equivalence
is enforced by Hoare triple verification and so is not checked (again)
here.  These definitions extend or even overwrite variable definitions
established by the protocol itself (see the \TirName{Defn} rule in
Figure \ref{fig-shtripcmd} discussed below).

Programmers can also specify bindings for local computations, for
example if $\mx{x}{1},\mx{x}{2}$ are shares obtained by, e.g., local
homomorphic addition, the programmer can specify
$[\mx{x}{1},\mx{x}{2}] : \sharety$. This supports compositionality of
type checking. Thus, in the rules for share type annotations of the
form $\notg{\stt} : \sty$, we have different rules for checking local and interactive
$\sharety$ type ascriptions (\TirName{LocalShares} and \TirName{GenShares},
respectively), and for checking local and interactive
$\pubty$ type ascriptions (\TirName{LocalPublic} and \TirName{MaskedPublic},
respectively). In these rules we use $\xdefs$ to recover defined
SMT equivalences, and $\Vdash$ to verify share typing.

\subsubsection{Functions and Application}

\shtripcmdfig

Rules for other commands are defined in Figure \ref{fig-shtripcmd}.
As anticipated above, the \TirName{Defn} rule automatically adds the
protocol definition of variables to $\xdefs$ as a default equivalence,
which may allow automatic share typing without any use of
\TirName{Coerce}. The most complex rules are for function application
and definition (\TirName{App} and \TirName{Fn}). These rules allow for
generalization and instantiation of share type environments and
thus compositionality in typing. Generalized aka abstract
share type environments are dependent on $\stringty$ and
$\cidty$ types, and we equate them up to evaluation in
our rules.
\begin{definition}[Abstract Share Type Environments]
  We write $\notg{\Gamma}$ to denote sets of bindings $\notg{\stt} : \sty$, and
  $\notg{R}$ to denote sets of expressions (that evaluate to random variables).
  We define evaluation of these, respectively, as follows:
  \begin{mathpar}
  \inferrule
      {\notg{\stt}_1 \redx \stt_1\ \cdots\ \notg{\stt}_n \redx \stt_n}
      {\setit{\notg{\stt}_1 : \sty_1, \ldots, \notg{\stt}_n : \sty_n} \redx
        \setit{\stt_1 : \sty_1, \ldots, \stt_n : \sty_n}}
      
  \inferrule
      {e_1 \redx \rx{w_1}{\cid_1}\ \cdots\ e_1 \redx \rx{w_n}{\cid_n}}
      {\setit{e_1,\ldots,e_n} \redx
        \setit{\rx{w_1}{\cid_1}, \ldots, \rx{w_n}{\cid_n}}}
  \end{mathpar}
\end{definition}

\noindent We also introduce dependent $\Pi$ types for functions, that allow
dependence on $\stringty$ and $\cidty$ types.
\begin{definition}[$\metaprot$ Function Types]
  Dependent function types are of the syntactic form
  $\Pi y_1,\ldots,y_n. \notg{\Gamma} \rightarrow (\notg{\Gamma}, \notg{R})$.
  where $y_1,\ldots,y_n$ range over $\stringty$ or $\cidty$ types.
\end{definition}
As defined in \TirName{Fn}, a dependent function type $\Pi
y_1,\ldots,y_n. \notg{\Gamma_1} \rightarrow (\notg{\Gamma_2},
\notg{R})$ specifies that, given share typing assumptions in
$\notg{\Gamma_1}$, application of the function results in the bindings
in $\notg{\Gamma_2}$ using the random values in $\notg{R}$ for share
generation or secure openings.  We require that the type is closed,
i.e., only $y_1,\ldots,y_n$ can occur free in $\notg{\Gamma_1}$,
$\notg{\Gamma_2}$ $\notg{R}$.  Type checking for functions is
accomplished by using fresh ``witness'' values of $\stringty$ and
$\cidty$ types to instantiate dependent variables, similarly to the
\TirName{GenEntails} rule. As defined in the \TirName{App} rule,
typing ensures that application of the function instantiates the
dependencies, checks that the instantiation of $\notg{\Gamma_1}$ is
contained in the share typing environment, and adds the instantiation
of $\notg{\Gamma_2}$ as an effect along with instantiation of
$\notg{R}$ with linearity enforced. A practical programming detail is
where the share type environment precondition $\notg{\Gamma_1}$ is
defined by the programmer-- by convention we write these as
additional components of \ttt{precondition} annotations, as
illustrated in the extended examples discussed in Section
\ref{section-examples}.

\subsubsection{Example: GMW \ttt{and} Gate}

Returning to the GMW \ttt{and} gate example introduced in Figure \ref{fig-gmwprepost},
we note that the $\ttt{xorgmw}$ gate needs no type annotations for type checking.
We can extend the definition of \ttt{andgmw} with type annotations to ensure a
valid share typing, as in Figure \ref{fig-gmwsharetypes}. Here a coercion
is needed to specify that the gate output wire values are secret-shared using
$\ttt{r[z]@1}$ as an OTP. The type for $\ttt{andgmw}$ can then be
obtained as:
$$
\ttt{andgmw} : \Pi x,y,z . \varnothing \rightarrow (\setit{[m[z]@1,m[z]@2] : \sharety}, \setit{r[z]@1})  
$$
Note that the function does not require any share type preconditions, since
the (sub)protocol it defines interactively generates new shares using an OTP
($r[z]@1$).

\begin{fpfig}[t]{GMW $\ttt{and}$ gate with share type annotations.}{fig-gmwsharetypes}
\begin{verbatimtab}
andgmw(z:string, x:string, y:string) {
  let table = andtablegmw(x,y,z) in
  m[z]@2 := OT4((m[x], m[y])@2, table.row4, table.row3, table.row2, table.row1)@1;
  m[z]@1 := r[z]@1;
  m[z]@2 as (RECON(x) * RECON(y)) + r[z]@1;
  [m[z]@1,m[z]@2] : shares
}
postcondition: (RECON(z) == RECON(x) * RECON(y))
\end{verbatimtab}
\end{fpfig}

\subsubsection{Soundness of $\metaprot$ Type Checking}

Demonstrating soundness of $\metaprot$ type checking is similar to $\minicat$--
we show the ${OK}$ invariant is preserved by the algorithm. By Theorem \ref{theorem-OK}
soundness follows. Lemma \ref{lemma-metaty} follows by appeal to Lemma \ref{lemma-sty-noninteractive-sound}
since the algorithmic judgement $\Vdash$ is sound with regard to the noninteractive
typing rules by definition. Theorem \ref{theorem-metaty} follows as a special
case of Lemma \ref{lemma-metaty}.
\begin{lemma}
  \label{lemma-metaty}
  Given $\prog_1,\Gamma_1,R_1 : {OK}$ and $\cmd \redx \prog_2$ and $\htrip{\eqs_1}{\cmd}{\eqs_2}$ with
  $\eqs_1$ satisfiable and $\toeq{\prog_1} \models \eqs_1 \wedge \toeq{\xdefs_1}$ and
  $\shtrip{\Gamma_1,R_1,\xdefs_1}{\cmd}{\Gamma_2,R_2,\xdefs_2}$.
  Then $\prog_1;\prog_2,\Gamma_2,R_2 : {OK}$ and $\toeq{\prog_1;\prog_2} \models \toeq{\xdefs_2}$.
\end{lemma}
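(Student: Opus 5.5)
We prove Lemma \ref{lemma-metaty} by induction on the derivation of $\shtrip{\Gamma_1,R_1,\xdefs_1}{\cmd}{\Gamma_2,R_2,\xdefs_2}$, with case analysis on the last rule. We simultaneously invert $\cmd \redx \prog_2$ and the Hoare derivation $\htrip{\eqs_1}{\cmd}{\eqs_2}$. The invariant has two halves: the ${OK}$ property of the extended protocol $\prog_1;\prog_2$, and entailment of the definitions $\toeq{\xdefs_2}$ by $\toeq{\prog_1;\prog_2}$. Two facts are used throughout.

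\emph{Monotonicity.} Extending $\prog_1$ to $\prog_1;\prog_2$ adds only fresh assigned variables, since overwriting is disallowed. So $\toeq{\prog_1;\prog_2} = \toeq{\prog_1} \wedge \toeq{\prog_2}$ entails everything $\toeq{\prog_1}$ does. The distributional conditions (i)--(iii) of Definition \ref{def-OK} for old bindings are preserved by marginalization. The new variables are deterministic functions of existing secrets, randomness, and messages, so adding them does not alter the joint distribution of the old variables.

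\emph{Correctness of definitions.} By Lemma \ref{lemma-metahtrip} (instantiated to $\htrip{\eqs_1}{\prog_2}{\eqs_2}$) and Lemma \ref{lemma-hoare}, together with satisfiability of $\eqs_1$ and $\toeq{\prog_1}\models\eqs_1$, every model of $\toeq{\prog_1;\prog_2}$ satisfies $\eqs_2$.

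The base cases proceed as follows.

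\emph{\TirName{Defn}.} Here $\prog_2$ is $\xassign{x}{\be}{\cid}$. Then $\toeq{\prog_2}$ is exactly the new definition $x \eop \toeq{\elab{\be}{\cid}}$, so the definitions half is immediate. $\Gamma$ and $R$ are unchanged, so ${OK}$ follows by monotonicity.

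\emph{\TirName{Coerce}.} Here $\prog_2 = \pskip$. The Hoare rule for $e_1 \kwas e_2$ places $\phi_1 \eop \phi_2$ in the precondition, so it is entailed by $\eqs_1$ and hence by $\toeq{\prog_1}$. Therefore $\toeq{\xdefs_1[\phi_1\mapsto\phi_2]}$ is entailed.

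\emph{\TirName{LocalShares}, \TirName{LocalPublic}, and \TirName{Output}.} The key step is that $\toeq{\prog_1}\models\toeq{\xdefs_1}$ lets us convert the algorithmic judgement $\Gamma \Vdash [\xdefs(x_1),\xdefs(x_2)] : \sharety$ into $\Gamma,\toeq{\prog_1} \vdash [x_1,x_2]:\sharety$ via one application of \TirName{ShareEntails} (resp.\ \TirName{PubEntails}), since $\Vdash$ is a fragment of $\vdash$. Lemma \ref{lemma-sty-noninteractive-sound} then gives conditions (i)--(ii) for the new binding.

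\emph{\TirName{GenShares}, \TirName{MaskedPublic}, and \TirName{Reveal}.} After the same conversion of $\xdefs$-equalities into $\toeq{\prog_1}$-entailments, these match the premises of the interactive rules in Figure \ref{fig-styinteractive}. Lemma \ref{lemma-OK} then yields ${OK}$, including (iv)--(v) and linearity of $R$ through $\uplus$.

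The compound cases proceed as follows.

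\emph{\TirName{Seq}.} Apply the induction hypothesis to $\cmd_1$ to get ${OK}$ for $\prog_1;\prog_{21}$ together with entailment of $\xdefs$. To feed $\cmd_2$, we need $\toeq{\prog_1;\prog_{21}} \models \eqs_1^2$. This comes from the Hoare \TirName{Seq} precondition $\eqs_1^1\wedge(\eqs_2^1\impl\eqs_1^2)$ and Lemma \ref{lemma-hoare} applied to $\cmd_1$. Then apply the induction hypothesis again and use associativity of sequencing.

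\emph{\TirName{Let}.} This case is immediate by substitution.

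\emph{\TirName{App}.} This is the main obstacle. We must show that an instance of the function type $\Pi y_1,\ldots,y_n.\notg{\Gamma_1}\rightarrow(\notg{\Gamma_2},\notg{R'})$, checked by \TirName{Fn} only at fresh witnesses in an empty context $\varnothing,\varnothing$, remains valid at the actual arguments and inside the ambient environment $\Gamma\cup\Gamma_1$.

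My plan is a renaming argument analogous to Lemma \ref{lemma-fresh}. Typing derivations, like constraint models, are preserved under instantiation of fresh strings and client ids. So we obtain a derivation for the instantiated body starting from $\Gamma_1$. Weakening then extends it to $\Gamma\cup\Gamma_1$ and $R$. This is plausible because the ${OK}$ conditions for bindings produced by the body depend only on $\Gamma_1$ plus randomness in $R'$, which disjointness keeps fresh relative to $R$.

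The induction hypothesis applied to the body, with the instantiated precondition supplied by the Hoare \TirName{App} case, then gives the result. The delicate point is justifying weakening for the probabilistic conditions (i)--(ii) when ambient bindings $\Gamma$ coexist. I would first try to argue this via Definition \ref{def-genshares}, since it guarantees independence from \emph{all} other protocol variables. If weakening fails in general, I would add a hygiene side condition that instantiated $\notg{\Gamma_2}$ variables are disjoint from $\Gamma$.
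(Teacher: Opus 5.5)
Your annotation cases follow the paper's route. The paper's entire argument is a one-line appeal to Lemma~\ref{lemma-sty-noninteractive-sound}, via the fact that $\Vdash$ is a fragment of the non-interactive rules. One slip: that lemma only concerns $\sharety$ judgements, so \TirName{LocalPublic} and \TirName{Output} need their own easy argument, for instance condition (iii) for a new $\pubty$ binding.

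The paper's sketch does not discuss \TirName{App}, and your plan for it fails at the claim that typing derivations, like entailments, are preserved under instantiation of the fresh witnesses. Lemmas~\ref{lemma-fresh} and~\ref{lemma-genentails} work because entailment quantifies over all models. It is therefore stable under any variable substitution, including one that identifies distinct witnesses. Share-typing derivations also contain \emph{distinctness} side conditions: the disjoint unions $R \uplus \setit{\rx{w}{\cid}}$ in \TirName{GenShares} and \TirName{MaskedPublic}, and the keys of $\xdefs$. These are destroyed when two distinct witnesses are instantiated to the same string. The same happens for derived strings such as $z \concat \ttt{"d"}$ and $x$. Similarly, fresh client-id witnesses fix no positional reading of share terms until they are instantiated.

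Here is a concrete failure. Let the body of $g(x,y,z)$ be the sharing $\mx{z\concat\ttt{"1"}}{1} := \elab{(\secret{x} - \flip{x})}{2}$; $\mx{z\concat\ttt{"1"}}{2} := \elab{\flip{x}}{2}$; $[\mx{z\concat\ttt{"1"}}{1},\mx{z\concat\ttt{"1"}}{2}] : \sharety$, followed by the same with $\ttt{"2"}$, $\secret{z}$ and $\flip{y}$. \TirName{Fn} succeeds at fresh witnesses with $\notg{R} = \setit{\elab{\flip{x}}{2}, \elab{\flip{y}}{2}}$. For $g(\ttt{"a"},\ttt{"a"},\ttt{"c"})$, however, the instantiated $R'$ collapses to the set $\setit{\rx{\ttt{a}}{2}}$. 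The \TirName{App} check $\varnothing \uplus R'$ passes and no variable is overwritten. Yet client~1 receives $\sx{\ttt{a}}{2} - \rx{\ttt{a}}{2}$ and $\sx{\ttt{c}}{2} - \rx{\ttt{a}}{2}$, hence learns $\sx{\ttt{a}}{2} - \sx{\ttt{c}}{2}$, which violates condition (i) of Definition~\ref{def-OK}.

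So the \TirName{App} case, and the lemma as stated, needs an extra hypothesis, or a check in \TirName{App}. The instantiation must create no new identifications among the located variables and $\xdefs$-keys of the \TirName{Fn} derivation, and it must map client-id parameters to the two parties consistently with share positions. Your proposed hygiene condition on $\notg{\Gamma_2}$ versus $\Gamma$ targets the wrong place.

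With the no-collision hypothesis, renaming is a bijection and transports the derivation. The rest of your plan then works with no probabilistic weakening at all:
\begin{itemize}
\item Weakening to $\Gamma \cup \Gamma_1, R$ is purely syntactic. $\Vdash$ is monotone in $\Gamma$, and $R \cap R' = \varnothing$ from \TirName{App} keeps every inner $\uplus$ defined.
\item Start the body from an empty $\xdefs$, so ambient definitions cannot interfere.
\item Apply the induction hypothesis on derivation height, counting the \TirName{Fn} subderivation; this is well founded because recursion is disallowed. It re-establishes ${OK}$ with the ambient bindings present.
\item \TirName{App} leaves $\xdefs$ unchanged, so $\toeq{\prog_1;\prog_2} \models \toeq{\xdefs_2}$ follows from $\toeq{\prog_1} \models \toeq{\xdefs_1}$ by monotonicity.
\end{itemize}
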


\begin{theorem}
  \label{theorem-metaty}
  Given $\htrip{\eqspre}{\ttt{main()}}{\eqs}$ and $\ttt{main} : \Gamma_1 \rightarrow (\Gamma_2,R)$
  and $\main() \redx \prog$ and $\progpre,\Gamma_1,R' : {OK}$ 
  and $\toeq{\progpre} \models \eqspre$. Then $(\progpre;\prog),\Gamma_2,R \uplus R': {OK}$.
\end{theorem}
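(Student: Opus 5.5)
The plan is to instantiate Lemma \ref{lemma-metaty} with $\cmd$ the body of $\ttt{main}$, $\prog_1 = \progpre$ and $\prog_2 = \prog$. First unfold the hypotheses. The judgement $\ttt{main} : \Gamma_1 \rightarrow (\Gamma_2,R)$ can only be derived by \TirName{Fn}. Since $\ttt{main}$ has no parameters, no fresh witnesses are needed and the abstract environments are already concrete. Inverting \TirName{Fn} therefore gives $\codebase(\ttt{main}) = \cmd$ together with a derivation of $\shtrip{\Gamma_1,\varnothing,\varnothing}{\cmd}{\Gamma_2,R,\xdefs}$ for some $\xdefs$. Likewise, $\main() \redx \prog$ inverts through the application rule of Figure \ref{fig-metaprotinstrsemantics} to $\cmd \redx \prog$. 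Finally, $\htrip{\eqspre}{\ttt{main()}}{\eqs}$ inverts via \TirName{Fn}, possibly after \TirName{Hardpack}, to a triple $\htrip{\eqs_1'}{\cmd}{\eqs_2'}$ on the body, with $\eqspre \models \eqs_1'$ in the \TirName{Hardpack} case.

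The second step is to reconcile the starting state. Lemma \ref{lemma-metaty} wants $\prog_1,\Gamma_1,R_1 : {OK}$ with the \emph{same} $R_1$ and $\xdefs_1$ that appear in the typing derivation. The \TirName{Fn} derivation, however, starts from $\varnothing,\varnothing$, whereas we are given $\progpre,\Gamma_1,R' : {OK}$. I would prove a small weakening fact, by induction on $\shtrip{}{}{}$ derivations: if $\shtrip{\Gamma,R_a,\xdefs}{\cmd}{\Gamma',R_b,\xdefs'}$ and $R' \cap R_b = \varnothing$, then $\shtrip{\Gamma,R' \uplus R_a,\xdefs}{\cmd}{\Gamma',R'\uplus R_b,\xdefs'}$. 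All the rules only ever add to $R$ with $\uplus$, so the induction is routine provided disjointness holds. Disjointness of $R'$ from $R$ is exactly what the statement's use of $R \uplus R'$ presupposes, so I would take it as implicit in the hypotheses. With $\xdefs_1 = \varnothing$, the side condition $\toeq{\progpre} \models \eqs_1 \wedge \toeq{\varnothing}$ reduces to $\toeq{\progpre} \models \eqs_1$.

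For the Hoare-logic hypotheses of Lemma \ref{lemma-metaty}, use Lemma \ref{lemma-metahtrip} to get a valid concrete triple $\htrip{\eqs_1}{\prog}{\eqs_2}$. Here $\eqs_1$ is the evaluation of the body precondition, which is $\eqspre$ itself, or is entailed by $\eqspre$ in the \TirName{Hardpack} case. Then $\toeq{\progpre} \models \eqspre \models \eqs_1$ by transitivity. Satisfiability of $\eqs_1$ follows from the sanity condition that $\progpre$ has a run: by Theorem \ref{theorem-toeq} some $\store$ models $\toeq{\progpre}$, hence models $\eqs_1$. If the \TirName{Hardpack} case delivers only the stronger precondition $\eqspre$, I would instead apply the lemma with $\eqspre$ via \TirName{Consequence}.

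Applying Lemma \ref{lemma-metaty} then yields $\progpre;\prog,\Gamma_2,R'\uplus R : {OK}$, which is the claim. The main obstacle I expect is the mismatch in the second step: justifying that the typing derivation for the body may be replayed from $R'$ rather than $\varnothing$, and that the body's precondition lines up with the one required by Lemma \ref{lemma-metaty}. If the weakening induction proves awkward, my fallback is to observe that the ${OK}$ conditions (iv) and (v) are monotone in $R$. One could then run Lemma \ref{lemma-metaty} on the $\varnothing$-based derivation with an ${OK}$ state built from $\Gamma_1$ and $\varnothing$, and union in $R'$ afterwards. This needs that conditions (iv) and (v) of Definition \ref{def-OK} for $\Gamma_1$ are witnessed by $R'$, which is given by hypothesis.
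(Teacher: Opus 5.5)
Your proposal is correct. At the top level it is the paper's argument: the paper proves the theorem in one line as a special case of Lemma~\ref{lemma-metaty}, and you also reduce to that lemma. The difference is where you instantiate it.

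\textbf{Your route versus the direct one.} You invert \TirName{Fn} and apply the lemma to the body $\cmd$ of $\ttt{main}$. This forces you to prove a weakening lemma so the body's typing derivation can be replayed from $R'$ instead of $\varnothing$.

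The more direct instantiation takes the command to be the call $\ttt{main}()$ itself, with $\prog_1 = \progpre$, $R_1 = R'$ and $\xdefs_1 = \varnothing$. The typing hypothesis then comes straight from \TirName{App} with empty ambient environment:
$\shtrip{\Gamma_1,R',\varnothing}{\ttt{main}()}{\Gamma_1\cup\Gamma_2,R'\uplus R,\varnothing}$.
Threading $R'$ and requiring its disjointness from $R$ are thus built into that rule. The only extra observation needed is $\Gamma_1 \subseteq \Gamma_2$, which holds because every typing rule only extends $\Gamma$. This route also uses the given triple $\htrip{\eqspre}{\ttt{main()}}{\eqs}$ verbatim, with no inversion of \TirName{Hardpack}.

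Your version spells out what the \TirName{App} case of Lemma~\ref{lemma-metaty} must do internally, at the cost of an auxiliary induction. Your satisfiability argument, via the sanity condition and Theorem~\ref{theorem-toeq}, supplies a detail the paper leaves implicit.

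\textbf{Corrections.}
\begin{enumerate}
\item Lemma~\ref{lemma-metaty} asks for a $\metaprot$ derivation on $\cmd$, not a valid $\minicat$ triple on $\prog$. The detour through Lemma~\ref{lemma-metahtrip} is therefore beside the point. Also, the $\metaprot$ logic has no \TirName{Consequence} rule. Neither is needed, because the inverted body triple already has a precondition $\eqs_1'$ with $\eqspre \models \eqs_1'$.
\item Your fallback does not work as stated. $\progpre,\Gamma_1,\varnothing : {OK}$ fails condition (iv) as soon as $\Gamma_1$ contains a $\genshares$-produced share binding of an exchanged message. So rely on the weakening lemma or the \TirName{App} route instead.
\end{enumerate}
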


Note that Theorem \ref{theorem-metaty} assumes the existence of
$\progpre$ with $\progpre,\Gamma_1,R' : {OK}$ and $R' \cap R =
\varnothing$ and $\toeq{\progpre} \models \eqspre$. The assumption of
some preprocessing protocol such as Beaver triple or MAC share
distributions is typical, and our approach does not require the
protocol to be defined by the programmer, but rather just (weakened)
postconditions of it that are sufficient to verify the protocol being
programmed.

Theorem \ref{theorem-metaty} also does not provide a complete recipe
for obtaining noninterference modulo output-- as per Theorem
\ref{theorem-sty}, it is also necessary to check closure of
$\Gamma_2$. There are various ways this can be accomplished for the
$\metaprot$ analysis-- for example, we evaluate $\ttt{main}$ to obtain
all program variables and then check closure of $\Gamma_2$ as
prescribed by Definition \ref{def-sty-closed}. We can also add checks
to the type system to guarantee that all messages exchanged between
clients are classified when type checking functions, which is
a simple addition.

\section{Implementation and Extended Examples}
\label{section-examples}

In Section \ref{section-smt} we gave the example of 3-party additive sharing
for any field $\fieldp{p}$, in Section \ref{section-sty} we discussed the
example of YGC input encoding, and in Sections \ref{section-metalang} and
\ref{section-metaty} we presented and discussed the example of gates in the GMW
protocol for binary circuits. In this section we present an extended example of
2-party arithmetic circuits with multiplication gates defined using Beaver
triples \cite{evans2018pragmatic}. We first define a compositional gate library
in $\metaprot$ with correctness conditions defined using the Hoare logic
developed in Section \ref{section-metalang}, along with a small example circuit
(Figure \ref{fig-beaver}). Then we extend the library with type annotations
illustrating the use of share types developed in Sections \ref{section-sty} and
\ref{section-metaty} to illustrate how confidentiality properties can be
enforced (Figure \ref{fig-beaver-types}); this annotated variant is
illustrative and is not part of the current implementation. Lastly, we extend
the base library with additional pre and postcondition annotations to enforce
integrity properties supported by BDOZ/SPDZ style MACs
\cite{SPDZ1,SPDZ2,BDOZ,10.1007/978-3-030-68869-1_3} (Figure
\ref{fig-beaver-bdoz}).

\subsection{Evaluation}

We have implemented the $\metaprot$ language and automated Hoare logic
using an Antlr parser and Scala backend with cvc5 Java bindings for
interaction with proof engines.  To evaluate performance, we have
written and verified gate libraries using the GMW protocol for binary
circuits as excerpted in Figure \ref{fig-gmwprepost} in Section
\ref{section-metalang}, and for passive and malicious secure
arithmetic circuits using Beaver Triple multiplication as presented in
Figures \ref{fig-beaver} and \ref{fig-beaver-bdoz} in this Section.
We have considered field sizes typically used to encode binary, 16
bit, 32 bit, and 64 bit integers. We ran all experiments on an Intel(R)
Core(TM) Ultra 7 255H CPU. Results are presented in Figure \ref{fig-preludetest}.

\preludetestfig

We found that verification of library components was efficient even up to large
field sizes. Since typing is compositional (see, e.g., Figure
\ref{fig-beaver-types} for an illustrative extension), verification of gate
properties relevant to \emph{security} should scale to larger circuits.

For an empirical comparison of how compositionality can reduce solve time in
GMW arithmetic circuits, we ran the experiments shown in Figure
\ref{fig-preludetest} with and without customized postconditions. Omitting
these postconditions reduces verification to the prior whole-program analysis
approach of Overture~\cite{skalka-near-ppdp24}, since the solver can no longer reuse component summaries
and must verify constraints for the fully expanded circuit directly.
This change enables a direct comparison against our prior approach.
The
scatter plot shown in Figure \ref{fig-solve-time-scatter} demonstrates that the
compositional use of customized postconditions improves verification time by
orders of magnitude, especially for 64-bit addition and subtraction, which have
the most gates of all the circuits we tested. The SMFF solver, however, fared
poorly in all these circuits. We believe the optimizations that make the
bit-vector solver efficient in these cases are unavailable to the SMFF solver,
even when it runs in $\mathbb{F}_2$. In all but the simplest cases, it did
not finish.

\solvetimescatterfig

Circuit \emph{correctness} also benefits from compositional
verification-- for example, the correctness of a GMW ``and'' gate (as
in Figure \ref{fig-gmwprepost}) only needs to be proven once, though
instantiated and integrated into the whole-program analysis needed to
verify a large circuit specification. We have verified 2, 8, and
64-bit vector zero-test GMW circuits containing 3, 15, and 127 gates
respectively, based on ``Bristol Fashion'' circuits \cite{bristol}. By
verified, we mean automated verification of the zero-test
specification expressed as a \ttt{postcondition}. We considered both
the cvc5 Bit-Vector and Satisfiability Modulo Finite Fields (SMFF)
theories as proof engines for these binary circuits.

Verification of the zero-test circuits was efficient in the Bit-Vector
theory, which verified the 64-bit zero test in GMW in less than 7
seconds. However, it took several minutes to verify even the small
2-bit zero-test circuit in SMFF, and larger circuits took too long to
consider for a practical evaluation. This could be attributed to the
relative maturity of the Bit-Vector theory as compared to SMFF.  In
any case, the SMT-based approach achieves categorically better
performance than previous brute-force whole program methods developed for binary
circuits \cite{skalka-near-ppdp24}.

\subsection{Arithmetic Circuits with Beaver Triples}
\label{section-beaver}

\begin{fpfig}[t]{2-party gate library with Beaver Triple multiplication, circuit definition, and compositional correctness conditions.}{fig-beaver}
\footnotesize
\begin{verbatimtab}
IDEALF() { (s["x"]@1 + s["y"]@2) + (s["i"]@1 * s["j"]@2) }
  
RECON(x) { (m[x]@1 + m[x]@2) }

BEAVERTRIP(x) {
  m[x++"c"]@3 == r[x++"a"]@3 * r[x++"b"]@3 AND
  m[x++"a"]@1 + m[x++"a"]@2 == r[x++"a"]@3 AND
  m[x++"b"]@1 + m[x++"b"]@2 == r[x++"b"]@3 AND
  m[x++"c"]@1 + m[x++"c"]@2 == m[x++"c"]@3 
}

genshares(x:string, i1:cid, i2:cid) { m[x]@i2 := (s[x] + -r[x])@i1; m[x]@i1 := r[x]@i1 }

_open(x:string, i1:cid, i2:cid, n:string){ p[x++"n"] := m[x]@i2; m[x++"p"]@i1 := p[x++"n"] + m[x]@i1 }

open(x:string) { _open(x,1,2,"1"); _open(x,2,1,"2") }

secopen(d:string;x:string,a:string) { sum(d, x, a); open(d) }

_sum(z:string, x:string, y:string, i:cid) { m[z]@i := (m[x] + m[y])@i }

sum(z:string, x:string, y:string){ _sum(z,x,y,1); _sum(z,x,y,2) }
postcondition: ( RECON(z) == RECON(x) + RECON(y) )

_mult(z : string, y : string, i : cid) {
    m[z]@i := ((m[z++"d"] * m[y]) + (-m[z++"e"] * m[z++"a"]) + m[z++"c"])@i
}

precondition: ( BEAVERTRIP(z) )
mult(z : string, x : string, y : string) {
  secopen(z++"d", x, z++"a"); secopen(z++"e", y, z++"b"); _mult(z,y,1); _mult(z,y,2)
}
postcondition: ( RECON(z) == RECON(x) * RECON(y) )

precondition: ( BEAVERTRIP("k") ) 
main() {
    genshares("x", 1, 2); genshares("y", 2, 1); genshares("i", 1, 2); genshares("j", 2, 1);
    sum("z","x","y"); mult("k","i","j"); sum("out","z","k");
    open("out");
    out@1 = m["out"]@1; out@2 =  m["out"]@2
    
}
postcondition: ( out@1 == IDEALF() AND out@2 == IDEALF() )
\end{verbatimtab}
\end{fpfig}

\begin{fpfig}[t]{2-party gate library with Beaver Triple multiplication extended with illustrative share type annotations.}{fig-beaver-types}
\footnotesize
\begin{verbatimtab}
genshares(x:string, i1:cid, i2:cid) {
   m[x++"s"]@i1 := (s[x] + -r[x])@i2; m[x++"s"]@i2 := r[x]@i2;
   [m[x++"s"]@i1,m[x++"s"]@i2] : shares
}

([m[x]@i1, m[x]@i2] : shares)
open(x:string) {
  _open(x,1,2,"1"); _open(x,2,1,"2");
  [p[x++"1"],p[x++"2"]] : shares;
  m[d++"p"]@2 as m[d++"p"]@1  
}

precondition: ([m[x]@i1, m[x]@i2] : shares,
	       [m[a]@i1, m[a]@i2] : shares)
secopen(d:string;x:string,a:string) {
    sum(d, x, a); open(d);
    m[d++"p"]@1 as RECON(x) + r[x++a]@3;
    <m[d++"p"]@1> : public;
    <m[d++"p"]@2> : public
}

precondition: ([m[x]@1,m[x]@2] : shares;
               [m[y]@1,m[y]@2] : shares)
sum(z:string, x:string, y:string) {
    _sum(z,x,y,1); _sum(z,x,y,2);
    [m[z]@1,m[z]@2] : shares
}
postcondition: ( RECON(z) == RECON(x) + RECON(y) )

precondition: ( [m[x]@1,m[x]@2] : shares;
                [m[y]@1,m[y]@2] : shares;
	        [m[x++"c"]@1,m[x++"c"]@2] : shares ),
	       ( BEAVERTRIP(z) )
mult(z : string, x : string, y : string) {
    secopen(z++"d", x, z++"a"); secopen(z++"e", y, z++"b"); _mult(z,y,1); _mult(z,y,2);
    [m[z]@i1,m[z]@i2] : shares
}
postcondition: ( RECON(z) == RECON(x) * RECON(y)) )
\end{verbatimtab}
\end{fpfig}

\begin{fpfig}[t]{2-party gate library with Beaver Triple multiplication extended with BDOZ MACs.}{fig-beaver-bdoz}
\footnotesize
\begin{verbatimtab}
MACOK(x, i1, i2) { m[x++"m"]@i2 == m[x++"k"]@i1 + (m["delta"]@i1 * m[x++"s"]@i2) }

MACed(x) { MACOK(x, 1, 2) AND MACOK(x, 2, 1) }

_open(x:string, i1:cid, i2:cid, n:string){
    p[x++"sn"] := m[x++"s"]@i2;
    p[x++"mn"] := m[x++"m"]@i2;
    assert(p[x++"mn"] = m[x++"k"] + (m["delta"] * p[x++"sn"]))@i1;
    m[x++"p"]@i1 := p[x++"sn"] + m[x]@i1
}

precondition: ( MACed(x) )
open(x:string) { _open(x,1,2,"1"); _open(x,2,1,"2") }

_sum(z:string, x:string, y:string, i:cid){
    m[z++"s"]@i := (m[x++"s"] + m[y++"s"])@i;
    m[z++"m"]@i := (m[x++"m"] + m[y++"m"])@i;
    m[z++"k"]@i := (m[x++"k"] + m[y++"k"])@i
}

precondition: ( MACed(x) AND MACed(y) )
sum(z:string, x:string, y:string){ _sum(z,x,y,1); _sum(z,x,y,2) }
postcondition: ( MACed(z) AND RECON(z) == RECON(x) + RECON(y) )
		 
_mult(z : string, y : string, i : cid) {
   m[z++"s"]@i := (m[z++"d"] * m[y++"s"] + -m[z++"e"] * m[z++"as"] + m[z++"cs"])@i;
   m[z++"m"]@i := (m[z++"d"] * m[y++"m"] + -m[z++"e"] * m[z++"am"] + m[z++"cm"])@i;
   m[z++"k"]@i := (m[z++"d"] * m[y++"k"] + -m[z++"e"] * m[z++"ak"] + m[z++"ck"])@i
}

precondition: ( MACed(x) AND MACed(y) AND
                MACed(z++"a") AND MACed(z++"b") AND MACed(z++"c") AND BEAVERTRIP(z) )
mult(z : string, x : string, y : string) {
    secopen(z++"d", x, z++"a"); secopen(z++"e", y, z++"b"); _mult(z,y,1); _mult(z,y,2);
}
postcondition: ( MACed(z) AND RECON(z) == RECON(x) * RECON(y) )
\end{verbatimtab}
\end{fpfig}

In Figure \ref{fig-beaver} we define a 2-party arithmetic circuit with
4 input gates, 3 internal gates, and 1 output gate. The ideal functionality
is defined as $\ttt{IDEALF}$. As in GMW or BDOZ, circuit wire values
are represented as additive secret shares between two clients we call
$\ttt{1}$ and $\ttt{2}$. As in the GMW example discussed in Section
\ref{section-metaty}, we use string identifiers $g$ to distinguish
gates, and gate output wire shares are defined as identifiers
$\mx{g}{1}$ and $\mx{g}{2}$. The function $\ttt{RECON}$
represents share reconstruction \emph{in constraints} (not the
protocols themselves). Input gates are
implemented using the $\ttt{genshares}$ function as described in
Section \ref{section-sty}. The $\ttt{open}$ function
implements a federation-wide reveal of a shared value through
share broadcast and reconstruction.

A \ttt{sum} gate can be defined non-interactively using 
homomorphic addition. The correctness \ttt{postcondition} is simple
and perhaps unnecessary to accelerate circuit verification, but is
appealing declaratively.  The multiplication gate \ttt{mult} is more
complex, where we use the common approach of \emph{Beaver Triples}
\cite{evans2018pragmatic}. Recall that Beaver triples are values
$a,b,c$ with $a$ and $b$ chosen randomly and $c = a * b$, unique per
multiplication gate, that are secret-shared with clients during a
pre-processing phase by an assumed oracular client $\ttt{3}$.
The uniqueness requirement is necessary because
$a$ and $b$ are random values that serve as OTP masks for secure
openings, implemented as \ttt{secopen}. In our encoding we follow the
convention that each gate output identifier distinguishes the Beaver
triple, so for example the share of the $a$ value for a gate
$\ttt{"g"}$ is identified by $\ttt{"ga"}$, etc. The correctness
property of the multiplication gate, expressed as a
\ttt{postcondition}, relies on the $\ttt{BEAVERTRIP}$
precondition. This compositional correctness property can then be used
in circuit construction, without the need for ``re-proof'' in our
Hoare logic inference.

The circuit itself is constructed in \ttt{main} using these
gates. This is a small example but arbitrarily large circuits can be
constructed with this library. The precondition of \ttt{main} requires
distinct Beaver triples for each multiplication gate in the circuit
(just one in this example). The postcondition allows verification of
circuit correctness-- that the protocol output is equal to the ideal
functionality $\ttt{IDEALF}$.

\subsection{Share Type Annotations (Illustrative).}

Figure \ref{fig-beaver-types} presents an illustrative extension of our
previous gate definitions with share type annotations to enforce
confidentiality. This annotated variant is not part of the current
implementation. The core idea is that wire values must be \ttt{share}s. The
\ttt{genshares} function includes an annotation specifying that all input gates
establish the $\ttt{shares}$ invariant for wire values. In the \ttt{sum}
function, the \ttt{share} type preconditions for input wires are specified. The
output wire shares can be declared as \ttt{share}s without any coercion needed,
since given preconditions the typing judgment ($\Vdash$) can check the use of
homomorphic addition.

We add coercions and typing assertions to the $\ttt{secopen}$ function
to specify properties of secure openings. After noninteractively
summing the secret-shared wire value $\ttt{RECON(x)}$ and Beaver triple value
$\ttt{r[x++a]@3}$, the federation can publicly open the result since it is
masked by the random value. This is expressed using the coercion
$\ttt{as RECON(x) +  r[x++a]@3}$. The coercion then allows the
securely opened value to be declared $\ttt{public}$.

In the \ttt{mult} function, each client $\ttt{i}$ symmetrically
computes:
$$
\ttt{((m[z++"d"] * m[y]) + (-m[z++"e"] * m[z++"a"]) + m[z++"c"])@i}
$$
Since $\ttt{m[z++"d"]@i\,:\,public}\ $ and $\ \ttt{m[z++"e"]@i\,:\,public}$
are both established by $\ttt{secopen}$, and
$\ttt{m[y]@i}$, $\ttt{m[z++"a"]@i}$, and $\ttt{m[z++"c"]@i}$ are
all declared to be shares in preconditions, we have that each
of the shared computations $\ttt{(m[z++"d"] * m[y])@i}$ and $\ttt{(-m[z++"e"] * m[z++"a"])@i}$
are shares, so the gate outputs, which are then sums of shares, are shares, and this will
be successfully checked by syntax-directed $\Vdash$ type checking that
results from the declaration $\ttt{[m[z]@i1,m[z]@i2]\,:\,shares}$. 

\subsubsection{Malicious-Secure Arithmetic Circuits}

Our Definition \ref{def-integrity} establishes a general semantics of
integrity, while the $\minicat$ integrity type system presented in
Section \ref{section-ipj} focuses on the use of BDOZ/SPDZ-style
MACs. A main concern here is the enforcement of MAC checking which is
an algebraic property. Here we show how pre and postconditions can be
used to enforce these properties in our arithmetic circuit library, in
Figure \ref{fig-beaver-bdoz}. Note that this code is an extension of the
library presented in Figure \ref{fig-beaver} and does not incorporate
the type annotations in Figure \ref{fig-beaver-types}, but this is
just for clarity of the presentation and these features could be combined. 

In this code we use a convention where shared values $\ttt{x}$ are
represented as a 3-tuple of values, MAC, and keys, where the value and
MAC of client \ttt{1} are $\ttt{m[x++"s"]@1}$ and $\ttt{m[x++"m"]@1}$
respectively, and the corresponding key $\ttt{m[x++"k"]@2}$ is held by
client \ttt{2}. The correctness condition of this 3-tuple is
$\ttt{MACOK}(x,2,1)$. The same is true symmetrically, so the
$\ttt{MACed}$ predicate establishes both sides of the correctness
condition.  In general, this predicate is used in gate preconditions
to assume that the MAC conditions hold for input shares, and in
postconditions to ensure that output shares are $\ttt{MACed}$. The
homomorphic properties that preserve $\ttt{MACed}$ness, especially in
\ttt{mult}, are perhaps even more complex than the algebraic
correctness of gates, so it is beneficial to verify them compositional
especially for large circuits.

Our code is written so that share MACs are checked for when they are
\ttt{open}ed by both clients, specified by the appropriate
$\ttt{assert}$ included in the protocol. This ensures malicious
security of secure openings and outputs. As is standard in BDOZ/SPDZ,
we assume that input values, MACs, and keys are distributed to clients
by a pre-processing protocol with relevant postconditions
declared in the preconditions of $\ttt{main}$ (as in Figure
\ref{fig-beaver} with respect to Beaver triples).

\section{Conclusion}
\label{section-conclusion}

In this paper we have developed the $\minicat$ language for defining
low-level MPC protocols, and the $\metaprot$ metaprogramming language
for specifying $\minicat$ protocols compositionally and at a higher
level of abstraction. We have developed a Hoare logic based on
SMT-style verification for $\minicat$, as well as confidentiality and
integrity type systems that enforce passive and malicious security in
the real/ideal model. These establish a foundation for an automated
Hoare logic and type system in $\metaprot$ that is compositional,
scalable, and generalizes to arbitrary prime fields and secret sharing
schemes.

Our automated $\metaprot$ Hoare logic and type theory leverages
Satisfiability Modulo Finite Fields for efficiency, scalability, and
generalizability to arbitrary prime fields. We developed examples of
real binary and arithmetic circuit protocols, illustrated the
applicability of our analyses to enforce confidentiality and integrity
properties, and evaluated performance of our implementation on these
examples.

\subsection{Future Work}

There are several directions for future work on the topics explored in
this paper through the $\metaprot/\minicat$ framework. One interesting
theoretical question is whether there are conditions under which
verification of a protocol in $\fieldp{2}$ generalizes to arbitrary
prime fields. It is often the case that MPC protocols are presented in
a way that generalizes over the field size, but it is not clear how to
articulate necessary conditions for generalizing from verification in
$\fieldp{2}$. This would have an obvious advantage since verification in
$\fieldp{2}$ is more efficient than verification in larger fields.

With regard to efficiency of implementation, integration of our system
with the Swanky framework \cite{swanky} would be an interesting
direction to explore. This framework could present opportunities for
accelerating SMT verification due to the optimizations especially
in their SAW backend.

Another major direction to explore would be an extension of the
$\metaprot/\minicat$ framework to non-information-theoretic, or
\emph{computational} security. This is security gained through
computaitonal hardness of, e.g., public or symmetric key cryptography.
Several MPC protocols rely on this sort of security, such as versions
of OT, committment, garbled circuits, etc. Making this extension would
require theoretical and practical extensions of $\metaprot/\minicat$
and make it applicable to a broader class of MPC protocols.

\bibliographystyle{ACM-Reference-Format}
\bibliography{logic-bibliography,secure-computation-bibliography}

@inproceedings{beaver1991efficient,
  title={Efficient multiparty protocols using circuit randomization},
  author={Beaver, Donald},
  booktitle={Annual international cryptology conference},
  pages={420--432},
  year={1991},
  organization={Springer}
}

@inproceedings{goldreich1987play,
  title={How to play ANY mental game},
  author={Goldreich, O and Micali, S and Wigderson, A},
  booktitle={Proceedings of the nineteenth annual ACM symposium on Theory of computing},
  pages={218--229},
  year={1987}
}

@article{li2023lilac,
  title={Lilac: a Modal Separation Logic for Conditional Probability},
  author={Li, John M and Ahmed, Amal and Holtzen, Steven},
  journal={Proceedings of the ACM on Programming Languages},
  volume={7},
  number={PLDI},
  pages={148--171},
  year={2023},
  publisher={ACM New York, NY, USA}
}

@article{evans2018pragmatic,
  title={A pragmatic introduction to secure multi-party computation},
  author={Evans, David and Kolesnikov, Vladimir and Rosulek, Mike and others},
  journal={Foundations and Trends{\textregistered} in Privacy and Security},
  volume={2},
  number={2-3},
  pages={70--246},
  year={2018},
  publisher={Now Publishers, Inc.}
}

@inproceedings{almeida2018enforcing,
  title={Enforcing ideal-world leakage bounds in real-world secret sharing MPC frameworks},
  author={Almeida, Jos{\'e} Bacelar and Barbosa, Manuel and Barthe, Gilles and Pacheco, Hugo and Pereira, Vitor and Portela, Bernardo},
  booktitle={2018 IEEE 31st Computer Security Foundations Symposium (CSF)},
  pages={132--146},
  year={2018},
  organization={IEEE}
}

@INPROCEEDINGS{6266151,
  author={Mitchell, John C. and Sharma, Rahul and Stefan, Deian and Zimmerman, Joe},
  booktitle={2012 IEEE 25th Computer Security Foundations Symposium}, 
  title={Information-Flow Control for Programming on Encrypted Data}, 
  year={2012},
  volume={},
  number={},
  pages={45-60},
  doi={10.1109/CSF.2012.30}}

@inproceedings{rastogi2014wysteria,
  title={Wysteria: A programming language for generic, mixed-mode multiparty computations},
  author={Rastogi, Aseem and Hammer, Matthew A and Hicks, Michael},
  booktitle={2014 IEEE Symposium on Security and Privacy},
  pages={655--670},
  year={2014},
  organization={IEEE}
}

@article{barthe2019probabilistic,
author = {Barthe, Gilles and Hsu, Justin and Liao, Kevin},
title = {A probabilistic separation logic},
year = {2019},
issue_date = {January 2020},
publisher = {Association for Computing Machinery},
address = {New York, NY, USA},
volume = {4},
number = {POPL},
url = {https://doi.org/10.1145/3371123},
doi = {10.1145/3371123},
journal = {Proc. ACM Program. Lang.},
month = {dec},
articleno = {55},
numpages = {30}
}

@article{darais2019language,
  title={A language for probabilistically oblivious computation},
  author={Darais, David and Sweet, Ian and Liu, Chang and Hicks, Michael},
  journal={Proceedings of the ACM on Programming Languages},
  volume={4},
  number={POPL},
  pages={1--31},
  year={2019},
  publisher={ACM New York, NY, USA}
}

@inproceedings {269581,
author = {Dahlia Malkhi and Noam Nisan and Benny Pinkas and Yaron Sella},
title = {{Fairplay{\textemdash}A} Secure {Two-Party} Computation System},
booktitle = {13th USENIX Security Symposium (USENIX Security 04)},
year = {2004},
address = {San Diego, CA},
url = {https://www.usenix.org/conference/13th-usenix-security-symposium/fairplay{\textemdash}-secure-two-party-computation-system},
publisher = {USENIX Association},
month = aug,
}

@Inbook{Lindell2017,
author="Lindell, Yehuda",
editor="Lindell, Yehuda",
title="How to Simulate It -- A Tutorial on the Simulation Proof Technique",
bookTitle="Tutorials on the Foundations of Cryptography: Dedicated to Oded Goldreich",
year="2017",
publisher="Springer International Publishing",
address="Cham",
pages="277--346",
isbn={978-3-319-57048-8},
doi={10.1007/978-3-319-57048-8\_6},
url={https://doi.org/10.1007/978-3-319-57048-8\_6}
}

@inproceedings{10.1145/2637113.2637119,
author = {Bogdanov, Dan and Laud, Peeter and Randmets, Jaak},
title = {Domain-Polymorphic Programming of Privacy-Preserving Applications},
year = {2014},
isbn = {9781450328623},
publisher = {Association for Computing Machinery},
address = {New York, NY, USA},
url = {https://doi.org/10.1145/2637113.2637119},
doi = {10.1145/2637113.2637119},
booktitle = {Proceedings of the Ninth Workshop on Programming Languages and Analysis for Security},
pages = {53-65},
numpages = {13},
location = {Uppsala, Sweden},
series = {PLAS'14}
}

@inproceedings{wysstar,
  author    = {Aseem Rastogi and
               Nikhil Swamy and
               Michael Hicks},
  editor    = {Flemming Nielson and
               David Sands},
  title     = {Wys*: {A} {DSL} for Verified Secure Multi-party Computations},
  booktitle = {8th International Conference on Principles of Security and Trust (POST)},
  shortbooktitle = {POST},
  series    = {Lecture Notes in Computer Science},
  volume    = {11426},
  pages     = {99--122},
  publisher = {Springer},
  year      = {2019},
  url       = {https://doi.org/10.1007/978-3-030-17138-4\_5},
  doi       = {10.1007/978-3-030-17138-4\_5},
}

@article{10.1145/3498713,
author = {Ye, Qianchuan and Delaware, Benjamin},
title = {Oblivious Algebraic Data Types},
year = {2022},
issue_date = {January 2022},
publisher = {Association for Computing Machinery},
address = {New York, NY, USA},
volume = {6},
number = {POPL},
url = {https://doi.org/10.1145/3498713},
doi = {10.1145/3498713},
journal = {Proc. ACM Program. Lang.},
month = {jan},
articleno = {51},
numpages = {29}
}

@INPROCEEDINGS {8429300,
author = {H. Haagh and A. Karbyshev and S. Oechsner and B. Spitters and P. Strub},
booktitle = {2018 IEEE 31st Computer Security Foundations Symposium (CSF)},
title = {Computer-Aided Proofs for Multiparty Computation with Active Security},
year = {2018},
volume = {},
issn = {2374-8303},
pages = {119-131},
doi = {10.1109/CSF.2018.00016},
url = {https://doi.ieeecomputersociety.org/10.1109/CSF.2018.00016},
publisher = {IEEE Computer Society},
address = {Los Alamitos, CA, USA},
month = {jul}
}

@inproceedings{li2021privacy,
  title={Privacy-preserving feature selection with secure multiparty computation},
  author={Li, Xiling and Dowsley, Rafael and De Cock, Martine},
  booktitle={International Conference on Machine Learning},
  pages={6326--6336},
  year={2021},
  organization={PMLR}
}

@article{knott2021crypten,
  title={Crypten: Secure multi-party computation meets machine learning},
  author={Knott, Brian and Venkataraman, Shobha and Hannun, Awni and Sengupta, Shubho and Ibrahim, Mark and van der Maaten, Laurens},
  journal={Advances in Neural Information Processing Systems},
  volume={34},
  pages={4961--4973},
  year={2021}
}

@incollection{koch2020privacy,
  title={Privacy-preserving analytics for data markets using MPC},
  author={Koch, Karl and Krenn, Stephan and Pellegrino, Donato and Ramacher, Sebastian},
  booktitle={IFIP International Summer School on Privacy and Identity Management},
  pages={226--246},
  year={2020},
  publisher={Springer}
}

@article{liu2020privacy,
  title={Privacy preserving distributed data mining based on secure multi-party computation},
  author={Liu, Jun and Tian, Yuan and Zhou, Yu and Xiao, Yang and Ansari, Nirwan},
  journal={Computer Communications},
  volume={153},
  pages={208--216},
  year={2020},
  publisher={Elsevier}
}

@inproceedings{lu2019honeybadgermpc,
  title={Honeybadgermpc and asynchromix: Practical asynchronous mpc and its application to anonymous communication},
  author={Lu, Donghang and Yurek, Thomas and Kulshreshtha, Samarth and Govind, Rahul and Kate, Aniket and Miller, Andrew},
  booktitle={Proceedings of the 2019 ACM SIGSAC Conference on Computer and Communications Security},
  pages={887--903},
  year={2019}
}

@article{ishai2009zero,
  title={Zero-knowledge proofs from secure multiparty computation},
  author={Ishai, Yuval and Kushilevitz, Eyal and Ostrovsky, Rafail and Sahai, Amit},
  journal={SIAM Journal on Computing},
  volume={39},
  number={3},
  pages={1121--1152},
  year={2009},
  publisher={SIAM}
}

@article{gao2022symmeproof,
  title={SymmeProof: Compact zero-knowledge argument for blockchain confidential transactions},
  author={Gao, Shang and Peng, Zhe and Tan, Feng and Zheng, Yuanqing and Xiao, Bin},
  journal={IEEE Transactions on Dependable and Secure Computing},
  year={2022},
  publisher={IEEE}
}

@article{tomaz2020preserving,
  title={Preserving privacy in mobile health systems using non-interactive zero-knowledge proof and blockchain},
  author={Tomaz, Antonio Emerson Barros and Do Nascimento, Jose Claudio and Hafid, Abdelhakim Senhaji and De Souza, Jose Neuman},
  journal={IEEE access},
  volume={8},
  pages={204441--204458},
  year={2020},
  publisher={IEEE}
}

@article{sabelfeld2009declassification,
author = {Sabelfeld, Andrei and Sands, David},
title = {Declassification: Dimensions and Principles},
year = {2009},
issue_date = {October 2009},
publisher = {IOS Press},
address = {NLD},
volume = {17},
number = {5},
issn = {0926-227X},
journal = {J. Comput. Secur.},
month = {oct},
pages = {517-548},
numpages = {32}
}

@inproceedings{10.1145/3453483.3454074,
author = {Acay, Co\c{s}ku and Recto, Rolph and Gancher, Joshua and Myers, Andrew C. and Shi, Elaine},
title = {Viaduct: An Extensible, Optimizing Compiler for Secure Distributed Programs},
year = {2021},
isbn = {9781450383912},
publisher = {Association for Computing Machinery},
address = {New York, NY, USA},
url = {https://doi.org/10.1145/3453483.3454074},
doi = {10.1145/3453483.3454074},
booktitle = {Proceedings of the 42nd ACM SIGPLAN International Conference on Programming Language Design and Implementation},
pages = {740-755},
numpages = {16},
location = {Virtual, Canada},
series = {PLDI 2021}
}

@INPROCEEDINGS {viaduct-UC,
author = {C. Acay and J. Gancher and R. Recto and A. Myers},
booktitle = {2024 IEEE 37th Computer Security Foundations Symposium (CSF)},
title = {Secure Synthesis of Distributed Cryptographic Applications},
year = {2024},
volume = {},
issn = {2374-8303},
pages = {315-330},
doi = {10.1109/CSF61375.2024.00021},
url = {https://doi.ieeecomputersociety.org/10.1109/CSF61375.2024.00021},
publisher = {IEEE Computer Society},
address = {Los Alamitos, CA, USA},
month = {jul}
}

@InProceedings{10.1007/978-3-030-68869-1_3,
author="Orsini, Emmanuela",
editor="Bajard, Jean Claude
and Topuzo{\u{g}}lu, Alev",
title="Efficient, Actively Secure {MPC} with a Dishonest Majority: A Survey",
booktitle="Arithmetic of Finite Fields",
year="2021",
publisher="Springer International Publishing",
address="Cham",
pages="42--71",
isbn="978-3-030-68869-1"
}

@InProceedings{SPDZ2,
author="Damg{\aa}rd, Ivan
and Pastro, Valerio
and Smart, Nigel
and Zakarias, Sarah",
editor="Safavi-Naini, Reihaneh
and Canetti, Ran",
title="Multiparty Computation from Somewhat Homomorphic Encryption",
booktitle="Advances in Cryptology -- CRYPTO 2012",
year="2012",
publisher="Springer Berlin Heidelberg",
address="Berlin, Heidelberg",
pages="643--662",
isbn="978-3-642-32009-5"
}

@InProceedings{SPDZ1,
author="Damg{\aa}rd, Ivan
and Orlandi, Claudio",
editor="Rabin, Tal",
title="Multiparty Computation for Dishonest Majority: From Passive to Active Security at Low Cost",
booktitle="Advances in Cryptology -- CRYPTO 2010",
year="2010",
publisher="Springer Berlin Heidelberg",
address="Berlin, Heidelberg",
pages="558--576",
isbn="978-3-642-14623-7"
}

@InProceedings{BDOZ,
author="Bendlin, Rikke
and Damg{\aa}rd, Ivan
and Orlandi, Claudio
and Zakarias, Sarah",
editor="Paterson, Kenneth G.",
title="Semi-homomorphic Encryption and Multiparty Computation",
booktitle="Advances in Cryptology -- EUROCRYPT 2011",
year="2011",
publisher="Springer Berlin Heidelberg",
address="Berlin, Heidelberg",
pages="169--188",
isbn="978-3-642-20465-4"
}

@InProceedings{SMFF,
author="Ozdemir, Alex
and Kremer, Gereon
and Tinelli, Cesare
and Barrett, Clark",
editor="Enea, Constantin
and Lal, Akash",
title="Satisfiability Modulo Finite Fields",
booktitle="Computer Aided Verification",
year="2023",
publisher="Springer Nature Switzerland",
address="Cham",
pages="163--186",
isbn="978-3-031-37703-7"
}

@inproceedings{skalka-near-ppdp24,
  title = {Language-Based Security for Low-Level {MPC}},
  author = {Christian Skalka and Joseph P. Near},
  year = {2024},
  booktitle = {Proceedings of the ACM Conference on Principles and Practice of Declarative Programming}
}

@InProceedings{10.1007/978-3-319-48869-1_5,
author="Dockins, Robert
and Foltzer, Adam
and Hendrix, Joe
and Huffman, Brian
and McNamee, Dylan
and Tomb, Aaron",
editor="Blazy, Sandrine
and Chechik, Marsha",
title="Constructing Semantic Models of Programs with the Software Analysis Workbench",
booktitle="Verified Software. Theories, Tools, and Experiments",
year="2016",
publisher="Springer International Publishing",
address="Cham",
pages="56--72",
isbn="978-3-319-48869-1"
}

@inproceedings{skalka-near-esop25,
  author = {Christian Skalka and Joseph P. Near},
  title = {{SMT}-Boosted Security Types for Low-Level {MPC}},
  booktitle = {Proceedings of the European Symposium on Programming},
  year = 2025
}

@misc{skalka-near-esop25-arxiv,
      title={SMT-Boosted Security Types for Low-Level MPC}, 
      author={Christian Skalka and Joseph P. Near},
      year={2025},
      eprint={2501.17824},
      archivePrefix={arXiv},
      primaryClass={cs.CR},
      url={https://arxiv.org/abs/2501.17824}, 
}

@Misc{bristol,
  author =	 {David Archer and Victor Arribas Abril and Steve Lu and Pieter Maene and Nele Mertens and Danilo Sijacic and Nigel Smart},
  title =	 {Bristol Fashion MPC Circuits.},
  howpublished = {\url{https://nigelsmart.github.io/MPC-Circuits/}},
  year =	 {2019}
}

@misc{swanky,
    author = {{Galois, Inc.}},
    title = {{swanky}: A suite of rust libraries for secure computation},
    howpublished = {\url{https://github.com/GaloisInc/swanky}},
    year = 2019,
}

@misc{cryptoeprint:2020/1410,
      author = {Carsten Baum and Alex J.  Malozemoff and Marc B.  Rosen and Peter Scholl},
      title = {Mac'n'Cheese: Zero-Knowledge Proofs for Boolean and Arithmetic Circuits with Nested Disjunctions},
      howpublished = {Cryptology {ePrint} Archive, Paper 2020/1410},
      year = {2020},
      url = {https://eprint.iacr.org/2020/1410}
}

\end{document}